\newtheorem{thm}{Theorem}
\newtheorem{lem}[thm]{Lemma}
\newtheorem{definition}{Definition}
\newcommand{\hamd}[1]{{H}(#1)}
\newcommand{\wordsetfont}[1]{\mathcal{#1}}
\newcommand{\wordset}[1]{\wordsetfont{W}_{#1}}
\newcommand{\DetWords}[0]{{\rm DetWords}}
\newcommand{\BreakRuns}[0]{{\rm BreakRuns}}
\newcommand{\approxE}[0]{{\rm ExpCount}}
\newcommand{\myE}[0]{{\rm E}}
\newcommand{\expectE}[0]{{\rm ExpE}}
\newcommand{\myexpectation}[1]{{\rm Exp}\left(#1\right)}
\newcommand{\mycomplement}[1]{\overline{#1}}
\newcommand{\myprob}[1]{{\rm Pr}\left(#1\right)}
\newcommand{\formulaforctwo}[0]{\frac{c_1}{2}\left\{ \log \left( \frac{c_1}{(c_1 - 2)\ln{2}}\right) +2.5 - \frac{1}{\ln 2} \right\}}
\newcommand{\formulaforellstartwo}[2]{\lceil{c_1{\cdot}\log n + c_2{\cdot}\max\{#1,#2\}\rceil}}
\newcommand{\formulaforellstar}[0]{\lceil{c_1\log n + c_2 k}\rceil}
\newcommand{\algonetime}[0]{O(n^2(\ell^{\star})^3)}
\begin{document}


\title{Deterministic Polynomial-Time Algorithms for
Designing Short DNA Words}

\author{Ming-Yang Kao\thanks{Department of Electrical
Engineering and Computer Science, Northwestern
University, USA. Email: kao@northwestern.edu.
Supported in part by NSF Grant CCF-1049899.}
\and
Henry C. M. Leung\thanks{Department of Computer Science,
The University of Hong Kong, Hong Kong. Email: cmleung2@cs.hku.hk.}
\and
He Sun\thanks{
Max Planck Institute for Informatics, Germany.
Institute of Modern Mathematics and Physics, Fudan University, China.
Email: hsun@mpi-inf.mpg.de.}
\and
Yong Zhang\thanks{
Shenzhen Institutes of Advanced Technology, Chinese Academy of Sciences, China.
Department of Computer Science,
The University of Hong Kong, Hong Kong. Email: yzhang@cs.hku.hk. Supported in part by NSFC Grant 11171086.}}

\date{({\sc first draft July 29, 2010; last revised January 30, 2012})}
\maketitle

\begin{abstract}
Designing short DNA words is a problem of constructing a set (i.e., code) of $n$
DNA strings (i.e., words) with the minimum length such that the
Hamming distance between each pair of words is at least $k$ and the
$n$ words satisfy a set of additional constraints. This problem
has applications in, e.g., DNA self-assembly and DNA arrays. Previous works include those that extended
results from coding theory to obtain bounds on code and word sizes for
biologically motivated constraints and those that applied
heuristic local searches, genetic algorithms, and randomized
algorithms. In particular, Kao, Sanghi, and Schweller~\cite{Kao:2009:RFD}
developed polynomial-time randomized algorithms to construct $n$ DNA
words of length within a multiplicative constant of the smallest possible word length (e.g., $9{\cdot}\max \{ \log n, k\}$) that satisfy various sets of
constraints with high probability. In this paper, we give
deterministic polynomial-time algorithms to construct DNA words
based on derandomization techniques.
Our algorithms can construct $n$ DNA words of shorter length (e.g., $2.1 \log n + 6.28 k$)  and satisfy the same sets of constraints
as the words constructed by the algorithms of Kao et~al.
Furthermore, we extend these new algorithms to construct words
that satisfy a larger set of constraints for which the
algorithms of Kao et~al.~do not work.
\end{abstract}

\textbf{Keywords:} DNA word design, deterministic algorithms, derandomization.

\section{Introduction}

Building on the work of Kao, Sanghi, and Schweller~\cite{Kao:2009:RFD},
this paper considers the problem of designing sets
(codes) of DNA strings (words) satisfying certain
combinatorial constraints with the length as short as possible.
Many applications depend on the scalable design of such words.
For instance, DNA words can be used to store information at
the molecular level~\cite{Brenner:1997:MSP}, to act as molecular
bar codes for identifying molecules in complex libraries~\cite{Brenner:1992:ECC,Brenner:1997:MSP,Shoemaker:1996:QPA},
or to implement DNA arrays~\cite{BenDor:2000:UDT}. For DNA computing,
inputs to computational problems are encoded into DNA strands to
perform computation via complementary
binding~\cite{Adleman:1994:MCS,Winfree:1998:ASA}. For DNA self-assembly,
Wang tile self-assembly systems are implemented by
encoding glues of Wang tiles into DNA strands~\cite{Winfree:1998:DSA,Wang:1961:PTP,Winfree:1998:ASA,Aggarwal:2005:CGM}.

A set of DNA words chosen for such applications typically need to meet certain
combinatorial constraints. For instance, hybridization should not occur
between distinct words in the set, or even between a word
and the reverse of another word in the set.
For such requirements, Marathe
et~al.~\cite{Marathe:2001:CDW}  proposed the \emph{basic Hamming
constraint} ($C_1$), the \emph{reverse complement Hamming constraint}
($C_2$), and the \emph{self-complementary constraint} ($C_3$). In addition to $C_1, C_2$, and $C_3$, Kao et~al.~\cite{Kao:2009:RFD}
further considered certain more restricting \emph{shifting} versions ($C_4, C_5, C_6$) of
these constraints  which require $C_1,C_2$, and $C_3$ to hold
between alignments of pairs of words~\cite{Brenneman:2001:SDB}.

Kao et~al.~\cite{Kao:2009:RFD} also considered three constraints
unrelated to Hamming distance.
The \emph{GC content constraint} ($C_7$) requires that a specified
fraction of the bases in a word are G or C.  This
constraint gives the words similar thermodynamic
properties~\cite{Tsaftaris:2004:DCS,Tulpan:2002:SLS,Tulpan:2003:HRN}.
The \emph{consecutive base constraint}
($C_8$) limits the length of any run of identical bases in a word.
Long runs of identical bases can cause hybridization errors~\cite{Tsaftaris:2004:DCS,Brenneman:2001:SDB,Braich:2001:SSP}.
The \emph{free energy constraint} ($C_9$) requires that the
difference in the free energies of two words is bounded by a small
constant. This constraint helps ensure that the words in the set have
similar melting
temperatures~\cite{Brenneman:2001:SDB,Marathe:2001:CDW}.

Furthermore, it is desirable for the length
$\ell$ of the words to be as small as possible.  The motivation for
minimizing $\ell$ is in part because it is more difficult
to synthesize longer DNA strands. Also, longer DNA strands require
more DNAs to be used for the respective application.

There have been a considerable number of previous works in the design of DNA
words~\cite{Brenneman:2001:SDB,Marathe:2001:CDW,Brenner:1997:MSP,Deaton:1996:GSR,%
Kao:2006:FWD,Frutos:1997:DWD,Garzon:1997:NMD,Shoemaker:1996:QPA,Tulpan:2002:SLS,%
Tulpan:2003:HRN,King:2003:BDC,Gaborit:2005:LCD,Garzon:2009:ODC,Phan:2008:CDM}.
Most of the existing works
are based on heuristics, genetic algorithms, or stochastic
local searches and do not provide analytical performance guarantees.
Notable exceptions include the work of Marathe et~al.~\cite{Marathe:2001:CDW}
that extends results from coding theory to obtain bounds on code size for
biologically motivated constraints.
Also, Kao et~al.~\cite{Kao:2009:RFD} formulated an optimization
problem that takes
as input a desired cardinality $n$ and produces $n$ words of
length $\ell$ that satisfy a specified set of constraints, while
minimizing the length $\ell$. Kao et~al.~introduced randomized algorithms
that run in
polynomial time to construct words whose length $\ell$ is within a constant
multiplicative
factor of the optimal word length. However, with a non-negligible
probability, the
constructed words do not satisfy the given constraints. The results
of Kao et~al.~are summarized in
Table \ref{comparison table} for comparison with ours.

This paper presents deterministic polynomial-time algorithms for constructing
$n$ desired words of length within a constant multiplicative factor of the optimal
word length.
As shown in Table \ref{comparison table}, our algorithms
can construct words shorter than those constructed by the
randomized algorithms of Kao et~al.~\cite{Kao:2009:RFD}.
Also, our algorithms can
construct desired words that satisfy more constraints than the work of Kao et~al.~has done.
Our algorithms derandomize a
randomized algorithm of Kao et~al. Depending on the values of $k$ and $n$, different parameters of
derandomization can be applied to minimize the length $\ell$ of words.
Our results are summarized in Table \ref{comparison table}.

\paragraph{An Erratum}

The conference version of this work \cite{Kao:2010:DPT} has claimed a set of results based on expander codes. As we announced at our conference presentation of this work, those results are false. Those results have been removed from this full version.

\paragraph{Organization of the Remainder of This Paper}
Section \ref{preliminaries} gives some basic notations
and the nine constraints $C_1$ through $C_9$ for DNA words.
Section \ref{derandomization} discusses how to design a set
of short DNA words satisfying the
constraints $C_1$ and $C_4$.
Section \ref{sec:generalizations} discusses how to
construct short DNA words under additional sets of constraints.
Section \ref{sec:FurtherResearch} concludes the paper with some
directions for further research.

\paragraph{Technical Remarks} Throughout this paper, all logarithms $\log$ have base 2 unless explicitly specified otherwise.

\begin{table}\label{comparison table}
\begin{center}
\begin{tabular}{|l|l|l|}
\hline
   \multicolumn{1}{|c|}{Codes} & \multicolumn{1}{|c|}{Randomized Algorithms}               & \multicolumn{1}{|c|}{Deterministic Algorithms}
\\
                                                & \multicolumn{1}{|c|}{(Kao et al.~\cite{Kao:2009:RFD})} & \multicolumn{1}{|c|}{(this paper)}
\\\hline
   $\wordset{1,4}$ & see $\wordset{1\sim6}$  &   $\ell^{\star}=\formulaforellstar$
\\\hline
   $\wordset{1\sim6}$& $\ell=9\max\{\log n, k\}$ & $\ell=\ell^{\star} + k$
\\\hline
  $\wordset{1\sim7}$& $\ell=10\max\{\log n, k\}$ & $\ell=\ell^{\star}+ 2k$
\\\hline
  $\wordset{1\sim 3,7,8}$& $\ell=\frac{d}{d-1}10\max\{\log n, k\}$ & $\ell=\frac{d}{d-1}(\ell^{\star}+2k) + O(1)$
\\\hline
   $\wordset{1\sim 8}$       &   no result          & $\ell=\ell^{\star} + 2k$ when $\frac{1}{d+1} \leq \gamma \leq \frac{d}{d+1}$
\\
                                          &                            & $\ell= \frac{d}{d-1} \ell^{\star} + \frac{d}{d-2}2k + O(d)$  when  $d \geq 3$
\\\hline
 $\wordset{1\sim 6,9}$  & $\ell=27\max\{\log n, k\}$  & $\ell=3\ell^{\star}+2k$ when $\sigma \geq 4D+\Gamma_{\max}$
 \\
 &when $\sigma \geq 4D+\Gamma_{\max}$ & \\
\hline
\end{tabular}
\end{center}
\caption{Comparison of word lengths.
The constraints $C_1$ through $C_9$ are defined in Section \ref{preliminaries}.
$\wordset{1,4}$ is a code of $n$ words that satisfies $C_1$ and $C_4$.
Code $\wordset{1\sim6}$ satisfies  $C_1$ through $C_6$.
Codes $\wordset{1\sim7}$, $\wordset{1\sim 3,7,8}$,
$\wordset{1\sim 8}$,  and  $\wordset{1\sim 6,9}$ are similarly defined.
The output parameters $\ell$ and $\ell^{\star}$ are the lengths of the constructed words.
The constraint parameter $k$ is the maximum of the dissimilarity parameters for the associated
subset of $C_1$ through $C_6$;
the constraint parameter $d$ is the run-length parameter for $C_8$; the constraint parameter $\sigma$, $D$ and $\Gamma_{\max}$
are  free-energy parameters for $C_9$, where $D$ and $\Gamma_{\max}$ are defined in Section \ref{subsec:1-69}.
The design parameters $c_1$ and $c_2$ can be used to control the lengths of the constructed words, where
$c_1$ is any real number greater than 2, and $c_2=\formulaforctwo$.
As examples, for $c_1 = 2.1$, $\ell^{\star}=\lceil{2.1\log n + 6.28k}\rceil$,
and for $c_1 = 3$, $\ell^{\star}=\lceil{3\log n + 4.76k}\rceil$.
For simplicity, we omit the ceiling notation from the right-hand sides of expressions for $\ell$ in the table.
The results of this work summarized in this table are corollaries of
Theorems \ref{thm:DeRanCon}, \ref{thm:1-6}, \ref{thm:1-7}, \ref{thm:12378}, \ref{thm:1-8-A}, \ref{thm:1-8-B},  and \ref{thm:1-69}.
The lengths $\ell^{\star}$ and $k$ used in these theorems are typically slightly smaller than those used in this table.}
\end{table}

\section{Preliminaries} \label{preliminaries}
This paper considers words on two alphabets, namely, the  binary alphabet $\Pi_B=\{0,1\}$ and  the DNA alphabet $\Pi_D=\{\mathrm{A,C,G,T}\}$.

Let $X=x_1\cdots x_{\ell}$ be a word where $x_i$ belongs
to an alphabet $\Pi$.
The {\em reverse} of $X$, denoted by $X^R$, is the word
$x_{\ell}x_{\ell-1}\cdots x_1$. The {\em complement} of
$X$, denoted by $X^c$, is $x_1^c\cdots x_{\ell}^c$, where if $\Pi$ is
 the binary alphabet $\Pi_B=\{0,1\}$, then $0^c=1$ and $1^c=0$,
and if $\Pi$ is the DNA alphabet $\Pi_D=\{\mathrm{A,C,G,T}\}$, then
$\mathrm{A}^c=\mathrm{T},\mathrm{C}^c=\mathrm{G},\mathrm{G}^c=\mathrm{C}$,
and $\mathrm{T}^c=\mathrm{A}$.
For  integer $i$ and $j$ with $1 \leq i \leq j \leq \ell$, $X[i \cdots j]$ denotes the substring $x_i \cdots x_j$ of $X$.
The {\it Hamming distance }
between two words $X$ and $Y$ of equal length, denoted by $H(X,Y)$, is the
number of positions where $X$ and $Y$ differ.

Next we review the nine constraints $C_1$ through $C_9$ as
defined in~\cite{Kao:2009:RFD}. Let $\wordsetfont{W}$ be a set of words of
equal length $\ell$. The constraints are defined for $\wordsetfont{W}$.
For naming consistency, we rename the Self-Complementary Constraint of~\cite{Kao:2009:RFD} to the Self Reverse Complementary Constraint in this paper; similarly, we rename the Shifting
Self-Complementary Constraint of~\cite{Kao:2009:RFD} to the Shifting Self Reverse Complementary Constraint in this paper.
\begin{enumerate}
\item \textbf{Basic Hamming Constraint} $C_1(k_1)$: Given an integer $k_1$ with $\ell \geq k_1 \geq 0$,
for any distinct words $Y,X\in\wordsetfont{W}$,
\begin{eqnarray}
H(Y,X) & \ge & k_1. \label{constraint:C1}
\end{eqnarray}
This constraint limits non-specific hybridization between a word $Y$ and the
Watson-Crick complement of a distinct word $X$ (and by symmetry between the
Watson-Crick complement of a word $Y$ with a distinct word $X$).

\item\textbf{Reverse Complementary Constraint} $C_2(k_2)$: Given an integer $k_2$ with $\ell \geq k_2 \geq 0$,
for any distinct words $Y,X \in \wordsetfont{W}$,
\[H(Y,X^{RC})\ge k_2.\]
This constraint limits
hybridization between a word $Y$ and the reverse of a distinct word $X$.

\item\textbf{Self Reverse Complementary Constraint} $C_3(k_3)$:  Given an integer $k_3$ with $\ell \geq k_3 \geq 0$,
for any word $Y \in\wordsetfont{W}$,
\[\hamd{Y,Y^{RC}}\ge k_3.\]
This constraint
prevents a word $Y$ from hybridizing with the reverse of itself.

\item\textbf{Shifting Hamming Constraint} $C_4(k_4)$: Given an integer $k_4$ with $\ell \geq k_4 \geq 0$,
for any
distinct words $Y,X \in\wordsetfont{W}$,
\begin{equation}\label{constraint:C4}
\hamd{Y[1\cdots i],X[(\ell-i+1)\cdots \ell]}\ge k_4-(\ell-i)\ \mbox{\ for all\ } \ell\geq i\geq \ell-k_4.
\end{equation}

This constraint is a stronger version of the constraint $C_1$ applied to every pair of a prefix of $Y$ and a suffix of $X$ of equal length $i$ with
$\ell\geq i\geq \ell-k_4$ and a  length-adjusted lower bound $k_4-(\ell-i)$ for the Hamming distance.

\item\textbf{Shifting Reverse Complementary Constraint} $C_5(k_5)$: Given an integer $k_5$ with $\ell \geq k_5 \geq 0$,
for any distinct
words $Y,X\in\wordsetfont{W}$,
\begin{eqnarray*}
\hamd{Y[1\cdots i],X[1\cdots i]^{RC}}                           &\ge& k_5- (\ell-i) ; \mbox{\ and}
\\
\hamd{Y[(\ell-i+1)\cdots\ell],X[(\ell-i+1)\cdots\ell]^{RC}}&\ge& k_5-(\ell-i) \mbox{\ for all\ } \ell\geq i\geq\ell-k_5.
\end{eqnarray*}

This constraint is a stronger version of the constraint $C_2$ applied to every pair of a prefix of $Y$ and a prefix of $X$ of equal length $i$ and also every pair of a suffix of $Y$ and a suffix of $X$ of equal length $i$ with
$\ell\geq i\geq \ell-k_5$ and a  length-adjusted lower bound $k_5-(\ell-i)$ for the Hamming distance.

\item\textbf{Shifting Self Reverse Complementary Constraint} $C_6(k_6)$: Given an integer $k_6$ with $\ell \geq k_6 \geq 0$,
for any word $Y\in\wordsetfont{W}$,
\begin{eqnarray*}
 H(Y[1\cdots i],Y[1\cdots i]^{RC}) & \ge & k_6-(\ell-i) ; \mbox{\ and}
\\
\hamd{Y[(\ell-i+1)\cdots\ell],Y[(\ell-i+1)\cdots\ell]^{RC}} & \ge & k_6-(\ell-i) \mbox{\ for all\ } \ell\geq i\geq \ell-k_6.
\end{eqnarray*}

This constraint is a stronger version of the constraint $C_3$ applied to every prefix of $Y$ and every suffix of $Y$ of  length $i$ with
$\ell\geq i\geq \ell-k_6$ and a  length-adjusted lower bound $k_6-(\ell-i)$ for the Hamming distance.

\item\textbf{GC Content Constraint} $C_7(\gamma)$: Given a real number $\gamma$ with $1 \geq \gamma \geq 0$,
$\gamma$ fraction of the characters (e.g.,  $\lceil \gamma\ell \rceil$ characters, $\lfloor \gamma\ell \rfloor$ characters, or  $\gamma\ell +O(1)$ characters) in each
word $Y\in\wordsetfont{W}$ are  G or C.

The GC content affects
thermodynamic properties of a word~\cite{Tsaftaris:2004:DCS,Tulpan:2002:SLS}.
Therefore, having the same
ratio of GC content for all the words helps ensure similar
thermodynamic characteristics.

\item\textbf{Consecutive Base Constraint} $C_8(d)$: Given an integer $d \ge 2$,
no word in $\wordsetfont{W}$ has more than $d$ consecutive bases.

In some applications, consecutive occurrences (also known as
runs) of the same base increase annealing errors.

Note that if $d = 1$ and $\wordsetfont{W}$ is a set of binary words, then $\wordsetfont{W}$ consists of at most two words, of which one word  starts with 0 and alternates between 0 and 1, and the other word is the complement of the former word. The requirement that $d \geq 2$ rules out this trivial case.

\item\textbf{Free Energy Constraint} $C_9(\sigma)$: Given a real number $\sigma \geq 0$,
for any two distinct words $Y,X \in\wordsetfont{W}$,
$$|\mathrm{FE}(Y)- \mathrm{FE}(X)|\le \sigma,$$
where $\mathrm{FE}(Z)$ denotes the free energy
of a word $Z$. See Section~\ref{subsec:1-69} for the definition of a particular free energy function $\mathrm{FE}$ considered in \cite{Kao:2009:RFD} and this paper.

This constraint helps ensure that the words in the set $\wordsetfont{W}$
have similar melting temperatures, which allows multiple DNA
strands to hybridize simultaneously at a temperature~\cite{Shoemaker:1996:QPA}.
\end{enumerate}

The lemma below summarizes some simple properties of constraints $C_1(k_1)$ through $C_6(k_6)$ and $C_8(d)$.

\begin{lem}[see, e.g.,  \cite{Kao:2009:RFD}]\label{lem:C1_C6}\
\begin{enumerate}
\item\label{lem:C1_C6:14}
If $C_4(k)$ holds, then $C_1(k)$ also holds.
\item\label{lem:C1_C6:1to6}
For each $C_p$ of the first six constraints, if $k \geq k_p$ and $C_p(k)$ holds, then $C_p(k_p)$ also holds.
\item\label{lem:C1_C6:C8}
For two integers $d \geq d' \geq 2$, if $C_8(d')$ holds, then $C_8(d)$ also holds.
\item\label{lem:C1_C6:minlength}
For each $C_p$ of the first six constraints, if $\wordsetfont{W}$ is set of $n$ distinct binary words (respectively, DNA words) of equal length $\ell$ and satisfies $C_p(k_p)$, then
$\ell \geq \max\{\log n, k_p\}$ (respectively, $\ell \geq \max\{\log_4 n, k_p\})$.
\end{enumerate}
\end{lem}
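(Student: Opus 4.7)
The lemma has four independent parts, each of which I would attack directly from the definitions.

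For part \ref{lem:C1_C6:14}, my plan is to specialize the constraint $C_4(k)$ at $i = \ell$. With this choice, the range condition $\ell \geq i \geq \ell - k$ is met, the shift $\ell - i$ equals zero, and the inequality (\ref{constraint:C4}) becomes $H(Y[1\cdots\ell], X[1\cdots\ell]) \geq k$, which is exactly $H(Y,X) \geq k$, i.e., $C_1(k)$.

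For part \ref{lem:C1_C6:1to6}, I would handle the six constraints uniformly by noting that each of them has the schema ``for certain word pairs (and certain ranges of $i$), some Hamming distance is at least $k_p$ (possibly minus a length adjustment).'' Raising the index from $k_p$ to $k \geq k_p$ does two things: it strengthens the Hamming lower bound, and in the shifting constraints $C_4, C_5, C_6$ it also enlarges the range of indices $i$ over which the inequality is asserted. So assuming $C_p(k)$, we have the stronger bound on the larger range; restricting to the range of $i$ needed by $C_p(k_p)$ and weakening the bound from $k$ to $k_p$ gives $C_p(k_p)$.

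Part \ref{lem:C1_C6:C8} is immediate from the definition of $C_8$: ``no word has more than $d'$ consecutive identical bases'' implies ``no word has more than $d \geq d'$ consecutive identical bases.'' I would state this in one sentence.

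Part \ref{lem:C1_C6:minlength} splits into two inequalities. The bound $\ell \geq \log n$ (respectively $\log_4 n$) follows because there are only $2^\ell$ (respectively $4^\ell$) distinct binary (respectively DNA) words of length $\ell$, and $\wordsetfont{W}$ contains $n$ distinct such words, so $n \leq 2^\ell$ (respectively $n \leq 4^\ell$). The bound $\ell \geq k_p$ follows from the fact that the Hamming distance between any two words of length $\ell$ is at most $\ell$; since $C_p(k_p)$ requires (by definition, taking $i = \ell$ in the shifting cases and using $n \geq 2$ so distinct-word premises are non-vacuous, or applying $C_3$/$C_6$ to any single word in $\wordsetfont{W}$) a Hamming distance of at least $k_p$ between two length-$\ell$ words, we must have $k_p \leq \ell$. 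The only mildly delicate point is that the distinct-pair hypotheses in $C_1, C_2, C_4, C_5$ need $n \geq 2$, but this is ensured since otherwise $\log n \leq 0$ and the $\log n$ bound is trivial; and $C_3, C_6$ apply to every single word so need no such assumption. This observation is the main (and only real) subtlety in the whole lemma.
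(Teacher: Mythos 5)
Your proposal is correct and takes essentially the same approach as the paper, which proves Statement~1 by specializing $C_4(k)$ at $i=\ell$ and declares Statements~2--4 straightforward; you simply supply the routine details the paper omits. The one subtlety you flag in Statement~4 (vacuousness of the distinct-pair constraints when $n=1$) is in fact moot, since each constraint definition $C_p(k_p)$ already stipulates $\ell \geq k_p \geq 0$ as a precondition.
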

\begin{proof}
Statement \ref{lem:C1_C6:14} follows from the fact that $C_1(k)$ is the same as the case $i = \ell$ in Inequality (\ref{constraint:C4}) for $C_4(k)$.
Statements \ref{lem:C1_C6:1to6} through \ref{lem:C1_C6:minlength} are also straightforward.
\end{proof}

\paragraph{Technical Remarks}
In this work, we interpret the terms $X[1\cdots i]^{RC}$, $X[(\ell-i+1)\cdots\ell]^{RC}$,  $Y[1\cdots i]^{RC}$, and $Y[(\ell-i+1)\cdots\ell]^{RC}$ in the definitions of $C_5(k_5)$ and $C_6(k_6)$
as $(X[1\cdots i)])^{RC}$, $(X[(\ell-i+1)\cdots\ell])^{RC}$,  $(Y[1\cdots i])^{RC}$, and $(Y[(\ell-i+1)\cdots\ell])^{RC}$, respectively. However, it would also be reasonable to interpret these terms in a subtly different manner as $(X^{RC})[1\cdots i]$, $(X^{RC})[(\ell-i+1)\cdots\ell]$,  $(Y^{RC})[1\cdots i]$, and $(Y^{RC})[(\ell-i+1)\cdots\ell]$.

\section{Designing Words for Constraints $C_1(k_1)$ and $C_4(k_4)$}
\label{derandomization}

In this section, we give  a deterministic polynomial-time algorithm, namely, \DetWords\  (Algorithm~\ref{alg:DetWords}), which can be used to construct a code $\wordset{1,4}$ of  $n$ DNA words of length $\ell^{\star} = \formulaforellstar$  for a range of positive constants $c_1$ and $c_2$ to satisfy constraints $C_1(k_1)$ and $C_4(k_4)$, where $k = \max\{k_1,k_4\}$.

Algorithm \DetWords\  takes $n$,  $\ell$, $k_1$, and $k_4$ as input and then outputs an $n \times \ell$ binary matrix.
We can view the rows of this binary matrix as a code of $n$ binary  words of length $\ell$. In turn, we can convert these binary words into DNA words by replacing 0 and 1 with two distinct DNA characters.  The remainder of this section will focus on constructing binary words. Also, for convenience, we will refer to binary row vectors, binary words, and DNA words interchangeably when there is no risk of ambiguity.

We design Algorithm \DetWords\  by derandomizing  a randomized algorithm in~\cite{Kao:2009:RFD}.
The basic idea for Algorithm \DetWords\  is to implicitly generate a random $n \times \ell$ binary matrix $M$ by assigning 0 or 1 with equal probability 1/2 to each of the $n\ell$ positions in $M$ independently. We then derandomize the assignment at each position to choose 0 or 1 one position at a time based on
conditional expectations of the number of pairs of distinct rows and their shifted prefixes and suffixes that satisfy $C_1(k_1)$ and $C_4(k_4)$.

More specifically, Algorithm \DetWords\ works as follows. It first creates an empty $n \times \ell$ binary matrix. It then fills the empty entries one at a time with 0 or 1. Before the algorithm chooses 0 or 1 to fill an empty entry, it computes two expectations.
The first expectation is the term $\myE_0$
at Line~\ref{alg:step:ezero} in Algorithm~\ref{alg:DetWords}. Informally, this expectation is the expected number of times
Inequalities (\ref{constraint:C1}) and (\ref{constraint:C4}) are satisfied if the current empty entry is filled with 0.
The second expectation is the term $\myE_1$
at Line~\ref{alg:step:eone} in Algorithm~\ref{alg:DetWords}. Informally, this expectation is the expected number of times
Inequalities (\ref{constraint:C1}) and (\ref{constraint:C4}) are satisfied if the current empty entry is filled with 1.
These expectations are formally defined in Equation~(\ref{eqn:approxE}) below. According to the manner in which  Equation~(\ref{eqn:approxE}) counts how many times Inequalities (\ref{constraint:C1}) and (\ref{constraint:C4}) are satisfied, a set of $n$ words of length $\ell$ can satisfy or fail these inequalities exactly
$\binom{n}{2}{\cdot}\left(1+ 2 (k_4-1)\right)$ times in total.  In particular,  a set of $n$ words of length $\ell$ satisfies Constraints $C_1(k_1)$ and $C_4(k_4)$ if and only if it satisfies Inequalities (\ref{constraint:C1}) and (\ref{constraint:C4}) exactly $\binom{n}{2}{\cdot}\left(1+ 2 (k_4-1)\right)$ times and fails 0 time. Furthermore,  when $\ell$ is sufficiently large, an empty $n \times \ell$ binary matrix is expected to satisfy Inequalities (\ref{constraint:C1}) and (\ref{constraint:C4}) strictly greater than $\binom{n}{2}{\cdot}\left(1+ 2 (k_4-1)\right) - 1$ times. With this lower bound and the linearity of expectations,
Algorithm \DetWords\ can choose to fill each empty entry with 0 or 1 one at a time to arrive at a set of $n$ words of length $\ell$ which satisfies Inequalities (\ref{constraint:C1}) and (\ref{constraint:C4}) exactly $\binom{n}{2}{\cdot}\left(1+ 2 (k_4-1)\right)$ times and thus satisfies Constraints $C_1(k_1)$ and $C_4(k_4)$. That is, Algorithm \DetWords\ chooses to fill an  empty entry with 0 or 1 whichever yields
a larger expected number of times Inequalities (\ref{constraint:C1}) and (\ref{constraint:C4}) are satisfied.

To choose a sufficiently large $\ell$ for Algorithm {\DetWords},
let $\delta$ be any positive real number. Let $c_1=2+\delta$. Let $c_2 =\formulaforctwo$. Let $k = \max\{k_1, k_4\}$.
Let   $\ell^{\star}=\formulaforellstar$.  Theorem~\ref{thm:DeRanCon}  below shows that, by setting $\ell = \ell^{\star}$, Algorithm \DetWords\  deterministically constructs a code  $\wordset{1,4}$  of $n$ DNA words of length $\ell^{\star}$ that satisfies constraints $C_1(k_1)$ and $C_4(k_4)$. Theorem~\ref{thm:DeRanCon} also shows that this construction takes  $\algonetime$  time.

The remainder of this section provides details to elaborate on the above overview. In Section~\ref{subsec:expectation}, we define a polynomial-time computable  expectation that will be used by Algorithm \DetWords\  for the purpose of derandomization. In Section~\ref{subsec:alg:DetWords}, we give Algorithm \DetWords\  in Algorithm~\ref{alg:DetWords}.
The word length $\ell^{\star}$ above is determined analytically and for the binary alphabet; in Section~\ref{subsec:improve_ell}, we discuss how to improve this word length
computationally and with a larger alphabet, i.e., the DNA alphabet.

\subsection{A Polynomial-Time Computable Expectation for Derandomization}\label{subsec:expectation}
To describe Algorithm \DetWords\  in Algorithm~\ref{alg:DetWords}, we first give some definitions and lemmas.

\begin{definition}\rm \label{def:distmat}
Given $n, \ell, k_1$, and $k_4$, an $n \times \ell$ binary matrix $M$ is called a \emph{$(k_1,k_4)$-distance matrix} if
the set of the $n$ rows of $M$ satisfies constraints $C_1(k_1)$ and $C_4(k_4)$.
\end{definition}

\begin{lem}
An  $(k_1,k_4)$-distance matrix $M$ of dimension  $n \times \ell$   can be converted
into a code $\wordset{1,4}$ of $n$ DNA words of length $\ell$ that satisfies $C_1(k_1)$
and $C_4(k_4)$.
\end{lem}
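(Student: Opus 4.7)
The plan is to exhibit an explicit bijection $\phi$ from the binary alphabet $\Pi_B = \{0,1\}$ into the DNA alphabet $\Pi_D$, for example $\phi(0) = \mathrm{A}$ and $\phi(1) = \mathrm{C}$, and then define the $n$ DNA words as the images of the rows of $M$ under the entrywise extension of $\phi$. This gives a code $\wordset{1,4}$ of $n$ DNA words of length $\ell$.

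The only thing to verify is that the Hamming-distance constraints $C_1(k_1)$ and $C_4(k_4)$ transfer from the binary rows to the DNA words. I would first observe that for any two equal-length binary words $Y$ and $X$, the position $i$ contributes to $H(Y,X)$ if and only if $y_i \neq x_i$, which, because $\phi$ is injective on $\Pi_B$, happens if and only if $\phi(y_i) \neq \phi(x_i)$. Hence $H(\phi(Y), \phi(X)) = H(Y,X)$, and the same identity holds when $Y$ and $X$ are replaced by any equal-length prefix/suffix pair from two rows of $M$.

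From this distance-preservation identity, both $C_1(k_1)$ (which is a single Hamming-distance inequality for each pair of rows) and $C_4(k_4)$ (which is Inequality~(\ref{constraint:C4}) applied to every pair of a prefix and a suffix of the rows) carry over verbatim from the binary rows of $M$ to the DNA words $\phi(\text{rows of } M)$. Therefore the resulting code $\wordset{1,4}$ satisfies $C_1(k_1)$ and $C_4(k_4)$, as required.

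There is no genuine obstacle here: the statement is essentially an observation that the two Hamming-based constraints depend only on the equality/inequality structure of the characters, not on what alphabet they come from, so the argument reduces to noting injectivity of $\phi$ on $\{0,1\}$ and invoking Definition~\ref{def:distmat} together with the definitions of $C_1$ and $C_4$.
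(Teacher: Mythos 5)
Your proposal is correct and follows essentially the same route as the paper: the paper's proof simply replaces 0 and 1 with two distinct DNA characters, and your argument spells out the (implicit) observation that any injective map of alphabets preserves Hamming distances, so $C_1(k_1)$ and $C_4(k_4)$ carry over.
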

\begin{proof}
As discussed in the overview at the start of this section, we first view the rows of $M$ as a code of $n$ binary  words of length $\ell$. Then, we convert these binary words into DNA words by replacing 0 and 1 with two distinct DNA characters.
\end{proof}

\begin{definition}\rm \label{def:partialmat}
Let $M$ be an $n \times \ell$ matrix, where each $(p,q)$-th  entry is $0$, $1$, or a distinct unknown $x_{p,q}$. Such a matrix is called a \emph{partially assigned} matrix.
\end{definition}

Now consider a partially assigned matrix $M$ of dimension $n \times \ell$ as a random variable where each unknown $x_{p,q}$ can assume the value of $0$ or $1$ with equal probability $1/2$.
Next consider the expected number of ordered
pairs of distinct rows $r_\alpha$ and $r_\beta$ in $M$  that satisfy constraints $C_1(k_1)$ and $C_4(k_4)$ where $Y = r_\alpha$ and $X = r_\beta$.
As a first attempt~\cite{Kao:2010:DPT}, we have wished to use this expectation in  Algorithm \DetWords\   for the purpose of randomization. However,
it is not clear how to compute this expectation in polynomial time. Therefore, in Algorithm \DetWords, we will use a different expectation $\approxE(M,k_1,k_4)$ that also works for  derandomization but can be computed in polynomial time. The expectation   $\approxE(M,k_1,k_4)$   is developed as follows.

\begin{itemize}
\item$E_1(M,\alpha,\beta,k_1)$ denotes the event that $r_\alpha$ and $r_\beta$ satisfy  Inequality (\ref{constraint:C1}) for  $C_1(k_1)$ with $Y = r_\alpha$ and $X = r_\beta$.

\item $E_4(M,\alpha,\beta,k_4,i)$ denotes the event that $r_\alpha$ and $r_\beta$ satisfy case $i$ of  Inequality (\ref{constraint:C4}) for  $C_4(k_4)$ $Y = r_\alpha$ and $X = r_\beta$.

\item $\expectE_1(M,k_1)$ denotes  the expected number of unordered pairs of distinct $\alpha$ and $\beta$ for which $E_1(M,\alpha,\beta,k_1)$ holds.

\item $\expectE_4(M,k_4,i)$ denotes the expected number of ordered pairs of distinct $\alpha$ and $\beta$ for which $E_4(M,\alpha,\beta,k_4,i)$ holds.
\end{itemize}

Note that for $\expectE_1(M,k_1)$, we count unordered pairs of $\alpha$ and $\beta$ but  for
$\expectE_4(M,k_4,i)$, we count ordered pairs. This difference is due to the following reasons.
$Y$ and $X$ are symmetric in Inequality (\ref{constraint:C1}); therefore, $\alpha$ and $\beta$ are symmetric for $E_1$. In contrast,
$Y$ and $X$ are symmetric in  Inequality (\ref{constraint:C4}) only for $i = \ell$  but  asymmetric for all other $i$; therefore
$\alpha$ and $\beta$ are symmetric for $E_1$ only for $i = \ell$  but asymmetric for all other $i$.

Now, let \
\begin{equation}
\approxE(M,k_1,k_4)
= \expectE_1(M,\max\{k_1,k_4\}) + \sum_{i = \ell - k_4 +1} ^{\ell-1} \expectE_4(M,k_4,i)
\label{eqn:approxE}
\end{equation}
Note that in the right-hand side of Equality~(\ref{eqn:approxE}), the second argument of $\expectE_1$ is $\max\{k_1,k_4\}$ rather than $k_1$ as used in the definition of constraint $C_1(k_1)$.
Also, the upper limit of the summation is $\ell-1$ rather than $\ell$ and the lower limit is $\ell - k_4 +1$ rather than $\ell - k_4$ as used in the definition of constraint $C_4(k_4)$.
We will justify these details in Lemma~\ref{lem:existencecode} and its proof below.

We next develop two expressions for $\approxE(M,k_1,k_4)$ as alternatives to Equality (\ref{eqn:approxE}) in order to
analyze and efficiently compute $\approxE(M,k_1,k_4)$.

For an event $E$ of a probability space, let $\mycomplement{E}$ denote the complement of $E$, and let $\myprob{E}$ denote the probability of $E$. For a real-valued  random variable $V$, let $\myexpectation{V}$ denote the expectation of $V$.

Equalities (\ref{eqn:expectE1}) and (\ref{eqn:expectE4}) below in conjunction with Equality (\ref{eqn:approxE}) give one of two alternative expressions for $\approxE(M,k_1,k_4)$.

\begin{eqnarray}
\expectE_1(M,\max\{k_1,k_4\})
& = & \sum_{1 \leq \alpha < \beta \leq n}  \left\{1 - \myprob{\mycomplement{E_1(M,\alpha,\beta, \max\{k_1,k_4\})} }\right\};
\label{eqn:expectE1}
\\
& & \nonumber
\\
\expectE_4(M,k_4,i )
& = &
\sum_{1 \leq \alpha < \beta \leq n} \left<\left\{1-\myprob{\mycomplement{E_4(M,\alpha, \beta, k_4,i)}}\right\} \right . +
\nonumber
\\
& &  \quad\quad\quad\quad\quad\quad \left .\left\{1-\myprob{\mycomplement{E_4(M,\beta, \alpha, k_4,i)}}\right\}\right>.
\label{eqn:expectE4}
\end{eqnarray}

For $k_1, k_4, k = \max\{k_1,k_4\}$, and a binary matrix $M'$ of dimension $n \times \ell$, consider the following two functions:
\begin{itemize}
\item
$V_1(M',k)$ denotes the number of unordered pairs of distinct $\alpha$ and $\beta$ such that rows $r'_\alpha$ and $r'_\beta$ of $M'$ satisfy  Inequality (\ref{constraint:C1}) for  $C_1(k)$ with $Y = r'_\alpha$ and $X = r'_\beta$.
\item
$V_4(M',k_4)$ denotes the number of triplets $(\alpha, \beta, i)$ such that  distinct rows $r'_\alpha$ and $r'_\beta$ in $M'$ satisfy case $i$ of  Inequality (\ref{constraint:C4}) for  $C_4(k_4)$
with $Y = r_\alpha$ and $X = r_\beta$, where $n \geq \alpha \neq \beta \geq 1$ and $\ell - 1 \geq i \geq \ell -k_4 + 1$.
\end{itemize}
Note that $V_1$ is an integer function and $\binom{n}{2} \geq V_1(M',k) \geq 0$. Similarly, $V_4$ is an integer function and $n(n-1){\cdot}(k_4 -1) \geq V_4(M',k_4) \geq 0$. Consequently,
$V_1(M',k) + V_4(M',k_4)$ is an integer and $\binom{n}{2}{\cdot}\left(1+ 2 (k_4-1)\right) \geq V_1(M',k) + V_4(M',k_4)  \geq 0$.

Next we combine the random variable  $M$ and the functions $V_1$ and $V_4$ to form two random variables $V_1(M,k)$ and $V_4(M,k_4)$.
Then, the following equalities give the other alternative expression for $\approxE(M,k_1,k_4)$.
\begin{eqnarray}
\expectE_1(M,\max\{k_1,k_4\})  & = & \myexpectation{V_1(M,k)};
\\
\sum_{i = \ell - k_4 +1} ^{\ell-1} \expectE_4(M,k_4,i) & = & \myexpectation{V_4(M,k_4)};
\\
\approxE(M,k_1,k_4)  & = &  \myexpectation{V_1(M,k)}  +  \myexpectation{V_4(M,k_4)}. \label{eqn:approxE:alternative2}
\end{eqnarray}

Lemmas~\ref{lem:existencecode} through \ref{lem:time:prob} below analyze $\approxE(M,k_1,k_4)$.

\begin{lem}\label{lem:existencecode}
Let $M$ be a partially assigned matrix of dimension $n \times \ell$. If
\begin{equation}
\approxE(M,k_1,k_4) > \binom{n}{2}{\cdot}\left(1+ 2 (k_4-1)\right) - 1, \label{lem:M:lowerbound_one}
\end{equation}
then there exists an assignment of 0's and 1's to the unknowns in $M$ so that the resulting binary matrix $M'$ is a $(k_1,k_4)$-distance matrix.
\end{lem}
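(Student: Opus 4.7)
The plan is a direct probabilistic (counting) existence argument. From expression~\eqref{eqn:approxE:alternative2}, I would view $\approxE(M,k_1,k_4)$ as $\myexpectation{V_1(M,k)+V_4(M,k_4)}$ with $k = \max\{k_1,k_4\}$, where the expectation is over the uniformly random independent $\{0,1\}$-assignment to the unknown entries of $M$. The crucial combinatorial fact is that, for every realized completion $M'$, the value $V_1(M',k)+V_4(M',k_4)$ is a non-negative integer bounded above by $B := \binom{n}{2}(1+2(k_4-1))$, because $V_1$ has at most $\binom{n}{2}$ unordered contributing pairs and $V_4$ has at most $n(n-1)(k_4-1) = 2\binom{n}{2}(k_4-1)$ contributing triplets $(\alpha,\beta,i)$.

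Given the hypothesis $\myexpectation{V_1+V_4} > B - 1$, a standard averaging argument applies: if every completion $M'$ satisfied $V_1 + V_4 \le B - 1$, then the expectation would also be at most $B - 1$, contradicting the hypothesis. Since $V_1 + V_4$ is integer-valued and bounded above by $B$, some completion $M'$ must attain $V_1(M',k)+V_4(M',k_4)=B$. Together with the individual upper bounds on $V_1$ and $V_4$, this forces both $V_1(M',k) = \binom{n}{2}$ and $V_4(M',k_4) = 2\binom{n}{2}(k_4-1)$ simultaneously.

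It remains to verify that such an $M'$ is a $(k_1,k_4)$-distance matrix. Maximality of $V_1$ at parameter $k = \max\{k_1,k_4\}$ says that every pair of distinct rows of $M'$ has Hamming distance at least $k$, which yields $C_1(k_1)$ via part~\ref{lem:C1_C6:1to6} of Lemma~\ref{lem:C1_C6}. Maximality of $V_4$ says that Inequality~\eqref{constraint:C4} holds for every ordered pair of distinct rows and every $i$ with $\ell - k_4 + 1 \le i \le \ell - 1$. The two boundary values of $i$ omitted from this range still need to be checked in order to conclude $C_4(k_4)$: for $i = \ell$ the required inequality reduces to $H(Y,X) \ge k_4$, which is already implied by $H(Y,X) \ge k \ge k_4$ from the previous step; for $i = \ell - k_4$ the length-adjusted lower bound in~\eqref{constraint:C4} is $k_4 - (\ell - i) = 0$, which is vacuous.

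The only point requiring any care is this boundary-case accounting. The definition~\eqref{eqn:approxE} was designed with $\max\{k_1,k_4\}$ inside $\expectE_1$ and with truncated summation limits $\ell - k_4 + 1 \le i \le \ell - 1$ for $\expectE_4$ precisely so that attaining the integer upper bound $B$ is logically equivalent to satisfying $C_1(k_1)$ and $C_4(k_4)$ simultaneously, and so that the expectation itself remains efficiently computable (to be used later in Algorithm~\DetWords). Once this bookkeeping is verified, the lemma follows immediately from the averaging step.
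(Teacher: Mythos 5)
Your argument is correct and follows essentially the same route as the paper's proof: an averaging argument over the integer-valued random variable $V_1(M,k)+V_4(M,k_4)$ bounded by $\binom{n}{2}(1+2(k_4-1))$ forces some completion to attain the bound, hence both $V_1$ and $V_4$ are individually maximal, and the boundary cases $i=\ell$ (recovered from $V_1$ maximality at $k\ge k_4$) and $i=\ell-k_4$ (vacuous) are handled exactly as in the paper.
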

\begin{proof}
Recall that for every binary matrix  $M''$ generated from $M$,  $V_1(M'',k) + V_4(M'',k_4)$ is an integer and $\binom{n}{2}{\cdot}\left(1+ 2 (k_4-1)\right) \geq V_1(M'',k) + V_4(M'',k_4)$. Therefore,
 Inequalities (\ref{lem:M:lowerbound_one}) and (\ref{eqn:approxE:alternative2}) imply that there exists a binary matrix  $M'$ generated from $M$ such that $V_1(M',k) + V_4(M',k_4) = \binom{n}{2}{\cdot}\left(1+ 2 (k_4-1)\right)$. Then, since  $\binom{n}{2} \geq V_1(M',k)$ and $n(n-1){\cdot}(k_4 -1) \geq V_4(M',k_4) $, we have
$V_1(M',k) = \binom{n}{2}$ and $V_4(M',k) = {n(n-1)}{\cdot}(k_4 -1)$.

Next, since $V_1(M',k) = \binom{n}{2}$ and there are $\binom{n}{2}$ unordered pairs of distinct rows in $M'$,
the $n$ rows of the binary matrix $M'$ satisfy $C_1(k)$. Since $k = \max\{k_1, k_4\}$, by Lemma~\ref{lem:C1_C6}(\ref{lem:C1_C6:1to6}), the $n$ rows of $M'$ satisfy $C_1(k_1)$.

Likewise,  the $n$ rows of $M'$ satisfy  $C_1(k_4)$. Now observe that Inequality (\ref{constraint:C1}) for $C_1(k_4)$ is the same as case $i = \ell$ in Inequality (\ref{constraint:C4}) for $C_4(k_4)$. Therefore, the $n$ rows of $M'$ satisfy case $i = \ell$ in Inequality (\ref{constraint:C4}) for $C_4(k_4)$ as well.  Next, since $V_4(M',k_4) = {n(n-1)}{\cdot}(k_4 -1)$ and there are ${n(n-1)}{\cdot}(k_4 -1)$ triplets $(\alpha,\beta,i)$ with  $n \geq \alpha \neq \beta \geq 1$ and $\ell - 1 \geq i \geq \ell -k_4 + 1$, the $n$ rows of $M'$ satisfy Inequality (\ref{constraint:C4}) of $C_4(k_4)$ for $\ell -1 \geq i  \geq \ell - k_4 + 1$.  Furthermore, since case $ i = \ell  - k_4$ in Inequality (\ref{constraint:C4}) for $C_4(k_4)$ always holds, the $n$ rows of $M'$ satisfy the entire $C_4(k_4)$ constraint as well.

In sum, the $n$ rows of $M'$ satisfy both constraints $C_1(k_1)$ and $C_4(k_4)$. This finishes the proof.
\end{proof}

\begin{lem}\label{lem:linearity}
Let $M$ be a partially assigned matrix of dimension $n \times \ell$. Assume that the $(p,q)$-th entry of $M$ is an unknown. Let $M_0$ (respectively, $M_1$) be $M$ with the $(p,q)$-th entry assigned $0$
(respectively, $1$). Then
\[
\approxE(M,k_1,k_4) = \frac{1}{2}{\cdot}\approxE(M_0,k_1,k_4) + \frac{1}{2}{\cdot}\approxE(M_1,k_1,k_4).
\]
\end{lem}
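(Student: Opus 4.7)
The plan is to prove this by invoking the alternative expression for $\approxE$ given in Equation~(\ref{eqn:approxE:alternative2}),
\[
\approxE(M,k_1,k_4) = \myexpectation{V_1(M,k)} + \myexpectation{V_4(M,k_4)},
\]
and then applying the law of total expectation by conditioning on the value of the unknown $x_{p,q}$. Since $M$ is a random variable in which each unknown is independently $0$ or $1$ with probability $1/2$, and the entry at $(p,q)$ is by assumption one such unknown, the conditional distribution of $M$ given $x_{p,q}=0$ is exactly the distribution of $M_0$, and similarly conditioning on $x_{p,q}=1$ gives $M_1$.

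The main step is therefore to write, for each $j \in \{1,4\}$,
\[
\myexpectation{V_j(M,\cdot)}
= \myprob{x_{p,q}=0}\cdot\myexpectation{V_j(M,\cdot) \mid x_{p,q}=0} + \myprob{x_{p,q}=1}\cdot\myexpectation{V_j(M,\cdot) \mid x_{p,q}=1},
\]
and to observe that each of $\myprob{x_{p,q}=0}$ and $\myprob{x_{p,q}=1}$ equals $1/2$, while the two conditional expectations equal $\myexpectation{V_j(M_0,\cdot)}$ and $\myexpectation{V_j(M_1,\cdot)}$ respectively. Summing over $j=1$ and $j=4$ and invoking Equation~(\ref{eqn:approxE:alternative2}) for each of $M$, $M_0$, and $M_1$ then yields exactly the claimed identity.

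I do not anticipate any real obstacle: the result is a routine application of the law of total expectation combined with the linear decomposition of $\approxE$ already established. The only subtlety worth being careful about is that the conditional distributions really do coincide with the distributions of $M_0$ and $M_1$, which follows because the unknowns in $M$ are mutually independent, so conditioning on $x_{p,q}$ does not alter the distribution of the other unknowns.
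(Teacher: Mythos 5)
Your proposal is correct and follows essentially the same route as the paper: the paper's proof likewise invokes Equality~(\ref{eqn:approxE:alternative2}), the linearity of expectation, and the fact that each unknown entry of $M$ is independently $0$ or $1$ with probability $1/2$. Your write-up simply spells out the conditioning step (law of total expectation on $x_{p,q}$) that the paper leaves implicit.
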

\begin{proof}
This lemma follows from Equality (\ref{eqn:approxE:alternative2}), the linearity of expectations
$\myexpectation{V_1(M,k)}$ and $\myexpectation{V_4(M,k_4)}$, and the fact that $M$ is considered  a random variable where  each of the unknown entries is independently  assigned $0$ or $1$ with equal probability $1/2$.
\end{proof}

\begin{lem}\label{lem:time:prob} Let $k = \max\{k_1,k_4\}$.
Given $r_\alpha, r_\beta, k_1, k_4,$ and $i$ as the input,  each of the probabilities in the right-hand sides of Equalities (\ref{eqn:expectE1}) and  (\ref{eqn:expectE4}) can be computed in $O(\ell+k)$ time.
\end{lem}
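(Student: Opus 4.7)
The plan is to express each of the two probabilities as the tail of a binomial distribution and then evaluate that tail in $O(k)$ arithmetic operations, after an $O(\ell)$ preprocessing scan.

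First I would observe that $\overline{E_1(M,\alpha,\beta,k)}$ is precisely the event $\hamd{r_\alpha,r_\beta}<k$, and $\overline{E_4(M,\alpha,\beta,k_4,i)}$ is precisely the event $\hamd{r_\alpha[1\cdots i],r_\beta[(\ell-i+1)\cdots\ell]}<k_4-(\ell-i)$. In both cases we are comparing two vectors of length $L\leq\ell$ coordinate-wise, where each coordinate is either a fixed bit or a distinct $x_{p,q}$. Because $\alpha\neq\beta$ and because one side uses row $\alpha$ and the other uses row $\beta$, the unknowns appearing on the two sides of any given coordinate come from different entries of $M$, so the per-coordinate indicator $D_j=[u_j\neq v_j]$ is (i) deterministic when both bits are already fixed, and (ii) a fair coin flip, independent across coordinates, whenever at least one of the two bits is unknown.

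Let $s$ be the number of coordinates at which both bits are fixed and differ, and let $u$ be the number of coordinates at which at least one of the two bits is unknown. Then the Hamming distance of the two vectors equals $s+X$ with $X\sim\mathrm{Binomial}(u,1/2)$, and therefore, for any integer threshold $t$,
\[
\myprob{\hamd{\cdot,\cdot}<t}=\frac{1}{2^u}\sum_{j=0}^{t-s-1}\binom{u}{j}.
\]
For the $E_1$ probability the threshold is $t=\max\{k_1,k_4\}=k$, and for the $E_4$ probability the threshold is $t=k_4-(\ell-i)\leq k_4\leq k$; in either case the sum has at most $k$ terms.

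For the running time I would first scan the (at most $\ell$) relevant coordinates once, in $O(\ell)$ time, to determine $s$ and $u$. Then I would generate $\binom{u}{0},\binom{u}{1},\ldots,\binom{u}{t-s-1}$ incrementally using the recurrence $\binom{u}{j}=\binom{u}{j-1}\cdot(u-j+1)/j$, spending $O(1)$ word-RAM arithmetic operations per term, for a total of $O(k)$. Adding the two contributions gives $O(\ell+k)$, as claimed. There is no conceptual obstacle; the only subtlety worth spelling out is point (ii) above, i.e., using $\alpha\neq\beta$ to ensure that the unknowns paired at the same coordinate are always distinct and hence independent, so that the $D_j$'s are mutually independent and the binomial formula is legitimate.
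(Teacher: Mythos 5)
Your proposal is correct and follows essentially the same route as the paper's proof: both reduce each probability to a binomial tail $\sum_{j}\binom{u}{j}2^{-u}$ determined by one $O(\ell)$ scan counting the fixed-and-differing positions and the positions involving an unknown, followed by an $O(k)$-term summation. The only addition is that you explicitly justify the independence of the per-coordinate comparisons via $\alpha\neq\beta$, a point the paper leaves implicit.
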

\begin{proof}
The specified probabilities can be computed in essentially the same manner. Here, we only show how to compute  $\myprob{\mycomplement{E_1(M,\alpha,\beta, \max\{k_1,k_4\})} }$ in the desired time complexity. Let $s$ be the number of positions at which $r_\alpha$ and $r_\beta$ assume values of $0$ or $1$ and are not unknowns.
Let $t$ be the number of these $s$ positions where $r_\alpha$ and $r_\beta$ assume different binary values.
Then,
\begin{equation}\label{eqn:evaluate_prob}
\myprob{\mycomplement{E_1(M,\alpha,\beta, \max\{k_1,k_4\})} } =  \sum_{j=0}^{k-1-t}\binom{\ell-s}{j} \left(\frac{1}{2}\right)^{\ell-s}.
\end{equation}
It is elementary to first determine $s$ and then compute the right-hand side of Equality (\ref{eqn:evaluate_prob}) in $O(\ell +\log(\ell - s) + k - t)$ total time, which is $O(\ell + k)$ time.
\end{proof}

\subsection{Algorithm \DetWords\ for Designing Words for $C_1(k_1)$ and $C_4(k_4)$}\label{subsec:alg:DetWords}

With $\approxE(M,k_1,k_4)$ defined and analyzed in Section~\ref{subsec:expectation}, we describe Algorithm \DetWords\ in  Algorithm~\ref{alg:DetWords}.

\begin{algorithm}
\caption{\DetWords$(n, \ell, k_1, k_4)$} \label{alg:DetWords}
\begin{algorithmic} [1]
\STATE {\bf Input:}  integers $n$, $\ell$, $k_1$, and $k_4$.
\STATE {\bf Output:} a  $(k_1,k_4)$-distance matrix $M$ of dimension $n \times \ell$.
\STATE {\bf Steps:}
\STATE Construct a partially assigned  matrix $M$ of dimension $n \times \ell$ where every entry is an unknown. \label{alg:step:base}
\FOR{$p=1$ to $\ell$}
  \FOR{$q=1$ to $n$}
    \STATE Compute $\myE_0 = \approxE(M_0, k_1,k_4)$, where $M_0$ is $M$ with the unknown at the $(p,q)$-th entry set to $0$. \label{alg:step:ezero}
    \STATE Compute $\myE_1 = \approxE(M_1, k_1, k_4)$, where $M_1$ is $M$ with the unknown at the $(p,q)$-th entry set to $1$.\label{alg:step:eone}
    \IF{$\myE_0 \geq \myE_1$} \label{alg:step:if}
      \STATE Update $M$ by setting  the unknown at the $(p,q)$-th entry to $0$. \label{alg:step:updateMzero}
    \ELSE
      \STATE Update $M$ by setting  the unknown at the $(p,q)$-th entry to $1$. \label{alg:step:updateMone}
    \ENDIF \label{alg:step:induction}
  \ENDFOR
\ENDFOR
\STATE return $M$, which is now a binary matrix. \label{alg:step:output}
\end{algorithmic}
\end{algorithm}

We analyze the correctness and computational complexity of Algorithm \DetWords\  (Algorithm~\ref{alg:DetWords}) with several lemmas and a theorem below.
Lemmas~\ref{inequality requirement} and \ref{c_1 c_2 relationship} first analyze the existence of $(k_1,k_4)$-distance matrices.

\begin{lem} \label{inequality requirement}
Given $n$, $k_1$, $k_4$, and $k = \max\{k_1, k_4\}$,  if $\ell$
satisfies the following two inequalities
\begin{eqnarray}
2k & \leq & \ell \label{eqn:ellpartone}
\\
0 & < & \ell - k \log{e} - k \log{\frac{\ell}{k}}  - 2 \log{n} + 2 \log{k}, \label{eqn:ellparttwo}
\end{eqnarray}
then $\ell$ satisfies Inequality (\ref{lem:M:lowerbound_one}) in Lemma~\ref{lem:existencecode} and thus there exists a $(k_1,k_4)$-distance matrix of dimension $n \times \ell$.
\end{lem}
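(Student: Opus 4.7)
My plan is to invoke Lemma~\ref{lem:existencecode}: it suffices to verify Inequality~(\ref{lem:M:lowerbound_one}) for the fully-unknown partial matrix $M$ (every entry an independent fair coin). Writing
\[
\Phi \;:=\; \binom{n}{2}(2k_4-1) - \approxE(M,k_1,k_4)
\]
for the expected total number of failed inequality cases, the task reduces to showing $\Phi < 1$. Using formulas~(\ref{eqn:approxE})--(\ref{eqn:expectE4}) and the fact that any two distinct rows of the all-random $M$ consist of independent uniform bits, $\Phi$ unfolds into a weighted sum of binomial tail probabilities:
\[
\Phi \;=\; \binom{n}{2}\,p_1 \;+\; 2\binom{n}{2}\sum_{j=1}^{k_4-1} p_j,
\]
where $p_1 = \myprob{B(\ell,1/2) < k}$ and $p_j = \myprob{B(\ell-j,1/2) < k_4-j}$, with $B(m,1/2)$ denoting a Binomial$(m,1/2)$ variable. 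The first summand counts the $\binom{n}{2}$ unordered pairs for $E_1$; the second counts, with a factor of $2$ for ordered pairs, the $k_4-1$ shifts of $E_4$.

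The workhorse estimate is the standard tail bound $\myprob{B(m,1/2) < t} \leq (em/t)^t \, 2^{-m}$, valid for $t \leq m/2$. Hypothesis~(\ref{eqn:ellpartone}), $2k \leq \ell$, ensures $t \leq m/2$ throughout, yielding $p_1 \leq (e\ell/k)^k/2^\ell$ and analogous bounds for each $p_j$. Hypothesis~(\ref{eqn:ellparttwo}) rearranges to precisely
\[
(e\ell/k)^k\,/\,2^\ell \;<\; k^2/n^2,
\]
which is essentially the bound we need on the $E_1$ contribution $\binom{n}{2}p_1$.

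The main obstacle I foresee is controlling the sum of the $k_4-1$ shifted tail probabilities: naively bounding each $p_j$ by $p_1$ loses a factor of $k_4$ that hypothesis~(\ref{eqn:ellparttwo}) does not absorb. The fix is to exploit the rapid decay of the shifted tails---as $j$ grows, the threshold $k_4-j$ recedes from the mean $(\ell-j)/2$ by at least $(\ell-k_4)/2$, so the bound on $p_j$ shrinks geometrically in $j$. Summing shows that the shifted contribution is bounded by a constant multiple of $p_1$, which, combined with the rearranged form of~(\ref{eqn:ellparttwo}) and $\binom{n}{2} \leq n^2/2$, delivers $\Phi < 1$ and completes the proof.
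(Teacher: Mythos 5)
Your setup is identical to the paper's: reduce to Lemma~\ref{lem:existencecode}, write the expected number of failed cases for the all\hyphenation{unknown}-unknown matrix as $\binom{n}{2}p_1+2\binom{n}{2}\sum_{j=1}^{k_4-1}p_j$ with $p_1=\myprob{B(\ell,1/2)<k}$ and $p_j=\myprob{B(\ell-j,1/2)<k_4-j}$, and control the tails via $\sum_{j<t}\binom{m}{j}\le (em/t)^t$. The gap is in the closing arithmetic. You rearrange Inequality~(\ref{eqn:ellparttwo}) to $(e\ell/k)^k2^{-\ell}<k^2/n^2$ and propose to finish with $\binom{n}{2}\le n^2/2$ plus the claim that the shifted contribution is $O(p_1)$. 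That chain yields only $\Phi< O(1)\cdot \binom{n}{2}\,(e\ell/k)^k2^{-\ell} < O(k^2)$, not $\Phi<1$: the factor $k^2$ lands in the numerator, on the wrong side, and nothing in your argument absorbs it. What is actually needed is $(e\ell/k)^k2^{-\ell}<1/(n^2k^2)$, i.e., the version of (\ref{eqn:ellparttwo}) with $-2\log k$ in place of $+2\log k$; this is the condition the paper's own derivation arrives at (its displayed requirement $1-n^2k^2(e\ell/k)^k2^{-\ell}>0$) and the one Lemma~\ref{c_1 c_2 relationship} in fact verifies for $\ell^{\star}$. The printed $+2\log k$ appears to be a sign slip which your literal rearrangement inherits and which makes the final step fail. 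With the corrected form, no refinement of the shifted terms is needed: the paper simply bounds every one of the at most $k+2k(k_4-1)\le 2k^2$ tail terms per pair by $\binom{\ell}{k}2^{-\ell}$ (each $\binom{i}{j}2^{-i}\le\binom{\ell}{k}2^{-\ell}$ follows from $\ell\ge 2k$ by repeatedly doubling the binomial coefficient), and the whole $k^2$ multiplicity is paid for by the hypothesis.

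A secondary problem is that your key claim --- that the shifted tails decay geometrically so that $\sum_{j\ge 1}p_j=O(p_1)$ --- does not follow from the stated hypotheses. Under (\ref{eqn:ellpartone}) alone the threshold $k_4-j$ recedes from the mean $(\ell-j)/2$ by only $j/2$ plus the initial gap $(\ell-2k_4)/2$, which is $0$ when $\ell=2k_4$; there the consecutive ratios are $1-O(1/\sqrt{k})$ and $\sum_j p_j=\Theta(\sqrt{k})\,p_1$, not $O(p_1)$. To get a fixed geometric ratio you would have to invoke (\ref{eqn:ellparttwo}) quantitatively to force $\ell\ge (3+\epsilon)k$, a step you have not carried out --- and even granting it, the missing factor of $k^2$ from the first paragraph remains. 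I recommend abandoning the refinement and following the cruder term-by-term bound, stating and using the hypothesis in the form $0<\ell-k\log e-k\log(\ell/k)-2\log n-2\log k$.
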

\begin{proof}
Throughout this proof, we assume $\ell \geq 2k$.
Consider a partially assigned matrix $M$ of dimension $n \times \ell$ where every entry is an unknown. To prove this lemma by means of
Equalities (\ref{eqn:approxE}), (\ref{eqn:expectE1}), and (\ref{eqn:expectE4}), we will solve  for $\ell$ the following equivalent inequality of
Inequality (\ref{lem:M:lowerbound_one}):
\begin{eqnarray}
 &    & \binom{n}{2}{\cdot}(1 + 2(k_4-1)) - 1
\nonumber
\\
 & < &
\sum_{1 \leq \alpha < \beta \leq n}  \left\{1 - \myprob{\mycomplement{E_1(M,\alpha,\beta, \max\{k_1,k_4\})} }\right\}
\nonumber
\\
 &     &
+
\sum_{i = \ell - k_4 +1} ^{\ell-1} \sum_{1 \leq \alpha < \beta \leq n} \left<\left\{1-\myprob{\mycomplement{E_4(M,\alpha, \beta, k_4,i)}}\right\} +
                                                                            \left\{1-\myprob{\mycomplement{E_4(M,\beta, \alpha, k_4,i)}}\right\}\right>.
\label{eqn:first}
\end{eqnarray}
Simplifying the above inequality, we have the following equivalent inequality:
\begin{eqnarray*}
1  & >  & \sum_{1 \leq \alpha < \beta \leq n}  \myprob{\mycomplement{E_1(M,\alpha,\beta, \max\{k_1,k_4\})}}
\\
 &     &
+
\sum_{i = \ell - k_4 +1} ^{\ell-1} \sum_{1 \leq \alpha < \beta \leq n} \left< \myprob{\mycomplement{E_4(M,\alpha, \beta, k_4,i)}} +
                                                                            \myprob{\mycomplement{E_4(M,\beta, \alpha, k_4,i)}}\right>.
\end{eqnarray*}
Working out the probabilities in the above inequality, we have the following equivalent inequality:
\begin{equation*}
1 >
\binom{n}{2}\sum_{j=0}^{k -1}  \binom{\ell}{j}2^{-\ell}
+
\sum_{i=\ell-k_4+1}^{\ell-1} \binom{n}{2}{\cdot}2{\cdot}\left\{\sum_{j=0}^{k_4-(\ell-i)-1} \binom{i}{j}2^{-i}\right\}.
\end{equation*}
Simplifying the above inequality, we have the following equivalent inequality:
\begin{equation} \label{eqn:ell_one}
1 >
\binom{n}{2}\left(\sum_{j=0}^{k -1}  \binom{\ell}{j}2^{-\ell}
+
2{\cdot}\sum_{i=\ell-k_4+1}^{\ell-1}\sum_{j=0}^{k_4-(\ell-i)-1} \binom{i}{j}2^{-i}\right).
\end{equation}
Next, replacing $k_4$ by $k$ in Inequality (\ref{eqn:ell_one}) and moving  the terms on the right-hand side to the left-hand side, we obtain the following non-equivalent inequality:
\begin{equation} \label{eqn:ell_two}
1 -
\binom{n}{2}\left(\sum_{j=0}^{k -1}  \binom{\ell}{j}2^{-\ell}
+
2{\cdot}\sum_{i=\ell-k+1}^{\ell-1}\sum_{j=0}^{k-(\ell-i)-1} \binom{i}{j}2^{-i}\right)
> 0.
\end{equation}
Note that if $\ell$ satisfies Inequality (\ref{eqn:ell_two}), then $\ell$ satisfies Inequality (\ref{eqn:ell_one}) and thus Inequalities (\ref{eqn:first}) and (\ref{lem:M:lowerbound_one}).
Therefore, we will now solve Inequality (\ref{eqn:ell_two}) for $\ell$ as follows.

We will find a lower bound of the left-hand side of Inequality (\ref{eqn:ell_two}). For this purpose, we
 first bound the term in the rightmost summation of Inequality (\ref{eqn:ell_two}).
Since $\ell \geq 2k$ and $\ell-1 \geq i$, we have $i \geq 2{\cdot}(k - (\ell - i) )$ and thus
\begin{equation}
\binom{i}{j} \leq \binom{i}{k - (\ell  - i)}. \label{eqn:IJ}
\end{equation}
Furthermore,  for  all integers $s$ with $\ell - i - 1 \geq s \geq 0$, since
\[
\frac{i+(s+1)}{k - (\ell  - i) + (s + 1)} \geq 2,
\]
we have
\begin{eqnarray}
\binom{i + (s +1)}{k - (\ell  - i) + (s + 1) }{\cdot}\frac{1}{2} & = &\binom{i + s}{k - (\ell  - i) + s}
 \frac{i+(s+1)}{k - (\ell  - i) + (s + 1)}{\cdot}\frac{1}{2}
 \nonumber
 \\
  & \geq &\binom{i + s}{k - (\ell  - i) + s} \label{eqn:powerof2}
\end{eqnarray}
By applying Inequality (\ref{eqn:IJ}) once and applying Inequality (\ref{eqn:powerof2})  iteratively $\ell - i$ times,
we have
\begin{equation}\label{eqn:rightsum}
\binom{i}{j}2^{-i} \leq \binom{\ell}{k} 2^{-\ell}.
\end{equation}
This finishes the bounding of the term  in the rightmost summation of Inequality (\ref{eqn:ell_two}).

We next bound the term in the leftmost summation of Inequality (\ref{eqn:ell_two}).
Since $\ell \geq 2k$, we have
\begin{equation}\label{eqn:leftsum}
\binom{\ell}{j} \leq \binom{\ell}{k}.
\end{equation}

Plugging Inequalities (\ref{eqn:leftsum}) and (\ref{eqn:rightsum})  into the left-hand side of  Inequality (\ref{eqn:ell_two}), we have
\begin{eqnarray}
 &  &
 1 -
\binom{n}{2}\left(\sum_{j=0}^{k -1}  \binom{\ell}{j}2^{-\ell}
+
2{\cdot}\sum_{i=\ell-k+1}^{\ell-1}\sum_{j=0}^{k-(\ell-i)-1} \binom{i}{j}2^{-i}\right)
\nonumber
\\
& \geq & 1 - \binom{n}{2}\left(k{\cdot} \binom{\ell}{k}2^{-\ell}  + 2{\cdot}(k-1)k{\cdot}\binom{\ell}{k}2^{-\ell}\right)
\nonumber
\\
& \geq &1 -  n^2k^2{\cdot} \binom{\ell}{k}2^{-\ell}
\nonumber
\\
& \geq & 1 -  n^2k^2{\cdot} \left(\frac{e \ell}{k}\right)^k 2^{-\ell} \quad \left(\mbox{since }\binom{\ell}{k}\leq  \left(\frac{e \ell}{k}\right)^k\right).
\label{eqn:beforefinalbefore}
\end{eqnarray}

Now consider the following inequality:
\begin{equation}
1 -
n^2k^2{\cdot} \left(\frac{e \ell}{k}\right)^k 2^{-\ell}
> 0.
\label{eqn:beforefinal}
\end{equation}
Note that by Inequality (\ref{eqn:beforefinalbefore}),
  if $\ell$ satisfies Inequality (\ref{eqn:beforefinal}), then $\ell$ satisfies Inequalities (\ref{eqn:ell_two}),  (\ref{eqn:ell_one}),   (\ref{eqn:first}), and (\ref{lem:M:lowerbound_one}).
Consequently, the lemma follows from the fact that Inequality (\ref{eqn:beforefinal}) is equivalent to Inequality (\ref{eqn:ellparttwo}).
\end{proof}

Lemma  \ref{c_1 c_2 relationship} below solves Inequalities (\ref{eqn:ellpartone}) and (\ref{eqn:ellparttwo}) in Lemma \ref{inequality requirement} for a useful range of $\ell$.

\begin{lem} \label{c_1 c_2 relationship} Given $n \geq 2$, $k_1$, $k_4$, and $k = \max\{k_1,k_4\} \geq 1$,
if we set \
\[
c_1 = 2 + \delta\ \mbox{for any  real}\ {} \delta > 0
\]
and
\[
c_2 = \formulaforctwo > 0,
\]
then   $\ell^{\star} = \formulaforellstar \geq 2k$ satisfies
Inequalities (\ref{eqn:ellpartone}) and (\ref{eqn:ellparttwo}) in Lemma~\ref{inequality requirement}, and thus there exists a
$(k_1,k_4)$-distance matrix of dimension $n \times \ell^{\star}$.
(As examples,
when $\delta = 1$,    $\ell^{\star} = \lceil{3 \log{n} + 4.76 k}\rceil$; and
when $\delta = 0.1$, $\ell^{\star} = \lceil{2.1 \log{n} + 6.28 k}\rceil$.)
\end{lem}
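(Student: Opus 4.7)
The overall plan is to apply Lemma~\ref{inequality requirement} with $\ell = \ell^{\star}$: once Inequalities~(\ref{eqn:ellpartone}) and~(\ref{eqn:ellparttwo}) are verified for this $\ell$, the existence of the desired $(k_1,k_4)$-distance matrix follows immediately. Since $\ell^{\star} = \lceil c_1 \log n + c_2 k \rceil \ge c_1 \log n + c_2 k$, and both hypothesis inequalities are monotone non-decreasing in $\ell$ once $\ell \ge 2k$, it suffices to establish them with $\ell$ replaced by the (possibly non-integer) quantity $c_1 \log n + c_2 k$. The heart of the argument is Inequality~(\ref{eqn:ellparttwo}); Inequality~(\ref{eqn:ellpartone}) will reduce to checking $c_2 \ge 2$.

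For Inequality~(\ref{eqn:ellparttwo}), I would substitute $\ell = c_1 \log n + c_2 k$, divide through by $k$, and introduce the change of variable $u := c_1 (\log n)/k$, which is non-negative since $n \ge 2$ and $k \ge 1$. The inequality becomes
\[
\frac{c_1 - 2}{c_1}\,u + c_2 - \log e - \log(u + c_2) + \frac{2\log k}{k} \;>\; 0 .
\]
Because $\frac{2 \log k}{k} \ge 0$ for every integer $k \ge 1$, dropping that term yields a stronger statement; so it suffices to show that
\[
f(u) \;:=\; \frac{c_1-2}{c_1}\,u + c_2 - \log e - \log(u + c_2) \;>\; 0 \quad\text{for all } u \ge 0.
\]

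Next, I would observe that $f$ is strictly convex in $u$, because $f''(u) = \frac{1}{(u+c_2)^2 \ln 2} > 0$. Its unconstrained minimum is therefore attained at the unique critical point $u^{*}$ satisfying $u^{*} + c_2 = \frac{c_1}{(c_1-2)\ln 2}$. A direct computation, in which the term $\frac{c_1-2}{c_1}\cdot\frac{c_1}{(c_1-2)\ln 2} = \frac{1}{\ln 2}$ exactly cancels $\log e = \frac{1}{\ln 2}$, collapses $f(u^{*})$ to
\[
f(u^{*}) \;=\; \frac{2 c_2}{c_1} \;-\; \log\!\Bigl(\tfrac{c_1}{(c_1-2) \ln 2}\Bigr).
\]
By the very definition $c_2 = \formulaforctwo$, this equals exactly $2.5 - \frac{1}{\ln 2}$, which is strictly positive (about $1.057$). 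Since $f(u) \ge f(u^{*}) > 0$ for every real $u$, and in particular for $u \ge 0$, Inequality~(\ref{eqn:ellparttwo}) is established.

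For Inequality~(\ref{eqn:ellpartone}) and the ``as examples'' clauses, it suffices to show $c_2 \ge 2$, from which $c_1 \log n + c_2 k \ge c_2 k \ge 2k$ follows. This in turn is a one-variable calculus exercise on the explicit function $c_2(c_1) = \formulaforctwo$: its derivative vanishes precisely when $c_2 = \frac{c_1}{(c_1-2)\ln 2}$, a transcendental condition whose unique solution in $(2,\infty)$ lies near $c_1 \approx 2.9$, at which the minimum value of $c_2$ is comfortably above $2$. The example values $\ell^{\star} = \lceil 3 \log n + 4.76\,k\rceil$ and $\ell^{\star} = \lceil 2.1 \log n + 6.28\,k\rceil$ then follow by plugging $\delta = 1$ and $\delta = 0.1$ into the formula for $c_2$ and rounding. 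The main obstacle is the clean cancellation that produces $f(u^{*}) = 2.5 - 1/\ln 2$: the constants $2.5$ and $1/\ln 2$ inside the formula for $c_2$ are reverse-engineered precisely to make this collapse work, and one must be careful with base-$2$ versus natural logarithm conversions when verifying the cancellation.
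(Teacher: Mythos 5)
Your proof is correct and follows essentially the same route as the paper's: both reduce Inequality (\ref{eqn:ellparttwo}) to the positivity of a one\-/variable convex function of $(\log n)/k$ (your $f(u)$ is the paper's $f(x)$ reparametrized by $u=c_1x$ and shifted by the constant $2.5-\log e$), locate the minimum via the same critical-point equation $u^{*}+c_2=\frac{c_1}{(c_1-2)\ln 2}$, and observe that the definition of $c_2$ pins the minimum value down --- in your normalization, to exactly $2.5-1/\ln 2>0$. One caveat: you discharge the $\frac{2\log k}{k}$ term by noting it is non-negative, which is valid for Inequality (\ref{eqn:ellparttwo}) as literally printed (with $+2\log k$), but the proof of Lemma~\ref{inequality requirement} actually obtains the existence claim from Inequality (\ref{eqn:beforefinal}), which is equivalent to the stronger variant carrying $-2\log k$, and the paper's own chain of implications proves that stronger variant. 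Your computation still covers it, since your minimum value $2.5-1/\ln 2\approx 1.0573$ exceeds $\max_{k\ge 1}\frac{2\log k}{k}=\frac{2\log 3}{3}\approx 1.0566$ for \emph{integer} $k$ (the bound would fail for real $k$ near $e$), but that step, with its razor-thin margin at $k=3$, should be made explicit rather than obtained by dropping the term with the sign as printed.
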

\begin{proof}
Since  $c_2 \geq 2$ by calculus, we have $\ell^{\star} \geq 2k$, satisfying Inequality (\ref{eqn:ellpartone}).
Below we prove that $\ell^{\star}$ satisfies Inequality (\ref{eqn:ellparttwo}).
Consider the function $f(x)=(c_2 - 2.5) + (c_1 - 2) x - \log{(c_2 +
c_1 x )}$ and let $z^{\star}$ = $\frac{\log n}{k}$. Next observe that
\[
\begin{array}{rrcl}
& 0 & \leq &
f(z^{\star}) = (c_2 - 2.5) + (c_1 - 2){\cdot}\frac{\log n}{k}  -
\log{\left( c_2 + c_1{\cdot}\frac{\log n}{k} \right)}
\\
\Longrightarrow & 0 &\leq &
(c_2 - 2.5) k + (c_1 - 2) \log n -
k \log{\left( c_2 + \frac{c_1 \log n}{k} \right)}
\\
\Longrightarrow & 0 & < &
(c_2 k + c_1 \log n) - k \log{e} -
k \log{\left( \frac{c_2 k + c_1 \log n}{k} \right)} -
2 \log{n} - 2 \log{k}
\\
\Longrightarrow & 0 & < &
\ell^{\star} - k \log{e} - k \log{\left(
\frac{\ell^{\star}}{k} \right)} - 2 \log{n} - 2\log{k} .
\end{array}
\]
Thus if $f(z^{\star})\geq0$, then $\ell^{\star}$
satisfies Inequality (\ref{eqn:ellparttwo}). To prove  $f(z^{\star})\geq0$, we next solve the following equation:
\[
\begin{array}{rrcl}
 & 0 & = & f'(x)
\\
\Longleftrightarrow & 0 & = & (c_1 - 2) - \frac{c_1}{(c_2 + c_1 x)\ln{2}}
\\
\Longleftrightarrow &  c_2 + c_1 x  & = &    \frac{c_1}{(c_1 - 2)\ln{2}}
\\
\Longleftrightarrow & x &=& \frac{1}{(c_1 - 2)\ln{2}} - \frac{c_2}{c_1}.
\end{array}
\]
Continuing the proof for $f(z^{\star})\geq0$, observe that since $f''(x) = \frac{(c_1)^2}{(c_2 + c_1 x)^2\ln{2}} > 0$,
 the minimum functional value of $f(x)$ occurs at
 $x_{\min} = \frac{1}{(c_1 - 2)\ln{2}} - \frac{c_2}{c_1}$.
Now, to show $f(z^{\star}) \geq 0$, we only need to show $f(x_{\min})\geq 0$ by observing that the following four inequalities are all equivalent to
$f(x_{\min})\geq 0$.
\begin{eqnarray}
0  &\leq &
(c_2 - 2.5) + (c_1 - 2)\left( \frac{1}{(c_1 - 2)\ln{2}} - \frac{c_2}{c_1} \right) - \log{\left( c_2 + c_1 \left( \frac{1}{(c_1 - 2)\ln{2}} - \frac{c_2}{c_1} \right) \right)}
\nonumber
\\
0  &\leq &
(c_2 - 2.5) + \frac{1}{\ln{2}} - \frac{c_2(c_1 - 2)}{c_1} - \log{\left( c_2 + \frac{c_1}{(c_1 - 2)\ln{2}} - c_2 \right)}
\nonumber
\\
0  &\leq &
c_2 - 2.5 + \frac{1}{\ln{2}} - c_2 + \frac{2 c_2}{c_1} - \log{\left(\frac{c_1}{(c_1 - 2)\ln{2}}\right)}
\nonumber
\\
c_2 &\geq &\formulaforctwo
\label{eqn:c2c1ellstar}
\end{eqnarray}
The lemma follows from the fact that Inequality (\ref{eqn:c2c1ellstar}) follows from the definition of $c_2$.
\end{proof}

Lemma~\ref{lem:alg_one_correctness} below sets up the base case and the induction step of the iterative derandomization process of Algorithm \DetWords\ in Algorithm~\ref{alg:DetWords}.

\begin{lem} \label{lem:alg_one_correctness}
Given $n \geq 2$, $k_1$, $k_4$, and $k = \max\{k_1,k_4\} \geq 1$, if we set $\ell = \ell^{\star}$, then the following statements hold for Algorithm \DetWords.
\begin{enumerate}
\item \label{lem:alg_one_correctness:one}
(Base Case)
At the end of Line \ref{alg:step:base} of Algorithm \ref{alg:DetWords}, the matrix $M$ satisfies Inequality (\ref{lem:M:lowerbound_one}) in Lemma~\ref{lem:existencecode}, namely,
\begin{equation*}
\approxE(M,k_1,k_4) > \binom{n}{2}{\cdot}\left(1+ 2 (k_4-1)\right) - 1.
\end{equation*}
\item  \label{lem:alg_one_correctness:two}
(Induction Step)
For each of the $n\ell^{\star}$ iterations of the nested for-loops in Algorithm \ref{alg:DetWords},  at the end of Line \ref{alg:step:induction}, the matrix M also satisfies the above inequality.
\end{enumerate}
\end{lem}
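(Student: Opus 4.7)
The plan is to establish the two statements in sequence, using the earlier lemmas as black boxes.

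For Statement \ref{lem:alg_one_correctness:one} (the base case), I would observe that at the end of Line \ref{alg:step:base} the matrix $M$ is exactly of the form considered in Lemma \ref{inequality requirement}: every entry is an unknown. Thus it suffices to verify that Inequalities (\ref{eqn:ellpartone}) and (\ref{eqn:ellparttwo}) hold with $\ell = \ell^{\star}$, because Lemma \ref{inequality requirement} then yields Inequality (\ref{lem:M:lowerbound_one}) directly. But the hypotheses of Lemma \ref{c_1 c_2 relationship} are satisfied (we have $n \geq 2$, $k \geq 1$, $c_1 = 2+\delta > 2$, and $c_2 = \formulaforctwo$), so Lemma \ref{c_1 c_2 relationship} tells us precisely that $\ell^{\star}$ satisfies the two needed inequalities. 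Chaining the two lemmas gives the base case with essentially no additional work.

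For Statement \ref{lem:alg_one_correctness:two} (the induction step), the key tool is Lemma \ref{lem:linearity}. I would argue by induction on the number of iterations of the nested for-loops that have been completed. Let $M$ denote the partially assigned matrix at the start of some iteration processing position $(p,q)$, and let $M_0, M_1$ denote the two candidate updates considered at Lines \ref{alg:step:ezero} and \ref{alg:step:eone}. Lemma \ref{lem:linearity} gives
\[
\approxE(M,k_1,k_4) \;=\; \tfrac{1}{2}\approxE(M_0,k_1,k_4) + \tfrac{1}{2}\approxE(M_1,k_1,k_4),
\]
hence $\max\{\approxE(M_0,k_1,k_4),\approxE(M_1,k_1,k_4)\} \geq \approxE(M,k_1,k_4)$. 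Since the test at Line \ref{alg:step:if} selects the branch attaining that maximum, the updated matrix $M'$ at the end of Line \ref{alg:step:induction} satisfies $\approxE(M',k_1,k_4) \geq \approxE(M,k_1,k_4)$. Combined with the inductive hypothesis that $\approxE(M,k_1,k_4) > \binom{n}{2}(1+2(k_4-1)) - 1$ held at the start of the iteration, this gives the desired inequality at the end of the iteration. The base case of this induction is Statement \ref{lem:alg_one_correctness:one}, so the two parts of the lemma fit together cleanly.

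I do not expect any real obstacle here: both parts are essentially bookkeeping on top of Lemmas \ref{inequality requirement}, \ref{c_1 c_2 relationship}, and \ref{lem:linearity}. The only subtlety worth a careful sentence is making sure the monotonicity argument in the induction step is applied to the same $\approxE$ quantity being tracked in the invariant, which follows because the branch chosen by Line \ref{alg:step:if} is the one whose conditional expectation is at least the average of the two, and the average equals the pre-update expectation by Lemma \ref{lem:linearity}.
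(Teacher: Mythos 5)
Your proposal is correct and follows the same route as the paper: the base case is obtained by chaining Lemma~\ref{c_1 c_2 relationship} (to verify Inequalities (\ref{eqn:ellpartone}) and (\ref{eqn:ellparttwo}) for $\ell^{\star}$) with Lemma~\ref{inequality requirement}, and the induction step is the standard conditional-expectation argument from Lemma~\ref{lem:linearity}, noting that the branch chosen at Line~\ref{alg:step:if} attains at least the average. The paper's own proof is just a one-line citation of these same lemmas; your write-up supplies the (correct) details.
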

\begin{proof}\

Statement \ref{lem:alg_one_correctness:one} follows from  Lemmas~\ref{c_1 c_2 relationship},  \ref{inequality requirement}, and \ref{lem:existencecode}.

Statement \ref{lem:alg_one_correctness:two} follows from Statement \ref{lem:alg_one_correctness:one} and  Lemma \ref{lem:linearity}.
\end{proof}

Theorem~\ref{thm:DeRanCon} below summarizes the performance of Algorithm \DetWords.

\begin{thm}\label{thm:DeRanCon}
Given $n \geq 2$, $k_1$, $k_4$, and $k = \max\{k_1,k_4\} \geq 1$, if we set $\ell = \ell^{\star}$, then the following statements hold for Algorithm \DetWords.
\begin{enumerate}
\item \label{thm:DeRanCon:correctness}
Algorithm \ref{alg:DetWords} outputs a code $\wordset{1,4}(n,\ell^{\star},k_1,k_4)$ of $n$ binary words (i.e., DNA words) of length $\ell^{\star}$ that satisfies $C_1(k_1)$ and $C_4(k_4)$.

\item \label{thm:DeRanCon:optimal}
The word length $\ell^{\star}$ is within a constant multiplicative factor of the smallest possible word length for a code of $n$ binary words of equal length that satisfies $C_1(k_1)$ and $C_4(k_4)$.
\item   \label{thm:DeRanCon:time}
Algorithm~\ref{alg:DetWords} runs in $\algonetime$ time.
\end{enumerate}
\end{thm}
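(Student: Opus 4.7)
I would organize the proof as three sub-proofs, one for each statement of Theorem~\ref{thm:DeRanCon}, each building on the lemmas already established in Section~\ref{subsec:expectation}.

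For Statement~\ref{thm:DeRanCon:correctness}, my plan is to show by induction on the order in which entries are filled that Algorithm~\ref{alg:DetWords} maintains the invariant $\approxE(M,k_1,k_4) > \binom{n}{2}\cdot(1+2(k_4-1))-1$ after every update. The base case is exactly Statement~\ref{lem:alg_one_correctness:one} of Lemma~\ref{lem:alg_one_correctness}. For the inductive step, Lemma~\ref{lem:linearity} expresses $\approxE(M,k_1,k_4)$ as the arithmetic mean
\[
\approxE(M,k_1,k_4) = \tfrac{1}{2}\approxE(M_0,k_1,k_4) + \tfrac{1}{2}\approxE(M_1,k_1,k_4),
\]
so $\max\{\myE_0,\myE_1\}\ge\approxE(M,k_1,k_4)$; the greedy choice at Line~\ref{alg:step:if} picks whichever of $0$ or $1$ attains this maximum, preserving the invariant. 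After all $n\ell^\star$ updates, $M$ is fully binary and still satisfies the invariant, so by Lemma~\ref{lem:existencecode} its rows form a $(k_1,k_4)$-distance matrix. Replacing $0$ and $1$ with two distinct DNA characters then yields the code $\wordset{1,4}$ claimed in the statement.

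For Statement~\ref{thm:DeRanCon:optimal}, the lower bound $\ell \ge \max\{\log n, k\}$ on any valid binary code follows from Lemma~\ref{lem:C1_C6}(\ref{lem:C1_C6:minlength}) applied to $C_1(k_1)$ and---using Lemma~\ref{lem:C1_C6}(\ref{lem:C1_C6:14}) to pass from $C_4(k_4)$ to $C_1(k_4)$---also to $C_1(k_4)$. Since $c_1$ and $c_2$ are absolute constants (depending only on $\delta$), one gets $\ell^\star = \lceil c_1\log n + c_2 k\rceil \le (c_1+c_2)\max\{\log n,k\}+1$, which is within a constant multiplicative factor of the lower bound.

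For Statement~\ref{thm:DeRanCon:time}, the nested loops execute $n\ell^\star$ times, so I need to realize each iteration in $O(n(\ell^\star)^2)$ time. The main obstacle is that a literal reading of the algorithm, in which $\myE_0$ and $\myE_1$ are recomputed from scratch using Equations~(\ref{eqn:approxE})--(\ref{eqn:expectE4}) together with Lemma~\ref{lem:time:prob}, costs $O(n^2 k(\ell+k))$ per iteration, giving $O(n^3(\ell^\star)^3)$ overall---a factor of $n$ too slow. My plan to overcome this is to observe that flipping entry $(p,q)$ changes only those summands of $\expectE_1(M,\max\{k_1,k_4\})$ and $\sum_i\expectE_4(M,k_4,i)$ in which $q\in\{\alpha,\beta\}$; there are $O(n)$ such summands for $\expectE_1$ and $O(nk_4)$ for the sum over $i$. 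By maintaining a running total of the summands not involving the current row $q$ (refreshed whenever the inner loop advances to a new $q$) and recomputing only the $O(nk_4)$ affected probabilities via Lemma~\ref{lem:time:prob} in $O(\ell+k)$ time each, I can determine the larger of $\myE_0$ and $\myE_1$ in $O(nk(\ell+k)) = O(n(\ell^\star)^2)$ time per entry. Multiplying by the $n\ell^\star$ entries and using $k\le\ell^\star$ yields the desired $O(n^2(\ell^\star)^3)$ total runtime.
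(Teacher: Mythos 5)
Your proposal is correct and follows essentially the same route as the paper: the correctness invariant is exactly the paper's Lemma~\ref{lem:alg_one_correctness} combined with Lemmas~\ref{lem:linearity} and \ref{lem:existencecode}, the optimality argument is the same appeal to Lemma~\ref{lem:C1_C6}(\ref{lem:C1_C6:minlength}) with $c_1,c_2$ constant, and your incremental-update scheme for the running time (recomputing only the $O(nk_4)$ probabilities involving the current row, each in $O(\ell+k)$ time) is precisely how the paper achieves $\algonetime$.
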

\begin{proof}\

Statement  \ref{thm:DeRanCon:correctness}. This statement follows from Lemmas~\ref{lem:alg_one_correctness} and \ref{lem:existencecode} and the fact  that the matrix output by Algorithm \ref{alg:DetWords} is a binary matrix (i.e., has no unknowns).

Statement \ref{thm:DeRanCon:optimal}. This statement follows from the definition of $\ell^{\star}$ and Lemma~\ref{lem:C1_C6}(\ref{lem:C1_C6:minlength}).

Statement \ref{thm:DeRanCon:time}.  We first analyze the running times of   steps  in  Algorithm \ref{alg:DetWords} as follows.
\begin{enumerate}
\item
Line \ref{alg:step:base} takes $O(n\ell^{\star})$ time to generate the initial $M$.
\item
Then for each of the $n\ell^{\star}$ iterations of the nested for-loops to compute $\myE_0$,
Line~\ref{alg:step:ezero}  does not explicitly compute $M_0$. Instead, Algorithm \ref{alg:DetWords} will first compute $\approxE(M, k_1,k_4)$ for the initial $M$ where every entry is an unknown. This initialization  task takes $O(n^2(\ell^{\star})^2)$ time by Lemma~\ref{lem:time:prob} and Equalities (\ref{eqn:approxE}), (\ref{eqn:expectE1}), and  (\ref{eqn:expectE4}).
Then, Line~\ref{alg:step:ezero} will update $\myE_0$ incrementally by  recomputing
\begin{equation}\label{eqn:threeprobs}
\myprob{\mycomplement{E_1(M,\alpha,\beta, \max\{k_1,k_4\})}},
\myprob{\mycomplement{E_4(M,\alpha, \beta, k_4,i)}},\
{}
\text{and}\
{}
\myprob{\mycomplement{E_4(M,\beta, \alpha, k_4,i)}}
\end{equation}
 for $\alpha = q$, all $ \beta \neq q$ with $n \geq \beta \geq 1$, and all $i$ with $\ell^{\star} - 1 \geq i \geq \ell^{\star} - k_4 + 1$.
 By Lemma~\ref{lem:time:prob}, these recomputations and thus the incremental updating of $\myE_0$
  take $O(n(\ell^{\star})^2)$ time in total per loop iteration. In sum, the total running time of updating $\myE_0$ over the $n \ell^{\star}$ loop iterations is
  $O(n^2(\ell^{\star})^3)$.
 \item
Once $\myE_0$ is updated, Algorithm \ref{alg:DetWords} will update $\myE_1$ at Line  \ref{alg:step:eone}  in $O(1)$ time per loop iteration using the linearity equality in Lemma~\ref{lem:linearity}.
\item
Once $\myE_0$ and $\myE_1$  are updated, Algorithm \ref{alg:DetWords}  compares them at Line \ref{alg:step:if} and then updates $M$ accordingly at Line \ref{alg:step:updateMzero} or \ref{alg:step:updateMone}  in $O(1)$ time  per loop iteration.
\item
In sum, the total running time of the $n\ell^{\star}$ iterations of the nested for-loops is dominated by the total running time of updating $\myE_0$ over the $n \ell^{\star}$ loop iterations and thus is $O(n^2(\ell^{\star})^3) $ time.
\item
Outputting the final matrix $M$ at Line \ref{alg:step:output} takes $O(n\ell^{\star})$ time.
\end{enumerate}

In summary, the time complexity of Algorithm \ref{alg:DetWords} is dominated by the total running time of the nested for-loops and  thus is $\algonetime = O(n^2(k +\log{n})^3)$.
\end{proof}

\paragraph{Technical Remarks} In  the proof of Statement \ref{thm:DeRanCon:time} of Theorem~\ref{thm:DeRanCon}, the incremental updating of $\myE_0$ at Line~\ref{alg:step:ezero}  can be made somewhat more efficient by modifying the proof of Lemma \ref{lem:time:prob} with more elaborate but still straightforward algorithmic details.
Specifically, the right probability in Expression~(\ref{eqn:threeprobs}) can be updated in $O(k)$ time instead of  $O(\ell)$ time. Also,
each of the middle and right probabilities in Expression~(\ref{eqn:threeprobs}) can be updated in  $O(k_4 - (\ell^{\star} - i)) = O(k)$ time instead of
$O(\ell)$ time. Thus the total time for incrementally updating  $\myE_0$ at Line~\ref{alg:step:ezero}  is $O(n\ell^{\star}k)$ per loop iteration, which is somewhat less than $O(n(\ell^{\star})^2)$.
For the sake of brevity, we omit the details of these improvements in this paper.

\subsection{Improving Word Length $\ell^{\star}$ Computationally and with a Larger Alphabet}\label{subsec:improve_ell}

The word length $\ell^{\star}$ is obtained analytically. In order to make the analysis of $\ell^{\star}$ manageable, we sacrifice the quality of $\ell^{\star}$.
In this section, we discuss two improvements of $\ell^{\star}$ by computation.

\paragraph{Improving Word Length $\ell^{\star}$ Computationally}

Lemma~\ref{lem:smallestELL} computationally improves the word length $\ell^{\star}$  by means of binary search.

\begin{lem}\label{lem:time:countbasecase}
Let $M$ be a partially assigned matrix of dimension $n \times \ell$ where every entry is an unknown.
Given $n, k_1, k_4, k = \max\{k_1,k_4\}$, and $\ell$ as the input, $\approxE(M,k_1,k_4)$ can be computed in $O(k + (k_4)^2 + \log{\ell})$ time.
\end{lem}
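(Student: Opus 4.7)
The plan is to exploit the symmetry arising from the fact that every entry of $M$ is an unknown, which makes the probabilities $\myprob{\mycomplement{E_1(M,\alpha,\beta,k)}}$ and $\myprob{\mycomplement{E_4(M,\alpha,\beta,k_4,i)}}$ independent of the row indices $\alpha,\beta$. Specializing Equality (\ref{eqn:evaluate_prob}) to the case $s=0,t=0$, and noting that $E_4$ operates on length-$i$ substrings, I obtain
\[
p_1 \;=\; \sum_{j=0}^{k-1}\binom{\ell}{j} 2^{-\ell}, \qquad
p_4(i) \;=\; \sum_{j=0}^{k_4-(\ell-i)-1}\binom{i}{j} 2^{-i}.
\]
Combining Equalities (\ref{eqn:approxE}), (\ref{eqn:expectE1}), and (\ref{eqn:expectE4}), this reduces the task to evaluating
\[
\approxE(M,k_1,k_4) \;=\; \binom{n}{2}(1-p_1) \;+\; 2\binom{n}{2}\sum_{i=\ell-k_4+1}^{\ell-1}\bigl(1-p_4(i)\bigr).
\]

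I would compute this in three stages. First, compute $2^{-\ell}$ by repeated squaring in $O(\log\ell)$ time; this is the only place where an $\ell$-dependent cost appears. Second, compute $p_1$ by iteratively maintaining $\binom{\ell}{j} 2^{-\ell}$ via the multiplicative recurrence $\binom{\ell}{j+1}=\binom{\ell}{j}(\ell-j)/(j+1)$, so that each of the $k$ terms is produced in $O(1)$ additional arithmetic operations, giving $O(k)$ time. Third, for the outer sum in $i$, maintain $2^{-i}$ incrementally (multiply the running value by $2$ as $i$ decreases from $\ell$), and for each fixed $i$ compute $p_4(i)$ using the analogous recurrence $\binom{i}{j+1}=\binom{i}{j}(i-j)/(j+1)$ in time proportional to the number of terms, namely $k_4-(\ell-i)$. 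Summing the number of terms over $i=\ell-k_4+1,\dots,\ell-1$ yields
\[
\sum_{m=1}^{k_4-1} m \;=\; O\bigl(k_4^{\,2}\bigr).
\]
Adding these three contributions gives the claimed bound $O(k+(k_4)^2+\log\ell)$.

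The main obstacle, if any, is confirming that the work outside these three stages is negligible and that arithmetic on the quantities involved is $O(1)$ per operation. The expectations are rational numbers whose denominators divide $2^{\ell}$ and whose numerators are at most $\binom{n}{2}\cdot k_4\cdot 2^{\ell}$, so with a standard unit-cost arithmetic model (or with fixed-precision floating-point of sufficient precision to distinguish $\myE_0$ from $\myE_1$), each update costs $O(1)$. I would also note in passing the one subtlety that $k_4$ may in principle exceed $k=\max\{k_1,k_4\}$ only if $k_1<k_4$, in which case $k=k_4$ and the first-stage bound $O(k)$ already absorbs the small piece from $p_1$; no separate edge case is required.
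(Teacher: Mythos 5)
Your proposal is correct and follows essentially the same route as the paper: it derives the same closed-form expression for $\approxE(M,k_1,k_4)$ on the all-unknown matrix (your $\binom{n}{2}(1-p_1)+2\binom{n}{2}\sum_i(1-p_4(i))$ is algebraically identical to the paper's $\binom{n}{2}(1+2(k_4-1))$ minus the failure-probability sums) and then evaluates it term by term. The only difference is that you spell out the evaluation details (repeated squaring for $2^{-\ell}$, multiplicative recurrences for the binomial coefficients) that the paper dismisses as ``elementary.''
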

\begin{proof}
By Equalities (\ref{eqn:approxE}), (\ref{eqn:expectE1}), and (\ref{eqn:expectE4}) and a similar analysis to the proof of Lemma~\ref{inequality requirement}, we have
\begin{eqnarray*}
 \approxE(M,k_1,k_4) & = & \binom{n}{2}{\cdot}\left(1+ 2 (k_4-1)\right)
 \\
 & &
 -
\binom{n}{2}\left(\sum_{j=0}^{k -1}  \binom{\ell}{j}2^{-\ell}
+
2{\cdot}\sum_{i=\ell-k_4+1}^{\ell-1}\sum_{j=0}^{k_4-(\ell-i)-1} \binom{i}{j}2^{-i}\right).
\end{eqnarray*}
It is elementary to evaluate the right-hand side of this equality in $O(k + (k_4)^2 + \log{\ell})$ time.
\end{proof}

\begin{lem}\label{lem:smallestELL}
Given $n$, $k_1$, $k_4$, and $k = \max\{k_1,k_4\}$ as the input, it takes $O((k+k_4^2+\log(\log n + k))\log(\log n + k))$ time to compute the smallest $\ell$ that satisfies
Inequality (\ref{lem:M:lowerbound_one}) in Lemma~\ref{lem:existencecode}, namely,
\begin{equation*}
\approxE(M,k_1,k_4) > \binom{n}{2}{\cdot}\left(1+ 2 (k_4-1)\right) - 1.
\end{equation*}
\end{lem}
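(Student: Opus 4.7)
The plan is to binary search for the smallest $\ell$ satisfying Inequality (\ref{lem:M:lowerbound_one}) within a short range. By Lemma~\ref{c_1 c_2 relationship} (for example with $\delta = 1$), the length $\ell^{\star} = \lceil 3\log n + 4.76 k\rceil = O(\log n + k)$ is valid, so the smallest valid $\ell$ lies in the interval $[1, \ell^{\star}]$, whose width is $O(\log n + k)$. I would therefore run a standard binary search over this interval, at each step using Lemma~\ref{lem:time:countbasecase} to decide the predicate, and let the output be the left endpoint at termination.

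To justify binary search, the first step is to establish that the predicate ``$\ell$ satisfies Inequality~(\ref{lem:M:lowerbound_one})'' is monotone in $\ell$. Using the closed form exhibited in the proof of Lemma~\ref{lem:time:countbasecase}, the predicate is equivalent to
\[
\binom{n}{2}\left(\sum_{j=0}^{k-1}\binom{\ell}{j}2^{-\ell}
+ 2\sum_{i=\ell-k_4+1}^{\ell-1}\sum_{j=0}^{k_4-(\ell-i)-1}\binom{i}{j}2^{-i}\right) < 1.
\]
The first sum equals $\Pr[\mathrm{Binom}(\ell,1/2) \leq k-1]$; since $\mathrm{Binom}(\ell+1,1/2)$ stochastically dominates $\mathrm{Binom}(\ell,1/2)$, this tail probability is weakly decreasing in $\ell$. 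After the change of variable $s = \ell - i$, the second sum becomes $2\sum_{s=1}^{k_4-1}\Pr[\mathrm{Binom}(\ell-s,1/2)\leq k_4-s-1]$, and the same stochastic dominance argument applied to each term shows it too is weakly decreasing in $\ell$. Hence the left-hand side is non-increasing in $\ell$, so once the predicate becomes true it stays true, which is exactly what binary search requires.

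Putting the pieces together, binary search over $[1,\ell^{\star}]$ performs $O(\log \ell^{\star}) = O(\log(\log n + k))$ predicate evaluations. By Lemma~\ref{lem:time:countbasecase}, each evaluation at a query length $\ell \leq \ell^{\star}$ costs $O(k + k_4^2 + \log \ell) = O(k + k_4^2 + \log(\log n + k))$ time. Multiplying the two yields the desired bound $O((k + k_4^2 + \log(\log n + k))\log(\log n + k))$.

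The only non-routine step is the monotonicity argument; everything else is bookkeeping about the search range, a direct invocation of Lemmas~\ref{c_1 c_2 relationship} and~\ref{lem:time:countbasecase}, and a standard binary search. Even the monotonicity reduces to the elementary stochastic dominance of $\mathrm{Binom}(\ell+1,1/2)$ over $\mathrm{Binom}(\ell,1/2)$, so I do not anticipate any genuine obstacle.
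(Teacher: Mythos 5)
Your proposal is correct and follows essentially the same route as the paper: binary search over $[1,\ell^{\star}]$ with $\ell^{\star}$ from Lemma~\ref{c_1 c_2 relationship} as the upper bound, evaluating the predicate via the closed form of Lemma~\ref{lem:time:countbasecase}. The only difference is that you explicitly verify the monotonicity of the predicate via stochastic dominance of binomial tails, a step the paper's proof leaves implicit but which is indeed needed to justify the binary search.
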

\begin{proof}
 By Lemmas~\ref{c_1 c_2 relationship},  \ref{inequality requirement}, and \ref{lem:existencecode}, we use $\ell^{\star}$ as the initial upper bound  for the desired smallest $\ell$. We then use
 binary search and Lemma~\ref{lem:time:countbasecase}  to find this smallest desired $\ell$. This search process takes $O(\log \ell^{\star})$ applications of Lemma~\ref{lem:time:countbasecase} and thus runs in $O((k + (k_4)^2 + \log{\ell^{\star}})\log \ell^{\star})$ time, which is $O((k+k_4^2+\log(\log n + k))\log(\log n + k))$ time .
\end{proof}

\paragraph{Further Improving Word Length $\ell^{\star}$ with a Larger Alphabet}
The smallest $\ell$ obtained by Lemma~\ref{lem:smallestELL} can be further improved by replacing the binary alphabet with the DNA alphabet in the definition of a partially assigned matrix and modifying Algorithm~\ref{alg:DetWords} accordingly. This alphabet change will shorten the smallest $\ell$  obtained by Lemma~\ref{lem:smallestELL} because  it is intuitive to show that
a random DNA matrix of dimension $n\times\ell$ has a larger probability to be a $(k_1,k_4)$-distance matrix than a random binary matrix of the same dimension.
The analysis of the performance of such a modified Algorithm~\ref{alg:DetWords} remains essentially the same, and the smallest desired $\ell$ to input into the modified Algorithm~\ref{alg:DetWords} can be computed in the same manner and time complexity as by Lemma~\ref{lem:smallestELL}. For the sake of brevity, we omit the details of this modification.

\section{Designing Words for More Constraints} \label{sec:generalizations}

In this section, we give deterministic polynomial-time algorithms to construct short DNA words for the following
subsets of the constraints $C_1,\dots,C_9$ based on Algorithm~\ref{alg:DetWords}:
\begin{itemize}
\item $C_1$ through $C_6$ (see Theorem \ref{thm:1-6} in Section~\ref{subsec:1-6})
\item $C_1$ through $C_7$ (see Theorem \ref{thm:1-7} in Section~\ref{subsec:1-7});
\item $C_1$, $C_2$, $C_3$, $C_7$, and $C_8$ (see Theorem \ref{thm:12378} in Section~\ref{subsec:12378});
\item $C_1$ through $C_8$ (see Theorems \ref{thm:1-8-A} and \ref{thm:1-8-B} in Section~\ref{subsec:1-8}); and
\item $C_1$ through $C_6$, and $C_9$ (see Theorem \ref{thm:1-69} in Section~\ref{subsec:1-69}).
\end{itemize}

For the word constructions in this section, we will use Lemma~\ref{lem:C1_C6}(\ref{lem:C1_C6:1to6}) to simplify the constructions, and it follows from Lemma~\ref{lem:C1_C6}(\ref{lem:C1_C6:minlength})  that the simplifications do not sacrifice the word length by more than a constant multiplicative factor.

To implement the simplifications, we first clarify the notation $\ell^{\star}$ by attaching the parameters $k_1$ and $k_4$ to it as follows.

Given $n \geq 2$, $k_1\geq 1$ and $k_4 \geq 1$,
let
\begin{eqnarray*}
\delta & = & \mbox{any positive real},
\\
c_1 & = &  2 + \delta,
\\
c_2 & = & \formulaforctwo,\
{}
\mbox{and}
\\
\ell^{\star}(k_1,k_4) & = & \formulaforellstartwo{k_1}{k_4}.
\end{eqnarray*}

\subsection{Designing Words for Constraints $C_1$ through $C_6$} \label{subsec:1-6}

Lemma~\ref{lem:14to123456} below shows how to transform a binary code that satisfies $C_1(k_1)$ and $C_4(k_4)$ to a DNA code that satisfies $C_1(k_1)$ through $C_6(k_6)$.

\begin{lem}\label{lem:14to123456}\
\begin{enumerate}
\item \label{lem:14to123456:conversion}
Let $\wordsetfont{B}$ be a  code of $n$ distinct binary words of equal length $\ell_{1,4}$ that satisfies $C_1(k_1)$ and $C_4(k_4)$.
Given $\wordsetfont{B}$,  $k_2$, $k_3$, $k_5$, and $k_6$ as the input,
we can deterministically construct a code $\wordset{1\sim 6}$ of $n$ distinct DNA words of equal length that
satisfies $C_1(k_1)$, $C_2(k_2)$, $C_3(k_3)$, $C_4(k_4)$, $C_5(k_5)$, and $C_6(k_6)$.
\item \label{lem:14to123456:length}
The length of the words in $\wordset{1\sim 6}$ is $\ell_{1,4}+\max\left\{k_2,k_3,k_5,k_6\right\}$.
\item \label{lem:14to123456:time}
The construction takes $O(n(\ell_{1,4}+\max\left\{k_2,k_3,k_5,k_6\right\}))$ time.
\end{enumerate}
\end{lem}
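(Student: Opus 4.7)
The plan is to construct $\wordset{1\sim6}$ by applying a two-step transformation to $\wordsetfont{B}$: first, encode each binary word using a two-letter DNA sub-alphabet whose complement characters form a disjoint sub-alphabet; second, append a fixed padding of length $m := \max\{k_2,k_3,k_5,k_6\}$ to every word. Concretely, define $\phi(0)=\mathrm{A}$ and $\phi(1)=\mathrm{C}$, and for each $W \in \wordsetfont{B}$ set $\tilde W := \phi(W)\cdot \mathrm{A}^m$; let $\wordset{1\sim6} := \{\tilde W : W \in \wordsetfont{B}\}$, which consists of $n$ distinct DNA words of length $\ell' := \ell_{1,4}+m$. The crucial observation is that every word in $\wordset{1\sim6}$ uses only characters from $\{\mathrm{A},\mathrm{C}\}$, while the reverse complement of any such word (and of any substring of any such word) uses only characters from $\{\mathrm{T},\mathrm{G}\}$, and these two alphabets are disjoint.

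With this construction in hand, I would next verify the six constraints. Constraint $C_1(k_1)$ is immediate, since identical padding preserves pairwise Hamming distances: $H(\tilde Y,\tilde X) = H(Y,X) \ge k_1$. For each of $C_2, C_3, C_5, C_6$, the comparison is between a word or a substring of a word in $\wordset{1\sim6}$ and the reverse complement of a word or a substring of a word in $\wordset{1\sim6}$; by the disjoint-alphabet property the two sides differ at every position, so the Hamming distance equals the common length of the two sides. For the non-shifted $C_2, C_3$ this length is $\ell' \ge m \ge \max\{k_2,k_3\}$; for the shifted $C_5, C_6$ the length is $i$, and the required bound $i \ge k_p - (\ell'-i)$ reduces to $\ell' \ge k_p$, which again holds since $\ell' \ge m \ge \max\{k_5,k_6\}$.

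The main obstacle is verifying $C_4(k_4)$ for the padded code, because the shifted alignment now mixes the encoded body with the $\mathrm{A}^m$ tail. To handle it, I would set $s := \ell'-i \in [0,k_4]$ and split the $i$-position comparison of $\tilde Y[1\cdots i]$ against $\tilde X[(s+1)\cdots \ell']$ into regions according to whether each aligned coordinate sits inside the body or inside the tail of each side. The coordinates at which both sides fall inside the body correspond exactly to comparing $Y[1\cdots \ell_{1,4}-s]$ with $X[(s+1)\cdots\ell_{1,4}]$; this is the case $i' = \ell_{1,4}-s$ of the hypothesis that $\wordsetfont{B}$ satisfies $C_4(k_4)$, so it contributes at least $k_4 - s = k_4 - (\ell'-i)$ to the Hamming distance. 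The remaining coordinates contribute a non-negative amount (either $\mathrm{A}$-vs-$\mathrm{A}$ matches in both tails, or $\{\mathrm{A},\mathrm{C}\}$-vs-$\mathrm{A}$ comparisons in the mixed region), so the total Hamming distance meets the required lower bound. A trivial boundary case is $s = \ell_{1,4}$ (which can arise only when $k_4 = \ell_{1,4}$), in which $k_4 - s \le 0$ and the constraint is vacuous. Finally, building each $\tilde W$ takes $O(\ell')$ time and there are $n$ words, which gives the claimed $O(n(\ell_{1,4}+m))$ bound and completes all three parts of the lemma.
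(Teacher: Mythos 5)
Your proposal is correct, and while it shares the paper's overall framework (encode the binary words into a two-letter DNA sub-alphabet, pad every word with a common block of length $\max\{k_2,k_3,k_5,k_6\}$, then check the six constraints), the mechanism you use for the reverse-complement constraints is genuinely different. The paper encodes $0\mapsto\mathrm{A}$, $1\mapsto\mathrm{T}$ and \emph{prepends} a block of C's; since $\mathrm{A}$ and $\mathrm{T}$ are complements of each other, body-versus-body positions in $Y$ and $X^{RC}$ can coincide, so the paper must rely on the C-block (whose reverse complement consists of G's, never C) to supply the required $k$ mismatches for $C_2,C_3,C_5,C_6$. You instead encode into $\{\mathrm{A},\mathrm{C}\}$, whose complement alphabet $\{\mathrm{T},\mathrm{G}\}$ is disjoint from it, so \emph{every} position of $Y$ against $X^{RC}$ (or of any prefix/suffix against its reverse complement) is a mismatch; your padding is then needed only to guarantee $\ell'\ge\max\{k_2,k_3,k_5,k_6\}$, not to create mismatches. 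This buys a cleaner, essentially computation-free verification of $C_2,C_3,C_5,C_6$. Conversely, your explicit case analysis of $C_4$ under right-padding (splitting the aligned positions into body--body, body--tail, and tail--tail regions and reducing the body--body part to case $i'=\ell_{1,4}-s$ of $C_4$ for $\wordsetfont{B}$) is more careful than the paper, which merely asserts that $C_1$ and $C_4$ carry over from $\wordsetfont{B}$; the analogous shift analysis is in fact needed for the paper's left-padded construction too, so your treatment fills in a detail the paper glosses over. Both constructions achieve the same length and time bounds.
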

\begin{proof}
Let $k=\max\left\{k_2,k_3,k_5,k_6\right\}$.
We construct $\wordset{1\sim 6}$ with the following steps:
\begin{enumerate}
\item Convert the binary code $\wordsetfont{B}$ into a DNA code by changing  0 to the character A and changing $1$  to the character T in each word. Let $\wordsetfont{D}$ denote the set of the new words.
\item  Append $k$ copies  of the character C at the left end of each word in $\wordsetfont{D}$. Let $\wordset{1 \sim 6}$ be the set of the new words.
\end{enumerate}

It is clear that this construction takes  $O(n(\ell_{1,4}+k))$ time, proving Statement~\ref{lem:14to123456:time}.
It is also clear that the words in $\wordset{1 \sim 6}$ have equal length $\ell_{1,4}+k$, proving Statement~\ref{lem:14to123456:length}.
To prove Statement~\ref{lem:14to123456:conversion}, we observe that the two construction steps are deterministic and $\wordset{1\sim 6}$ consists of $n$ distinct DNA words of equal length.
Below we verify that   $\wordset{1\sim 6}$ satisfies $C_1(k_1)$, $C_2(k_2)$, $C_3(k_3)$, $C_4(k_4)$, $C_5(k_5)$, and $C_6(k_6)$.

\begin{itemize}
\item
That $\wordset{1\sim 6}$  satisfies $C_1(k_1)$ and $C_4(k_4)$ follows directly from the assumption that $\wordsetfont{B}$ satisfies these two constraints.
\item
To check  $C_2(k_2)$ and $C_3(k_3)$, consider two words $Y$
and $X$ in  $\wordset{1\sim 6}$ ($Y \neq X$ for $C_2(k_2)$, but $Y = X$ for $C_3(k_3)$).  Since the leftmost $k$ characters in
$Y$ are all C. For these two constraints, these C's are compared with A, T, or G in $X^{RC}$. Therefore,
the Hamming distance between $Y$ and $X^{RC}$ is at least $k$. Since $k \geq k_2$ and  $k \geq k_3$, constraints $C_2(k_2)$ and $C_3(k_3)$ hold for
$\wordset{1\sim 6}$.
\item
To check $C_5(k_5)$ and $C_6(k_6)$, since $k \geq k_5$ and $k \geq k_6$, by Lemma~\ref{lem:C1_C6}(\ref{lem:C1_C6:1to6}) we only need to check $C_5(k)$ and $C_6(k)$.
Consider two words $Y$ and $X$ in $\wordset{1 \sim 6}$ ($Y \neq X$ for $C_5(k)$, but $Y = X$ for $C_6(k)$).
Let $\ell$ denote $\ell_{1,4}+k$. Also consider $i$  where $\ell \geq i \geq \ell - k$. Let $j = k - (\ell  - i)$. From the definitions of the constraints $C_1(k_1)$ through $C_6(k_6)$, we have $\ell \geq k$. Thus, $i \geq j$ and $Y[1 \cdots i]$ has at least $j$ characters, The  leftmost $j$ characters of $Y[1 \cdots i]$ are all $C$. For these two constraints, these C's are  compared with characters $A$, $T$, or $G$ in $(X[1\cdots i])^{RC}$. Therefore the Hamming distance between $Y[1 \cdots i ]$ and $(X[1 \cdots i])^{RC}$ is at least $j = k - (\ell - i)$, as required by $C_5(k)$ and $C_6(k)$. By a symmetrical argument for the right ends of $(X[(\ell-i+1) \cdots \ell])^{RC}$ and  $Y[(\ell-i+1) \cdots \ell]$, the Hamming distance between $Y[(\ell-i+1) \cdots \ell]$ and $(X[(\ell-i+1) \cdots \ell])^{RC}$ is at least
$k-(\ell-i)$, as required by $C_5(k)$ and $C_6(k)$.
\end{itemize}
\end{proof}

Theorem~\ref{thm:1-6} below uses Theorem~\ref{thm:DeRanCon} and Lemma~\ref{lem:14to123456} to show how to construct a DNA code that satisfies $C_1(k_1)$ through $C_6(k_6)$.

\begin{thm}\label{thm:1-6}\
\begin{enumerate}
\item \label{thm:14to123456:conversion}
Given $n \geq 2$, $k_1 \geq 1$, $k_2$, $k_3$, $k_4$, $k_5$, and $k_6$ as the input,
we can deterministically construct a code $\wordset{1\sim 6}$ of $n$ distinct DNA words of equal length that
satisfies $C_1(k_1)$, $C_2(k_2)$, $C_3(k_3)$, $C_4(k_4)$, $C_5(k_5)$, and $C_6(k_6)$.
\item \label{thm:14to123456:length}
The length of the words in $\wordset{1\sim 6}$ is $\ell^{\star}(k_1,k_4) + \max\{k_2, k_3, k_5, k_6\}$.
\item \label{thm:14to123456:time}
The construction takes $T_{1,4}(n,\ell^{\star}(k_1,k_4), k_1, k_4) +O( n (\log n + \max\{ k_1, k_2,  k_3, k_4, k_5, k_6\})$ time, where $T_{1,4}(n,\ell^{\star}(k_1,k_4), k_1, k_4)$ is the running time of the call $\DetWords(n,\ell^{\star}(k_1,k_4),k_1,k_4)$.
\end{enumerate}
\end{thm}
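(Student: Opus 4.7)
The plan is to obtain Theorem \ref{thm:1-6} as a direct composition of Theorem \ref{thm:DeRanCon} and Lemma \ref{lem:14to123456}, with only a small bookkeeping argument for the running time.

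First, I would invoke Algorithm \DetWords\ with input $(n, \ell^{\star}(k_1,k_4), k_1, k_4)$. By Theorem \ref{thm:DeRanCon}, this deterministically produces a binary code $\wordsetfont{B}$ of $n$ distinct words of length $\ell^{\star}(k_1,k_4)$ that satisfies $C_1(k_1)$ and $C_4(k_4)$. Denote its running time by $T_{1,4}(n,\ell^{\star}(k_1,k_4),k_1,k_4)$, exactly as in the theorem statement.

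Next, I would feed $\wordsetfont{B}$ together with the parameters $k_2, k_3, k_5, k_6$ into the construction of Lemma \ref{lem:14to123456}. Statement \ref{lem:14to123456:conversion} of that lemma guarantees that the resulting DNA code $\wordset{1\sim 6}$ of $n$ distinct words of equal length satisfies all of $C_1(k_1), C_2(k_2), C_3(k_3), C_4(k_4), C_5(k_5), C_6(k_6)$, proving Statement \ref{thm:14to123456:conversion}. By Statement \ref{lem:14to123456:length} of the same lemma, the common word length is $\ell^{\star}(k_1,k_4)+\max\{k_2,k_3,k_5,k_6\}$, which is precisely Statement \ref{thm:14to123456:length}.

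For Statement \ref{thm:14to123456:time}, the total running time is $T_{1,4}(n,\ell^{\star}(k_1,k_4),k_1,k_4)$ for the call to \DetWords\ plus $O(n(\ell^{\star}(k_1,k_4)+\max\{k_2,k_3,k_5,k_6\}))$ for the conversion step, by Statement \ref{lem:14to123456:time} of Lemma \ref{lem:14to123456}. Since $\ell^{\star}(k_1,k_4)=\lceil c_1\log n + c_2\max\{k_1,k_4\}\rceil = O(\log n + \max\{k_1,k_4\})$, the conversion cost simplifies to $O(n(\log n + \max\{k_1,k_2,k_3,k_4,k_5,k_6\}))$, which matches the claimed bound.

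There is no real obstacle here; the only thing to watch is that the parameters $k_2,k_3,k_5,k_6$ are absorbed solely through the $\max$ in the prefix length used by Lemma \ref{lem:14to123456}, and that the length bound from Lemma \ref{lem:C1_C6}(\ref{lem:C1_C6:minlength}) confirms that padding by $\max\{k_2,k_3,k_5,k_6\}$ costs at most a constant multiplicative factor relative to the optimum, consistent with the remark at the start of Section \ref{sec:generalizations}.
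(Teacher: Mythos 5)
Your proposal is correct and follows exactly the paper's own proof: call $\DetWords(n,\ell^{\star}(k_1,k_4),k_1,k_4)$ via Theorem~\ref{thm:DeRanCon} and then apply Lemma~\ref{lem:14to123456} to the resulting binary code with $k_2,k_3,k_5,k_6$. The extra bookkeeping you supply for the running time (expanding $\ell^{\star}$ as $O(\log n + \max\{k_1,k_4\})$) is accurate and matches the claimed bound.
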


\begin{proof}
We construct $\wordset{1\sim 6}$ with the following steps:
\begin{enumerate}
\item Let $\ell_{1,4} = \ell^{\star}(k_1,k_4)$.
\item Construct a binary code $\wordsetfont{B} = \DetWords(n,\ell_{1,4},k_1,k_4)$ by means of Theorem~\ref{thm:DeRanCon}.
\item Construct $\wordset{1\sim 6}$ by means of Lemma~\ref{lem:14to123456} using $\wordsetfont{B}$, $k_2$, $k_3$, $k_5$, and $k_6$ as the input.
\end{enumerate}

With the above construction, this theorem follows directly from Theorem~\ref{thm:DeRanCon} and Lemma~\ref{lem:14to123456}.

\end{proof}

\subsection{Designing Words for Constraints $C_1$ through $C_7$} \label{subsec:1-7}

Lemma~\ref{lem:14to1234567} below shows how to transform a binary code that satisfies $C_1(k_1)$ and $C_4(k_4)$ to a DNA code that satisfies $C_1(k_1)$ through $C_7(\gamma)$.

\begin{lem}\label{lem:14to1234567}\
\begin{enumerate}
\item \label{lem:14to1234567:conversion}
Let $\wordsetfont{B}$ be a  code of $n$ distinct binary words of equal length $\ell_{1,4}$ that satisfies  $C_1(k_1)$ and $C_4(k_4)$.
Given $\wordsetfont{B}$,  $k_2$, $k_3$, $k_5$, $k_6$, and $\gamma$ as the input,
we can deterministically construct a code $\wordset{1\sim 7}$ of $n$ distinct DNA words of equal length that
satisfies $C_1(k_1)$, $C_2(k_2)$, $C_3(k_3)$, $C_4(k_4)$, $C_5(k_5)$, $C_6(k_6)$, and $C_7(\gamma)$.
\item \label{lem:14to1234567:length}
The length of the words in $\wordset{1\sim 7}$ is $\ell_{1,4}+2 \max\left\{k_2,k_3,k_5,k_6\right\}$.
\item \label{lem:14to1234567:time}
The construction takes $O(n(\ell_{1,4}+\max\left\{k_2,k_3,k_5,k_6\right\}))$ time.
\end{enumerate}
\end{lem}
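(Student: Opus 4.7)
The plan is to extend the padding scheme of Lemma~\ref{lem:14to123456} by doubling the padding length (from $k$ characters on one end to $2k$ characters in total) and using the extra freedom to also control the GC content.

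Concretely, I would start by converting the binary code $\wordsetfont{B}$ to DNA using $0\to A$ and $1\to T$, so that the $\ell_{1,4}$ middle positions of every word contain no G or C. Then I would prepend and append a total of $2k$ padded characters (with $k=\max\{k_2,k_3,k_5,k_6\}$), chosen to meet two goals simultaneously: (a) guarantee at least $k$ mismatches at each of the two ends in every comparison of $Y$ with $X^{RC}$ (and the corresponding shifted prefix/suffix versions), and (b) produce the required $\gamma(\ell_{1,4}+2k)+O(1)$ many G/C characters. Because the A/T middle contributes no G/C, goal (b) can be met by placing the appropriate number of C's and G's among the $2k$ padded positions and filling the rest with A's or T's. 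Goal (a) is met by placing the G/C padding in an ``asymmetric'' fashion (e.g., C's at the left end), as in Lemma~\ref{lem:14to123456}, so that after reverse-complementing $X$ the G/C block reappears on the opposite end of $X^{RC}$ and creates mismatches with the A/T middle of $Y$.

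Once the padding pattern is fixed, the verification is analogous to the proof of Lemma~\ref{lem:14to123456}. Constraints $C_1(k_1)$ and $C_4(k_4)$ are inherited from $\wordsetfont{B}$, since the mapping $0\to A$, $1\to T$ preserves Hamming distances and common padding does not affect pairwise distances. Constraints $C_2(k_2)$, $C_3(k_3)$, $C_5(k_5)$, and $C_6(k_6)$ follow from a boundary case analysis on the shift parameter $i$, showing that $Y$ and $X^{RC}$ disagree on at least $k$ positions within each padded region by pitting C's (respectively G's) in $Y$ against A/T characters produced from the A/T middle of $X^{RC}$. Constraint $C_7(\gamma)$ is then immediate from the count of C's and G's in the padding. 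The running time is $O(n(\ell_{1,4}+k))$ because only per-character operations on each of the $n$ words are performed.

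The main obstacle is designing the $2k$ padded characters so that the reverse-complement mismatch structure and the GC count are both correct simultaneously, especially for extreme values of $\gamma$. For small $\gamma$, only a few of the padded positions can be G/C, so most padding must be A's or T's, which weakens the ``asymmetry'' used in Lemma~\ref{lem:14to123456}; for large $\gamma$, the padding is dominated by G's and C's, which may overlap the reverse-complemented G/C blocks of $X^{RC}$ and cancel some mismatches. Handling these regimes will likely require a case analysis on $\gamma$, with a padding layout specifically tailored in each regime so that the $k$-many mismatches at each end are preserved while the G/C count hits its target.
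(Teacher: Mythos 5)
There is a genuine gap in your approach: you propose to meet the GC content requirement by placing all of the G's and C's among the $2k$ padded positions, keeping the middle $\ell_{1,4}$ positions entirely A/T. But $C_7(\gamma)$ requires roughly $\gamma(\ell_{1,4}+2k)$ characters of each word to be G or C, and for any $\gamma$ that is not very small this quantity exceeds $2k$ (e.g., already for $\gamma = 1/2$ once $\ell_{1,4} > 2k$, which is the typical regime since $\ell_{1,4} = \Theta(\log n + k)$). The lemma places no restriction on $\gamma$ beyond $0 \leq \gamma \leq 1$, so confining the G/C characters to the padding cannot work in general, and no case analysis on $\gamma$ will rescue it: there are simply not enough padded positions. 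You correctly sensed the tension in your closing paragraph, but the resolution is not a cleverer padding layout.

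The idea you are missing is the paper's two-channel substitution, which decouples the GC count from all the Hamming constraints. The paper first pads the \emph{binary} code with $k$ copies of $1$ at both ends (so the padded binary code satisfies $C_1$ through $C_6$ at the binary level, by the same end-block argument as in Lemma~\ref{lem:14to123456}), then fixes an arbitrary set of $\lceil \gamma\ell\rceil$ positions and converts to DNA by mapping $0 \mapsto \mathrm{C}$, $1 \mapsto \mathrm{G}$ at those positions and $0 \mapsto \mathrm{A}$, $1 \mapsto \mathrm{T}$ elsewhere. Every word then has exactly $\lceil \gamma\ell\rceil$ G/C characters regardless of its bits, which gives $C_7(\gamma)$ for free; and the substitution preserves all binary mismatches because the images of $0$ lie in $\{\mathrm{A},\mathrm{C}\}$ and the images of $1$ lie in $\{\mathrm{T},\mathrm{G}\}$, so a binary mismatch stays a DNA mismatch for $C_1$ and $C_4$, and for the reverse-complement constraints a position where $y_j$ equals $x_{\ell-j+1}$ still yields distinct characters, since the complement of $\{\mathrm{A},\mathrm{C}\}$ is $\{\mathrm{T},\mathrm{G}\}$. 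Your padding-based verification of $C_2,C_3,C_5,C_6$ is essentially the right mechanism, but it must be carried out on the padded binary words before this content-independent GC substitution, not by trying to make the padding itself carry the GC load.
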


\begin{proof}
Let $k=\max\left\{k_2,k_3, k_5, k_6\right\}$. Let $\ell = \ell_{1,4} + 2k$. We construct $\wordset{1 \sim 7}$ with the following steps:
\begin{enumerate}
\item Append $k$ copies of 1 to each of the left and right ends of each word in $\wordsetfont{B}$. Let $\wordsetfont{B}'$ denote the set of the new binary words of equal length $\ell$.

\item Choose   $\lceil\gamma\ell\rceil$ arbitrary (e.g., evenly distributed) positions among $1, \ldots, \ell$ (see \cite{Kao:2009:RFD}).

\item \label{14to1234567:step3}
For each word in $\wordsetfont{B}'$,  at each of the above chosen $\lceil\gamma\ell\rceil$ positions, change 0 to C
and change 1 to G, while at all the other positions,  change 0 to A and change 1 to T. Let $\wordset{1\sim 7}$ be the set of the resulting DNA words (see \cite{Kao:2009:RFD}).
\end{enumerate}

With the above construction, Statements \ref{lem:14to1234567:length} and \ref{lem:14to1234567:time} clearly hold. To prove Statement \ref{lem:14to1234567:conversion}, observe that the above construction steps are deterministic and $\wordset{1\sim 7}$ consists of $n$ distinct DNA words of equal length.  Next, by a proof similar to but simpler than that of Lemma~\ref{lem:14to123456},
$\wordsetfont{B}'$ satisfies $C_1(k_1)$ to $C_6(k_6)$ as constraints on binary words. Then, since the substitutions at Step~\ref{14to1234567:step3} do not change Hamming distances for $C_1(k_1)$ and do not decrease Hamming distances for $C_2(k_2)$ through $C_6(k_6)$,   these six constraints also hold for $\wordset{1\sim 7}$. Moreover, it follows from the substitutions at Step~\ref{14to1234567:step3} that $C_7(\gamma)$ holds for $\wordset{1\sim 7}$.
\end{proof}

Theorem~\ref{thm:1-7} below uses Theorem~\ref{thm:DeRanCon} and Lemma~\ref{lem:14to1234567} to show how to construct a DNA code that satisfies $C_1(k_1)$ through $C_7(\gamma)$.

\begin{thm}\label{thm:1-7}\
\begin{enumerate}
\item \label{thm:14to1234567:conversion}
Given $n \geq 2$, $k_1 \geq 1$, $k_2$, $k_3$, $k_4$, $k_5$, $k_6$, and $\gamma$ as the input,
we can deterministically construct a code $\wordset{1\sim 7}$ of $n$ distinct DNA words of equal length that
satisfies $C_1(k_1)$, $C_2(k_2)$, $C_3(k_3)$, $C_4(k_4)$, $C_5(k_5)$, $C_6(k_6)$, and $C_7(\gamma)$.
\item \label{thm:14to1234567:length}
The length of the words in $\wordset{1\sim 7}$ is $\ell^{\star}(k_1,k_4) + 2 \max\{k_2, k_3, k_5, k_6\}$.
\item \label{thm:14to1234567:time}
The construction takes $T_{1,4}(n,\ell^{\star}(k_1,k_4), k_1, k_4) +O( n (\log n + \max\{ k_1, k_2,  k_3, k_4, k_5, k_6\})$ time, where $T_{1,4}(n,\ell^{\star}(k_1,k_4), k_1, k_4)$ is the running time of the call $\DetWords(n,\ell^{\star}(k_1,k_4),k_1,k_4)$.
\end{enumerate}
\end{thm}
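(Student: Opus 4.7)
The plan is to mirror exactly the two-step construction used in the proof of Theorem \ref{thm:1-6}, but with Lemma \ref{lem:14to1234567} playing the role that Lemma \ref{lem:14to123456} played there. First I would set $\ell_{1,4} = \ell^{\star}(k_1,k_4)$ and invoke Algorithm \DetWords\ via Theorem \ref{thm:DeRanCon} to obtain a binary code $\wordsetfont{B} = \DetWords(n,\ell_{1,4},k_1,k_4)$ of $n$ distinct words of length $\ell_{1,4}$ satisfying $C_1(k_1)$ and $C_4(k_4)$. This step is deterministic, handles arbitrary $k_1 \geq 1$ and $k_4 \geq 0$ (and can be trivially adjusted for $k_4 = 0$ since then $C_4$ is vacuous on the relevant range), and contributes the running-time term $T_{1,4}(n,\ell^{\star}(k_1,k_4),k_1,k_4)$.

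Second, I would hand $\wordsetfont{B}$ together with the remaining parameters $k_2, k_3, k_5, k_6$, and $\gamma$ to the construction of Lemma \ref{lem:14to1234567}. By Statement \ref{lem:14to1234567:conversion} of that lemma, the output is a DNA code $\wordset{1\sim 7}$ of $n$ distinct words of equal length satisfying all of $C_1(k_1)$ through $C_7(\gamma)$; by Statements \ref{lem:14to1234567:length} and \ref{lem:14to1234567:time} of that lemma, the common word length is exactly $\ell_{1,4} + 2\max\{k_2,k_3,k_5,k_6\} = \ell^{\star}(k_1,k_4) + 2\max\{k_2,k_3,k_5,k_6\}$, and this transformation step runs in $O(n(\ell_{1,4} + \max\{k_2,k_3,k_5,k_6\}))$ time. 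This establishes Statements \ref{thm:14to1234567:conversion} and \ref{thm:14to1234567:length} of the theorem directly.

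For the running-time bound in Statement \ref{thm:14to1234567:time}, I would simply add the two contributions. Since $\ell_{1,4} = \lceil c_1 \log n + c_2 \max\{k_1,k_4\}\rceil = O(\log n + \max\{k_1,k_4\})$, the transformation cost $O(n(\ell_{1,4} + \max\{k_2,k_3,k_5,k_6\}))$ absorbs into $O(n(\log n + \max\{k_1,k_2,k_3,k_4,k_5,k_6\}))$, and the overall time is $T_{1,4}(n,\ell^{\star}(k_1,k_4),k_1,k_4) + O(n(\log n + \max\{k_1,k_2,k_3,k_4,k_5,k_6\}))$, matching the claim.

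There is essentially no hard part: all of the genuinely non-trivial work, namely the derandomized construction of $\wordsetfont{B}$ and the verification that padding-plus-alphabet-substitution simultaneously preserves $C_1(k_1), C_4(k_4)$, installs $C_2, C_3, C_5, C_6$, and achieves the prescribed GC fraction $\gamma$, has already been discharged in Theorem \ref{thm:DeRanCon} and Lemma \ref{lem:14to1234567} respectively. The proof of the theorem is therefore just the explicit composition of these two results, parallel in structure (and nearly verbatim in wording) to the proof of Theorem \ref{thm:1-6}.
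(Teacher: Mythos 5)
Your proposal is correct and is essentially identical to the paper's own proof: both set $\ell_{1,4} = \ell^{\star}(k_1,k_4)$, call $\DetWords$ via Theorem~\ref{thm:DeRanCon} to get the binary code $\wordsetfont{B}$, and then apply Lemma~\ref{lem:14to1234567} to obtain $\wordset{1\sim 7}$, with the stated length and running-time bounds following immediately from those two results. No further comment is needed.
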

\begin{proof}
We construct $\wordset{1\sim 7}$ with the following steps:
\begin{enumerate}
\item Let $\ell_{1,4} = \ell^{\star}(k_1,k_4)$.
\item Construct a binary code $\wordsetfont{B} = \DetWords(n,\ell_{1,4},k_1,k_4)$ by means of Theorem~\ref{thm:DeRanCon}.
\item Construct $\wordset{1\sim 7}$ by means of Lemma~\ref{lem:14to1234567} using $\wordsetfont{B}$, $k_2$, $k_3$, $k_5$, $k_6$, and $\gamma$ as the input.
\end{enumerate}

With the above construction, this theorem follows directly from Theorem~\ref{thm:DeRanCon} and Lemma~\ref{lem:14to1234567}.
\end{proof}

\subsection{Designing Words for Constraints $C_1$, $C_2$, $C_3$, $C_7$, and $C_8$}
 \label{subsec:12378}

To eliminate long runs in words to satisfy $C_8(d)$, we first detail an algorithm in Algorithm~\ref{alg:breakruns}, which  slightly modifies a similar algorithm of Kao et al.~\cite{Kao:2009:RFD} to increase  symmetry. Given a binary word $X$ and $d$ as the input, this algorithm inserts a character into $X$ at the end of each  interval of length $d-1$ from both the left end of $X$ and the right end of $X$ toward the middle of $X$. The algorithm also inserts two characters at the  middle of $X$. The inserted characters are complementary to the ending character of each interval or complementary to the middle two characters of $X$.
The complementarity of the inserted characters and the spacings of the insertions ensure that the resulting word $X'$ does not have  consecutive 0's or consecutive 1's of length more than $d$.
The symmetrical manner in which the inserted characters are added to $X$ facilitates the checking of constraints $C_2(k_2)$ and $C_3(k_3)$.

\begin{algorithm}
\caption{\BreakRuns($X,d$)}\label{alg:breakruns}
\begin{algorithmic} [1]
\STATE {\bf Input:} a binary word $X=x_1x_2\ldots x_\ell$ of length $\ell$ and an integer $d \geq 2$, where $\ell$ is assumed to be even.
\STATE {\bf Output:} a binary word $X'$ of length $\ell'$ that has at most  $d$ consecutive 0's or at most $d$ consecutive 1's, where $\ell' = \ell + 2 \lfloor\frac{\ell}{2(d-1)}\rfloor + 2$.
\STATE Let $u = d -1$, $s = \lfloor\frac{\ell}{2u}\rfloor$,  $t =  s u$, and $\mathrm{mid} = \frac{\ell}{2}$.
\FOR {$1 \leq i \leq s$}
\STATE Let $\hat{\alpha}_i =(x_{iu})^c$ and $\hat{\beta}_i = (x_{\ell-iu+1})^c$.
\ENDFOR
\STATE Let $\hat{\Delta} =(x_{\mathrm{mid}})^c(x_{\mathrm{mid}+1})^c$.
\STATE Split $X$ into three segments $L = X[1 \cdots t]$, $U = X[ (t+1) \cdots (\ell-t)]$, and $R =   X[(\ell - t +1) \cdots \ell]$.
\STATE Let $L' = x_1 \ldots x_u \hat{\alpha}_1 x_{u+1} \ldots x_{2u} \hat{\alpha}_2 x_{2u + 1} \ldots x_t \hat{\alpha}_s$.
\STATE Let $R' = \hat{\beta}_{s}x_{\ell - t + 1} \ldots x_{\ell -2u} \hat{\beta}_2 x_{\ell - 2u + 1} \ldots x_{\ell - u}\hat{\beta}_1 x_{\ell - u+1} \ldots x_\ell$.
\STATE Let $U' = x_{t + 1}\ldots x_{{\mathrm{mid}}} \hat{\Delta} x_{\mathrm{mid}+1} \ldots x_{\ell - t}$.
\STATE Let $X'$ be the concatenation of $L'$, $U'$, and $R'$.
\STATE return $X'$.
\end{algorithmic}
\end{algorithm}

Lemma~\ref{lem:1to12378} below shows how to transform a binary code that satisfies $C_1(k_1)$ to a DNA code that satisfies $C_1(k_1)$, $C_2(k_2)$, $C_3(k_3)$, $C_7(\gamma)$, and $C_8(d)$.
The proof of this lemma uses Algorithm~\ref{alg:breakruns} to satisfy $C_8(d)$.

\begin{lem}\label{lem:1to12378}\
\begin{enumerate}
\item \label{lem:1to12378:conversion}
Let $\wordsetfont{B}_0$ be a  code of $n$ distinct binary words of equal length $\ell_0$ that satisfies $C_1(k_1)$.
Given $\wordsetfont{B}_0$,  $k_2$, $k_3$, $\gamma$, and $d$ as the input,
we can deterministically construct a code $\wordset{{1\sim 3},7,8}$ of $n$ distinct DNA words of equal length that
satisfies $C_1(k_1)$, $C_2(k_2)$, $C_3(k_3)$, $C_7(\gamma)$, and $C_8(d)$.
\item \label{lem:1to12378:length}
The length of the words in $\wordset{{1\sim 3},7,8}$ is $\frac{d}{d-1}(\ell_0+2\max\{k_2,k_3\}) + O(1)$.
\item \label{lem:1to12378:time}
The construction takes $O(n(\ell_0 +\max\left\{k_2,k_3\right\}))$ time.
\end{enumerate}
\end{lem}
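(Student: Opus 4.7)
The plan is to construct $\wordset{1\sim 3, 7, 8}$ from $\wordsetfont{B}_0$ in three deterministic stages. First, pad each word in $\wordsetfont{B}_0$ with $k = \max\{k_2, k_3\}$ copies of $1$ on each end to obtain a binary code $\wordsetfont{B}_1$ of length $\ell_0 + 2k$. Second, apply $\BreakRuns$ (Algorithm~\ref{alg:breakruns}) to every word of $\wordsetfont{B}_1$ to obtain a binary code $\wordsetfont{B}_2$ of length $(\ell_0+2k) + 2\lfloor(\ell_0+2k)/(2(d-1))\rfloor + 2 = \frac{d}{d-1}(\ell_0+2k) + O(1)$, matching Statement~\ref{lem:1to12378:length}. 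Third, convert $\wordsetfont{B}_2$ into a DNA code $\wordset{1\sim 3,7,8}$ by choosing $\lceil\gamma\ell\rceil$ evenly distributed positions and substituting $0 \mapsto \mathrm{C}$, $1 \mapsto \mathrm{G}$ at those positions and $0 \mapsto \mathrm{A}$, $1 \mapsto \mathrm{T}$ at the remaining positions, exactly as in \cite{Kao:2009:RFD}. Each stage is clearly computable in $O(n(\ell_0+k))$ time, giving Statement~\ref{lem:1to12378:time}.

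To establish Statement~\ref{lem:1to12378:conversion}, I would verify each constraint in turn. The padding makes every word of $\wordsetfont{B}_1$ start and end with $1^k$, so for any $Y, X \in \wordsetfont{B}_1$ (possibly equal), comparing $Y = 1^k\cdots 1^k$ to $X^{RC} = 0^k \cdots 0^k$ already contributes $2k$ mismatches in the padded positions alone, so $C_2(k_2)$ and $C_3(k_3)$ hold at the binary level, while $C_1(k_1)$ is inherited from $\wordsetfont{B}_0$ on the unpadded middle. The DNA substitutions at the end are position-wise bijections, so they preserve Hamming distances between distinct words and between a word and the reverse complement of another, carrying $C_1, C_2, C_3$ through unchanged; they also yield $C_7(\gamma)$ by construction.

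The main obstacle, and the reason $\BreakRuns$ is designed with mirror symmetry around the middle pair $\hat\Delta$, is to show that the Hamming-distance constraints survive Step~(ii). The key technical claim is the commutation identity $\BreakRuns(X^{RC}) = (\BreakRuns(X))^{RC}$, which I would prove by tracking a position $x_{iu}$ and its mirror $x_{\ell-iu+1}$ through the definitions of $\hat\alpha_i$, $\hat\beta_i$, and $\hat\Delta$ and observing that complementation applied twice is the identity. Granted this identity, for any $Y, X$ in $\wordsetfont{B}_1$ with outputs $Y', X'$, one has $H(Y', X'^{RC}) = H(\BreakRuns(Y), \BreakRuns(X^{RC}))$, and since $\BreakRuns$ agrees with its input at non-inserted coordinates and its inserted characters $(x_{iu})^c$ differ between two inputs exactly when $x_{iu}$ does, the distance is at least $H(Y, X^{RC})$. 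The same argument with $X = Y$ handles $C_3(k_3)$, and with the reverse complement removed it handles $C_1(k_1)$.

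Finally, $C_8(d)$ follows because $\BreakRuns$ already inserts a complementary bit after every $d-1$ positions from each end and splits the middle with $\hat\Delta$, so no binary run exceeds $d$; distributing the $\lceil\gamma\ell\rceil$ GC positions evenly (so that no $d{+}1$ consecutive positions are all GC or all non-GC) then ensures that runs of $\mathrm{A}$, $\mathrm{T}$, $\mathrm{C}$, $\mathrm{G}$ in $\wordset{1\sim 3,7,8}$ all remain bounded by $d$, exactly as in the corresponding step of \cite{Kao:2009:RFD}.
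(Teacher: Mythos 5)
Your construction and verification follow essentially the same route as the paper: pad with $1^k$ on both ends for $k=\max\{k_2,k_3\}$, apply \BreakRuns, then do the GC/AT substitution; your commutation identity $\BreakRuns(X^{RC})=(\BreakRuns(X))^{RC}$ is a more explicit justification of what the paper attributes to the ``spacings and complementarity'' of the insertions, and it is exactly the symmetry that \BreakRuns{} was designed to provide. Two small slips are worth fixing. First, \BreakRuns{} is only defined for even-length inputs, and padding by $2k$ preserves the parity of $\ell_0$, so you must (as the paper does) first append a single $0$ when $\ell_0$ is odd before invoking \BreakRuns. Second, your argument for $C_8(d)$ at the DNA level leans on choosing the $\lceil\gamma\ell\rceil$ GC positions so that no $d+1$ consecutive positions are all GC or all non-GC; that condition is unnecessary and not always achievable (e.g., for $\gamma$ near $0$ or $1$, and note this lemma places no restriction on $\gamma$). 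The correct and simpler observation is that since C and A both arise from $0$ and G and T both arise from $1$, any run of identical DNA characters sits inside a run of identical binary characters, so the substitution can only shorten runs and binary $C_8(d)$ implies DNA $C_8(d)$ for an arbitrary choice of GC positions.
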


\begin{proof} Our construction of $\wordset{{1\sim 3},7,8}$ is similar to the construction of $\wordset{1\sim 6}$ in Lemma~\ref{lem:14to1234567} with additional work of using
Algorithm \ref{alg:breakruns} to break long runs in binary words. Specifically, we construct $\wordset{{1\sim 3},7,8}$ with the following steps:
\begin{enumerate}
\item \label{1to12378:step1}
If $\ell_0$ is odd, then append $0$ at the right end of each word in $\wordsetfont{B}_0$; otherwise, do not change the words in $\wordsetfont{B}_0$.
Let $\wordsetfont{B}_1$ be the set of the resulting words. Let $\ell_1$ be the length of the resulting words; i.e., if $\ell_0$ is odd, then $\ell_1 = \ell_0+1$, else $\ell_1 = \ell_0$.

\item \label{1to12378:step2}
Let $k = \max\{k_2, k_3\}$. Append $k$ copies of 1 at each of the left and right ends of each word in $\wordsetfont{B}_1$. Let $\wordsetfont{B}_2$ be the set of the new binary words. Let $\ell_2$ be the length of the new  words; i.e., $\ell_2 = \ell_1 + 2k$.

\item \label{1to12378:step3}
Apply Algorithm~\ref{alg:breakruns} to each word in $\wordsetfont{B}_2$. Let $\wordsetfont{B}_3$ be the set of the output binary words. Let $\ell_3$ be the length of the new words; i.e.,
$\ell_3 = \ell_2 + 2 \lfloor \frac{\ell_2}{2(d-1)} \rfloor+ 2 = \frac{d}{d-1}(\ell_0+2\max\{k_2,k_3\}) + O(1)$.

\item \label{1to12378:step4}
Choose   $\lceil\gamma\ell_3\rceil$ arbitrary (e.g., evenly distributed) positions among $1, \ldots, \ell_3$ (see \cite{Kao:2009:RFD}).

\item \label{1to12378:step5}
For each word in $\wordsetfont{B}_3$,  at each of the above chosen $\lceil\gamma\ell_3\rceil$ positions, change 0 to C
and change 1 to G, while at all the other positions,  change 0 to A and change 1 to T. Let $\wordset{1\sim 3, 7, 8}$ be the set of the resulting DNA words (see \cite{Kao:2009:RFD}).
\end{enumerate}

With the above construction, Statements \ref{lem:1to12378:length} and \ref{lem:1to12378:time} clearly hold. To prove Statement \ref{lem:1to12378:conversion}, observe that the above construction steps are deterministic and $\wordset{1\sim 3, 7, 8}$ consists of $n$ distinct DNA words of equal length.
We verify $C_1(k_1)$, $C_2(k_2)$, $C_3(k_3)$, $C_7(\gamma)$, and $C_8(d)$ as follows.
\begin{enumerate}
\item
Since $\wordsetfont{B}_0$ satisfies $C_1 (k_1)$, $\wordsetfont{B}_1$ also satisfies $C_1(k_1)$.
\item
Next, by a proof similar to but simpler than that of Lemma~\ref{lem:14to123456},
$\wordsetfont{B}_2$ satisfies $C_1(k_1)$ through $C_3(k_3)$ as constraints on binary words.
\item
From the spacings of the insertions made by  Algorithm~\ref{alg:breakruns}, the insertions made at  Step~\ref{1to12378:step3}  do not decrease Hamming distances for $C_1(k_1)$ through $C_3(k_3)$,
$\wordsetfont{B}_3$ continues to satisfy these three constraints.
\item
Further from the spacings and  the complementarity of the  characters inserted  by Algorithm~\ref{alg:breakruns}, $\wordsetfont{B}_3$ additionally satisfies $C_8(d)$ as a constraint on binary words.
\item
Since the substitutions made at Step \ref{1to12378:step5} do not decrease Hamming stances for $C_1 (k_1)$ through $C_3(k_3)$, $\wordset{1\sim 3, 7, 8}$ continues to satisfy $C_1(k_1)$ through $C_3(k_3)$.
\item
Also, it follows from the substitutions made at Step~\ref{1to12378:step5} that $C_7(\gamma)$ holds for $\wordset{1\sim 3, 7,8}$.
\item
Finally, these substitutions do not
increase lengths of consecutive occurrences of a character, $C_8(d)$ holds for $\wordset{1\sim 3, 7,8}$.
\end{enumerate}
\end{proof}

Theorem~\ref{thm:12378} below uses
Theorem~\ref{thm:DeRanCon} and Lemma~\ref{lem:14to1234567} to
show how to construct a DNA code that satisfies $C_1(k_1)$, $C_2(k_3)$, $C_3(k_3)$, $C_7(\gamma)$, and $C_8(d)$.

\begin{thm}\label{thm:12378}\
\begin{enumerate}
\item \label{thm:1to12378:conversion}
Given $n \geq 2$, $k_1 \geq 1$, $k_2$, $k_3$, $\gamma$, and $d$ as the input,
we can deterministically construct a code $\wordset{1\sim 3,7,8}$ of $n$ distinct DNA words of equal length that
satisfies $C_1(k_1)$, $C_2(k_2)$, $C_3(k_3)$,  $C_7(\gamma)$, and $C_8(d)$.
\item \label{thm:1to12378:length}
The length of the words in $\wordset{1\sim 3,7,8}$ is  $\frac{d}{d-1}(\ell^{\star}(k_1,k_1)+2\max\{k_2,k_3\}) + O(1)$.
\item \label{thm:1to12378:time}
The construction takes $T_{1,4}(n,\ell^{\star}(k_1,k_1), k_1, k_1) +O( n (\log n + \max\{ k_1, k_2,  k_3\})$ time, where $T_{1,4}(n,\ell^{\star}(k_1,k_1), k_1, k_1)$ is the running time of the call $\DetWords(n,\ell^{\star}(k_1,k_1),k_1,k_1)$.
\end{enumerate}
\end{thm}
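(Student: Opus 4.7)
The plan is to combine Theorem~\ref{thm:DeRanCon} with Lemma~\ref{lem:1to12378} in the same template that Theorems~\ref{thm:1-6} and \ref{thm:1-7} use to layer a transformation lemma on top of Algorithm~\DetWords. First, I would set $\ell_0 = \ell^{\star}(k_1,k_1)$ and call $\DetWords(n,\ell_0,k_1,k_1)$ to obtain by Theorem~\ref{thm:DeRanCon} a binary code $\wordsetfont{B}$ of $n$ distinct words of length $\ell_0$ satisfying $C_1(k_1)$ and $C_4(k_1)$; in particular $\wordsetfont{B}$ satisfies $C_1(k_1)$, which is precisely the hypothesis that Lemma~\ref{lem:1to12378} needs. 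Second, I would feed $\wordsetfont{B}$ together with the remaining parameters $k_2$, $k_3$, $\gamma$, and $d$ into Lemma~\ref{lem:1to12378} to produce the desired code $\wordset{1\sim 3,7,8}$.

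All three statements of the theorem then follow by inheritance. Statement~\ref{thm:1to12378:conversion} would follow immediately from Lemma~\ref{lem:1to12378}(\ref{lem:1to12378:conversion}): the only hypothesis of that lemma is a binary $C_1(k_1)$-satisfying code of $n$ equal-length words, which $\wordsetfont{B}$ provides, and both the construction of $\wordsetfont{B}$ by \DetWords\ and the subsequent transformation are deterministic. Statement~\ref{thm:1to12378:length} would follow by substituting $\ell_0 = \ell^{\star}(k_1,k_1)$ into the length bound $\frac{d}{d-1}(\ell_0+2\max\{k_2,k_3\}) + O(1)$ of Lemma~\ref{lem:1to12378}(\ref{lem:1to12378:length}).

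Statement~\ref{thm:1to12378:time} would come from adding the running time of $\DetWords(n,\ell^{\star}(k_1,k_1),k_1,k_1)$, which is $T_{1,4}(n,\ell^{\star}(k_1,k_1),k_1,k_1)$ by definition, to the $O(n(\ell_0+\max\{k_2,k_3\}))$ time of Lemma~\ref{lem:1to12378}(\ref{lem:1to12378:time}); using $\ell^{\star}(k_1,k_1) = O(\log n + k_1)$, the second term simplifies to $O(n(\log n + \max\{k_1,k_2,k_3\}))$, matching the stated bound. I do not anticipate any genuine obstacle, since all the technical content (derandomization, run-breaking, GC-balancing, preservation of Hamming distances under the insertions and substitutions) has already been absorbed into Theorem~\ref{thm:DeRanCon} and Lemma~\ref{lem:1to12378}; the theorem is essentially the concatenation of these two results, entirely parallel in structure to the proofs of Theorems~\ref{thm:1-6} and \ref{thm:1-7}.
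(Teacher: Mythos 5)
Your proposal is correct and matches the paper's own proof essentially verbatim: the paper also sets $\ell_0 = \ell^{\star}(k_1,k_1)$, calls $\DetWords(n,\ell_0,k_1,k_1)$ via Theorem~\ref{thm:DeRanCon}, and then applies Lemma~\ref{lem:1to12378}, with all three statements inherited from those two results. No further comment is needed.
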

\begin{proof}
We construct $\wordset{1\sim 3,7,8}$ with the following steps:
\begin{enumerate}
\item Let $\ell_0 = \ell^{\star}(k_1,k_1)$.
\item Construct a binary code $\wordsetfont{B}_0 = \DetWords(n,\ell_0,k_1,k_1)$ by means of Theorem~\ref{thm:DeRanCon}.
\item Construct $\wordset{1\sim 3,7,8}$ by means of Lemma~\ref{lem:1to12378} using $\wordsetfont{B}_0$, $k_2$, $k_3$, $\gamma$, and $d$ as the input.
\end{enumerate}

With the above construction, this theorem follows directly from Theorem~\ref{thm:DeRanCon} and Lemma~\ref{lem:1to12378}.
\end{proof}

\paragraph{Technical Remarks.} We can reduce the word length of $\wordset{1\sim 3,7,8}$ in Theorem~\ref{thm:12378}(\ref{thm:1to12378:length}) by simplifying Algorithm \DetWords\ to satisfy only $C_1(k_1)$ rather than both $C_1(k_1)$ and $C_4(k_1)$.  For the sake of brevity, we omit the details of this simplification.

\subsection{Designing Words for Constraints $C_1$ through $C_8$} \label{subsec:1-8}

This section gives two ways  to construct  a DNA code that satisfies $C_1(k_1)$ through $C_8(d)$ in Theorems \ref{thm:1-8-A} and \ref{thm:1-8-B}.

Lemma~\ref{lem:14to1-8A} below gives a way to transform a binary code that satisfies $C_1(k_1)$ to a DNA code that satisfies $C_1(k_1)$ through $C_8(d)$.

\begin{lem}\label{lem:14to1-8A}
Assume  $\frac{1}{d+1} \leq \gamma \leq \frac{d}{d+1}$.
\begin{enumerate}
\item \label{lem:14to1-8A:conversion}
Let $\wordsetfont{B}$ be a  code of $n$ distinct binary words of equal length $\ell_0$ that satisfies $C_1(k_1)$.
Given $\wordsetfont{B}$,  $k_2$, $k_3$, $k_4$, $k_5$, $k_6$, $\gamma$, and $d$ as the input,
we can deterministically construct a code $\wordset{1\sim 8}$ of $n$ distinct DNA words of equal length that
satisfies $C_1(k_1)$, $C_2(k_2)$, $C_3(k_3)$,  $C_4(k_4)$, $C_5(k_5)$, $C_6(k_6)$, $C_7(\gamma)$, and $C_8(d)$.
\item \label{lem:14to1-8A:length}
The length of the words in $\wordset{1\sim 8}$ is $\ell_0+2\max\{k_2,k_3,k_4,k_5,k_6\}$.
\item \label{lem:14to1-8A:time}
The construction takes $O(n(\ell_0 +\max\{k_2,k_3,k_4,k_5,k_6\}))$ time.
\end{enumerate}
\end{lem}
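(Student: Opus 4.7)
The plan is to adapt the construction of Lemma~\ref{lem:14to1234567} to also enforce $C_8(d)$ by making the GC-substitution positions evenly spaced and exploiting the hypothesis $\frac{1}{d+1}\le\gamma\le\frac{d}{d+1}$ to bound run lengths. Set $k=\max\{k_2,k_3,k_4,k_5,k_6\}$. First, append $k$ copies of $1$ to each end of every word in $\wordsetfont{B}$, producing a binary code $\wordsetfont{B}'$ of length $\ell=\ell_0+2k$. Second, choose a set $P\subseteq\{1,\dots,\ell\}$ of $\lceil\gamma\ell\rceil$ evenly spaced positions, i.e., so that consecutive elements of $P$ differ by at most $\lceil 1/\gamma\rceil\le d+1$ and consecutive elements of its complement differ by at most $\lceil 1/(1-\gamma)\rceil\le d+1$. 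Third, apply the position-dependent substitution $0\mapsto C$, $1\mapsto G$ at positions in $P$ and $0\mapsto A$, $1\mapsto T$ at positions outside $P$, and call the resulting DNA set $\wordset{1\sim 8}$.

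I would verify the eight constraints in order. The padded binary code $\wordsetfont{B}'$ satisfies $C_1(k_1)$ through $C_6(k_6)$ as binary constraints via the same padding-based analysis used in Lemma~\ref{lem:14to1234567}: the $k$ boundary $1$'s force at least $2k$ mismatches against binary complements for $C_2, C_3, C_5, C_6$, while $C_1(k_1)$ is inherited from $\wordsetfont{B}$ and $C_4(k_4)$ is inherited with a length-adjusted contribution from the padding on shifted alignments. The position-dependent substitution does not decrease Hamming distances, since every binary $0$-vs-$1$ mismatch becomes $A$-vs-$T$ or $C$-vs-$G$ and every cross-class alignment yields a forced mismatch; thus $C_1$ through $C_6$ transfer to $\wordset{1\sim 8}$. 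Constraint $C_7(\gamma)$ is immediate from the substitution. Finally, $C_8(d)$ follows because the even-spacing bounds force every $d+1$ consecutive positions to contain both a $P$-position and a non-$P$-position, so neither the $A$/$T$ subpositions (outside $P$) nor the $C$/$G$ subpositions (inside $P$) can form a run longer than $d$.

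The length $\ell_0+2k$ and the $O(n(\ell_0+k))$ time bound are routine. The principal obstacle I expect is that the hypothesis on $\wordsetfont{B}$ only provides $C_1(k_1)$, whereas the verification of $C_4(k_4)$ sketched above tacitly uses a shifting Hamming bound on $\wordsetfont{B}$; I would resolve this either by strengthening the hypothesis so that $\wordsetfont{B}$ also satisfies $C_4(k_4)$ (to be supplied by the invoking theorem via $\DetWords(n,\ell_0,k_1,k_4)$), or by extracting the missing shifting-mismatch contribution from the class shifts of the evenly spaced substitution. A secondary concern is the integer-rounding argument underlying the even-spacing claim for $C_8(d)$, which is precisely what the hypothesis $\frac{1}{d+1}\le\gamma\le\frac{d}{d+1}$ buys and is closed out with a brief symmetric case analysis of the two extreme values of $\gamma$.
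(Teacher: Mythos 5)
Your construction is essentially the paper's: pad each word with $k=\max\{k_2,k_3,k_4,k_5,k_6\}$ ones on both ends, then apply a position-dependent CG/AT substitution whose two position classes each have maximal run length at most $d$ (the paper phrases this as partitioning $[1,\ell]$ into alternating subintervals of size at most $d$ with the odd-indexed ones totaling $\lfloor\gamma\ell\rfloor$, which is the same device as your evenly spaced set $P$), and the verification of $C_1$ through $C_8$ proceeds identically. The obstacle you flag about deriving $C_4(k_4)$ from a hypothesis that only gives $C_1(k_1)$ is real, but it is equally present in the paper's own proof, which asserts without detail that the padded code $\wordsetfont{B}'$ satisfies $C_1(k_1)$ through $C_6(k_6)$ ``by an analysis similar to but simpler than'' that of Lemma~\ref{lem:14to123456}; since the invoking Theorem~\ref{thm:1-8-A} supplies $\wordsetfont{B}=\DetWords(n,\ell^{\star}(k_1,k_1),k_1,k_1)$, your proposed fix of strengthening the hypothesis on $\wordsetfont{B}$ matches what the paper implicitly relies on.
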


\begin{proof}
Let $k = \max\{k_2,k_3,k_4,k_5,k_6\}$. Let $\ell = \ell_0 + 2k$. This proof assumes $\gamma \geq \frac{1}{2}$. This assumption is without loss of generality, since if $\gamma < \frac{1}{2}$, we can modify by symmetry the construction steps below  to construct a DNA code whose AT content is $1-\gamma$ fraction of the characters in each word.

We construct $\wordset{1\sim 8}$ with the following steps:
\begin{enumerate}
\item \label{14to1-8A:step1}
Append $k$ copies of 1 at each of  the left and right ends of each word in $\wordsetfont{B}$.  Let
$\wordsetfont{B}'$ be the set of the new binary words, which have equal length $\ell$.

\item \label{14to1-8A:step2}
Partition the integer interval $[1,\ell]$ into integer subintervals $Z_1, Z_2, \ldots, Z_s$ for some $s$ such that (1) each subinterval consists of at most $d$ integers and at least one integer and (2)
the total number of integers in the odd-indexed subintervals is $\lfloor\gamma \ell\rfloor$.

\item  \label{14to1-8A:step3}
For each word in $\wordsetfont{B}'$, change every 0 (respectively, 1) whose position is in the  odd-indexed subintervals to C (respectively, G), and also
change every 0 (respectively, 1) whose position is  in the even-indexed subintervals to A (respectively, T).
Let $\wordset{1 \sim 8}$ be the set of the resulting DNA words.
\end{enumerate}

We now prove the three statements of this lemma.
First of all, Statement \ref{lem:14to1-8A:length} clearly holds.
As for the other two statements, since $d \geq 2$ and $\frac{1}{2} \leq \gamma \leq \frac{d}{d+1}$, the partition of $[1,\ell]$ at Step \ref{14to1-8A:step2} exists and can be computed in $O(\ell)$ time in a straightforward manner.
With this fact, Statement \ref{lem:14to1-8A:time} clearly holds.
To prove Statement~\ref{lem:14to1-8A:conversion}, observe that the above construction steps are
deterministic and $\wordset{1 \sim 8}$ consists of $n$ distinct DNA words of equal length $\ell$. We verify $C_1(k_1)$ through $C_8(d)$ as follows.

By an analysis similar to but simpler than the proof of Lemma~\ref{lem:14to123456}, $\wordsetfont{B}'$ satisfies $C_1(k_1)$ through $C_6(k_6)$ as constraints on binary words.
At Step \ref{14to1-8A:step3},
the substitutions do not change Hamming distances for $C_1(k_1)$ and do not decrease Hamming distances for $C_2(k_2)$ through $C_6(k_6)$, so $\wordset{1 \sim 8}$ continues to satisfy $C_1(k_1)$ through $C_6(k_6)$.
The aggregate size bound of the odd-indexed subintervals at Step \ref{14to1-8A:step2} ensures that $\wordset{1 \sim 8}$ additionally satisfies $C_7(\gamma)$.
The individual size bounds of the subintervals at Step \ref{14to1-8A:step2} and the alternating CG-versus-AT substitutions  between odd-indexed and even-indexed subintervals at Step \ref{14to1-8A:step3} ensure that $\wordset{1 \sim 8}$  satisfies $C_8(d)$ as well.
\end{proof}

Theorem~\ref{thm:1-8-A} below uses Theorem~\ref{thm:DeRanCon} and Lemma~\ref{lem:14to1-8A} to give our first way to construct a DNA code that satisfies $C_1(k_1)$ through $C_8(d)$.

\begin{thm}\label{thm:1-8-A}
Assume  $\frac{1}{d+1} \leq \gamma \leq \frac{d}{d+1}$.
\begin{enumerate}
\item \label{thm:1-8-A:conversion}
Given $n \geq 2$, $k_1 \geq 1$, $k_2$, $k_3$, $k_4$, $k_5$, $k_6$, $\gamma$, and $d$ as the input,
we can deterministically construct a code $\wordset{1\sim 8}$ of $n$ distinct DNA words of equal length that
satisfies $C_1(k_1)$, $C_2(k_2)$, $C_3(k_3)$, $C_4(k_4)$, $C_5(k_5)$, $C_6(k_6)$, $C_7(\gamma)$, and $C_8(d)$.
\item \label{thm:1-8-A:length}
The length of the words in $\wordset{1\sim 8}$ is $\ell^{\star}(k_1,k_1) + 2\max\{k_2, k_3, k_4, k_5, k_6\}$.
\item \label{thm:1-8-A:time}
The construction takes $T_{1,4}(n,\ell^{\star}(k_1,k_1), k_1, k_1) +O( n (\log n + \max\{ k_1, k_2,  k_3, k_4, k_5, k_6\})$ time, where $T_{1,4}(n,\ell^{\star}(k_1,k_1), k_1, k_1)$ is the running time of the call $\DetWords(n,\ell^{\star}(k_1,k_1),k_1,k_1)$.
\end{enumerate}
\end{thm}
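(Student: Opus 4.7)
The plan is to mirror the three-step template already used for Theorems \ref{thm:1-6}, \ref{thm:1-7}, and \ref{thm:12378}: invoke Algorithm \DetWords\ to obtain a binary code satisfying the base constraint $C_1$, and then apply Lemma \ref{lem:14to1-8A} as a black-box postprocessor to enforce the remaining constraints and translate the words to the DNA alphabet. Concretely, I would set $\ell_0 = \ell^{\star}(k_1, k_1)$, call $\DetWords(n, \ell_0, k_1, k_1)$ to obtain a binary code $\wordsetfont{B}$ of $n$ words of length $\ell_0$ satisfying $C_1(k_1)$ (by Theorem~\ref{thm:DeRanCon}), and then feed $\wordsetfont{B}$ together with $k_2, k_3, k_4, k_5, k_6, \gamma, d$ into the construction of Lemma~\ref{lem:14to1-8A} to produce $\wordset{1\sim 8}$.

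The correctness claim of Statement~\ref{thm:1-8-A:conversion} then follows by chaining: Theorem~\ref{thm:DeRanCon} certifies that $\wordsetfont{B}$ satisfies $C_1(k_1)$, and Lemma~\ref{lem:14to1-8A}(\ref{lem:14to1-8A:conversion}) certifies that the postprocessed code satisfies all of $C_1(k_1), C_2(k_2), C_3(k_3), C_4(k_4), C_5(k_5), C_6(k_6), C_7(\gamma), C_8(d)$. Note that the padding of $\max\{k_2,k_3,k_4,k_5,k_6\}$ copies of $1$ on each end inside the lemma is precisely what upgrades the input's single $C_1(k_1)$ guarantee to the full shifting family through $C_6(k_6)$, so no stronger input than $C_1(k_1)$ is needed from \DetWords\ here.

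Statement~\ref{thm:1-8-A:length} follows by plugging $\ell_0 = \ell^{\star}(k_1,k_1)$ into Lemma~\ref{lem:14to1-8A}(\ref{lem:14to1-8A:length}), which yields exactly $\ell^{\star}(k_1,k_1) + 2\max\{k_2,k_3,k_4,k_5,k_6\}$. Statement~\ref{thm:1-8-A:time} is the sum of the running time $T_{1,4}(n, \ell^{\star}(k_1,k_1), k_1, k_1)$ of \DetWords\ and the postprocessing cost $O(n(\ell_0 + \max\{k_2,\ldots,k_6\}))$ from Lemma~\ref{lem:14to1-8A}(\ref{lem:14to1-8A:time}); since $\ell^{\star}(k_1,k_1) = O(\log n + k_1)$, this collapses to the stated bound.

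There is no real obstacle: the heavy combinatorial lifting has been done in Theorem~\ref{thm:DeRanCon} and Lemma~\ref{lem:14to1-8A}, and the theorem is a direct composition of the two. The only hypothesis to carry along is the GC-content range $\tfrac{1}{d+1} \leq \gamma \leq \tfrac{d}{d+1}$, which is needed inside Lemma~\ref{lem:14to1-8A} to guarantee that its partition of $[1,\ell]$ into subintervals of size at most $d$ with the correct aggregate odd-indexed length exists; it is simply restated as the hypothesis of the theorem.
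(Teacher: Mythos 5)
Your proposal is correct and is essentially identical to the paper's own proof: the paper likewise sets $\ell_0 = \ell^{\star}(k_1,k_1)$, calls $\DetWords(n,\ell_0,k_1,k_1)$ to obtain a binary code satisfying $C_1(k_1)$ via Theorem~\ref{thm:DeRanCon}, and then applies Lemma~\ref{lem:14to1-8A} as a black box, with the length and time bounds following by direct substitution. Your added remark that only the $C_1(k_1)$ guarantee of \DetWords\ is consumed (the padding inside the lemma supplies $C_2$ through $C_6$) is consistent with the paper's surrounding technical remarks.
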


\begin{proof}
We construct $\wordset{1\sim 8}$ with the following steps:
\begin{enumerate}
\item Let $\ell_0 = \ell^{\star}(k_1,k_1)$.
\item Construct a binary code $\wordsetfont{B} = \DetWords(n,\ell_0,k_1,k_1)$ by means of Theorem~\ref{thm:DeRanCon}.
\item Construct $\wordset{1\sim 8}$ by means of Lemma~\ref{lem:14to1-8A} using $\wordsetfont{B}$, $k_2$, $k_3$, $k_4$, $k_5$,  $k_6$, $\gamma$, and $d$ as the input.
\end{enumerate}

With the above construction, this theorem follows directly from Theorem~\ref{thm:DeRanCon} and Lemma~\ref{lem:14to1-8A}.
\end{proof}

Lemma~\ref{lem:14to1-8B} below gives our second way to transform a binary code that satisfies $C_1(k_1)$ to a DNA code that satisfies $C_1(k_1)$ through $C_8(d)$.

\begin{lem}\label{lem:14to1-8B} Assume $d \geq 3$.
\begin{enumerate}
\item \label{lem:14to1-8B:conversion}
Let $\wordsetfont{B}_0$ be a  code of $n$ distinct binary words of equal length $\ell_0$ that satisfies $C_1(k_1)$.
Given $\wordsetfont{B}_0$,  $k_2$, $k_3$, $k_4$, $k_5$, $k_6$, $\gamma$, and $d$ as the input,
we can deterministically construct a code $\wordset{1\sim 8}$ of $n$ distinct DNA words of equal length that
satisfies $C_1(k_1)$, $C_2(k_2)$, $C_3(k_3)$,  $C_4(k_4)$, $C_5(k_5)$, $C_6(k_6)$, $C_7(\gamma)$, and $C_8(d)$.
\item \label{lem:14to1-8B:length}
The length of the words in $\wordset{1\sim 8}$ is $\frac{d}{d - 1}\ell_0 + \frac{d}{d+1}2 \max\{k_2, k_3, k_4, k_5,k_6\}  + O(d)$.

\item \label{lem:14to1-8B:time}
The construction takes $O(n(\ell_0 +\max\{k_2,k_3,k_4,k_5,k_6\}))$ time.
\end{enumerate}
\end{lem}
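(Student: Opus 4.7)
The plan is to extend Lemma~\ref{lem:14to1-8A} to arbitrary $\gamma$ by replacing its interval-partitioned DNA substitution (which forces $\frac{1}{d+1}\le\gamma\le\frac{d}{d+1}$) with a \BreakRuns-based substitution in the spirit of Lemma~\ref{lem:1to12378}, while still using a padding trick of the kind in Lemma~\ref{lem:14to1-8A} to satisfy the reverse-complement and shifting constraints. Let $k=\max\{k_2,k_3,k_4,k_5,k_6\}$. The key is to apply the $\frac{d}{d-1}$ blow-up of \BreakRuns to $\ell_0$ only, while the padding for the remaining constraints contributes only $\frac{d}{d+1}2k+O(d)$ to the length rather than the naive $\frac{d}{d-1}2k$.

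The construction I propose has three stages. First, preprocess $\wordsetfont{B}_0$ by appending a $0$ when $\ell_0$ is odd and then apply Algorithm~\ref{alg:breakruns} to each word to obtain a binary code $\wordsetfont{B}_1$ of length $\frac{d}{d-1}\ell_0+O(1)$ that retains $C_1(k_1)$ and additionally satisfies $C_8(d)$ as a binary constraint; preservation of $C_1(k_1)$ follows because each insertion of $\hat\alpha_i$, $\hat\beta_i$, or $\hat\Delta$ contributes to the Hamming distance between two words exactly when the corresponding original position does. Second, append a common prefix $P_L$ and suffix $P_R$ of length $m=\frac{d}{d+1}k+O(d)$ to each word in $\wordsetfont{B}_1$, chosen so that (i) no run of length greater than $d$ appears inside the padding or across its junctions with $\wordsetfont{B}_1$, and (ii) for every alignment index $i$ the Hamming distance contributed by the padded regions to the reverse-complement comparisons arising in $C_2(k_2),C_3(k_3),C_4(k_4),C_5(k_5),C_6(k_6)$ is at least the length-adjusted $k-(\ell-i)$. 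Third, select $\lceil\gamma\ell\rceil$ evenly distributed positions among $1,\ldots,\ell$ and substitute $0\to\mathrm{C},1\to\mathrm{G}$ at those positions and $0\to\mathrm{A},1\to\mathrm{T}$ elsewhere, following the scheme of Lemma~\ref{lem:1to12378}, producing the DNA code $\wordset{1\sim 8}$.

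Verifying the constraints is then a case analysis combining the proofs of Lemmas~\ref{lem:14to1-8A} and~\ref{lem:1to12378}. Constraints $C_1(k_1)$ and $C_4(k_4)$ pass through because none of the three stages can decrease the Hamming distance between two aligned non-RC'd words, invoking Lemma~\ref{lem:C1_C6}(\ref{lem:C1_C6:14}) where convenient. Constraints $C_2(k_2),C_3(k_3),C_5(k_5),C_6(k_6)$ follow from property~(ii) of the padding together with the fact that the DNA substitution only increases RC-type Hamming distances. Constraint $C_7(\gamma)$ follows by counting the chosen positions. Constraint $C_8(d)$ follows from \BreakRuns, from property~(i) of the padding, and from the observation that mapping distinct binary characters to disjoint DNA bases cannot lengthen a binary run. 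The length is $\frac{d}{d-1}\ell_0+2m+O(1)=\frac{d}{d-1}\ell_0+\frac{2d}{d+1}k+O(d)$ and the running time is $O(n(\ell_0+k))$, as claimed.

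The main obstacle is the padding design in the second stage, namely constructing short binary patterns $P_L$ and $P_R$ of length only $m=\frac{d}{d+1}k+O(d)$ per side whose RC alignments against each other yield Hamming distance at least $k$ under every prefix/suffix shift required by $C_5(k_5)$ and $C_6(k_6)$, while respecting $C_8(d)$ at the junctions with $\wordsetfont{B}_1$. A natural candidate is a pair of periodic patterns of period $d+1$ containing a single $0$ per period, for instance $P_L=(1^d 0)^{\lceil k/(d+1)\rceil}$ paired with a symmetric RC-partner on the right, together with $O(d)$ bits of boundary correction so that the junction characters and the total padding length line up; verifying for such a pattern that every shifted RC comparison receives the length-adjusted distance $k-(\ell-i)$ is the technical heart of the argument, and the hypothesis $d\ge 3$ enters here because for $d=2$ there is too little room between the forced $0$'s to separate $P_L$ from its RC image.
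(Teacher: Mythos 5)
Your overall architecture matches the paper's: break runs inside the body, pad both ends with a run-respecting periodic binary pattern to supply the reverse-complement and shifting constraints, then perform a GC substitution (the paper inserts one complementary bit after every $d-1$ body characters and uses a contiguous leftmost GC block rather than \BreakRuns\ and evenly spread positions, but these variations are interchangeable). The gap is exactly where you locate it, and it is not closable with the parameters you propose. Since $\wordsetfont{B}_0$ guarantees nothing about reverse-complement distances, $C_2(k_2)$ and $C_3(k_3)$ must be supplied entirely by the padding, via $H(P_L,P_R^{RC})+H(P_R,P_L^{RC})=2H(P_L,P_R^{RC})$. First, if ``symmetric RC-partner'' means $P_R=P_L^{RC}$, this quantity is $0$ and the construction fails outright; the paper appends the \emph{identical} pattern at both ends. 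Second, with $P_L=P_R=(1^d0)^r$, the leftmost $r(d+1)$ characters of $X^{RC}$ are $(10^d)^r$, and comparing one period $1^d0$ against $10^d$ gives only $d-1$ mismatches (the first and last characters of each period agree), so $r=\lceil k/(d+1)\rceil$ periods yield only about $\frac{d-1}{d+1}k<k$ mismatches per side. You would need $r=\lceil k/(d-1)\rceil$ periods, i.e., padding of length about $\frac{d+1}{d-1}k$ per side, not $\frac{d}{d+1}k$. (The paper instead appends $\lceil k/(d-2)\rceil$ copies of $1^{d-1}0$, each contributing $d-2$ mismatches, for $\frac{d}{d-2}k+O(d)$ per side; this is where $d\ge 3$ enters, and it is the figure $\frac{d}{d-2}2k$ recorded in Table~1 --- the $\frac{d}{d+1}$ in the lemma statement appears to be a typo that no construction of this type meets.)

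A second, independent error: you assert that $C_4(k_4)$ ``passes through'' because no stage decreases Hamming distances. It cannot pass through, because $\wordsetfont{B}_0$ satisfies only $C_1(k_1)$; the prefix-versus-suffix comparisons of $C_4$ carry no guarantee in the body and must themselves be supplied by the padding, which is why $k_4$ is folded into $k$. Together with the deferred verification of the shifted reverse-complement comparisons for $C_5$ and $C_6$, this means the technical heart of the proof --- checking that every length-adjusted bound $k-(\ell-i)$ is met by the padding under every admissible shift --- is absent, and the one concrete candidate you offer is quantitatively insufficient.
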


\begin{proof}
Let $k = \max\{k_2,k_3,k_4,k_5,k_6\}$. We construct $\wordset{1\sim 8}$ with the following steps:

\begin{enumerate}
\item \label{14to1-8B:step1}
For $\wordsetfont{B}_0$, partition each word into $\lceil \frac{\ell_0}{d-1}\rceil$
sub-words of length $d-1$ except that the rightmost sub-word may be shorter. For each sub-word $Z$, insert a  bit at the right end of $Z$ that is complementary to the original rightmost bit of $Z$.
Let $\wordsetfont{B}_1$ be the set of the new binary words. Let $\ell_1$ be the equal length of the new words; i.e., $\ell_1 = \ell_0 + \lceil
\frac{\ell_0}{d-1}\rceil = \frac{d}{d - 1}\ell_0 + O(1)$.

\item  \label{14to1-8B:step2}
For $\wordsetfont{B}_1$, append one copy of 1 at the left end  of each word and one copy of 0 at the right end of each word.
Let $\wordsetfont{B}_2$ be the set of the new binary words. Let $\ell_2$ be the equal length of the new words; i.e., $\ell_2 = \ell_1 + 2 = \frac{d}{d - 1}\ell_0 + O(1)$.

\item \label{14to1-8B:step3}
For $\wordsetfont{B}_2$, append $\lceil\frac{k}{d-2}\rceil$ copies of
length-$d$ binary word  $11 \cdots 110$ at each of the left and right ends of each word.
Let $\wordsetfont{B}_3$ be the set of the new binary words. Let $\ell_3$ be the equal length of the new words; i.e., $\ell_3 = \ell_2 + 2 \lceil\frac{k}{d-2}\rceil d = \frac{d}{d - 1}\ell_0 + \frac{d}{d+1}2k+ O(d)$.

\item \label{14to1-8B:step4}
For $\wordsetfont{B}_3$, for the leftmost $\lceil \gamma\ell_3\rceil$ characters in each word, change every
0 (respectively, 1) to C (respectively, G), and for the remaining $\ell_3 - \lceil \gamma\ell_3\rceil$ characters in each word, change every
0 (respectively, 1) to A (respectively, T).
Let $\wordset{1\sim 8}$ be the set of the resulting DNA words. The new worlds have equal length $\ell_3$.
\end{enumerate}

We now prove the three statements of this lemma.
First of all, Statements \ref{lem:14to1-8B:length} and \ref{lem:14to1-8B:time} clearly hold.
To prove Statement~\ref{lem:14to1-8B:conversion}, observe that the above construction steps are
deterministic and $\wordset{1 \sim 8}$ consists of $n$ distinct DNA words of equal length $\ell_3$. We verify $C_1(k_1)$ through $C_8(d)$ as follows.
\begin{itemize}
\item
Since $\wordsetfont{B}_0$ satisfies $C_1(k_1)$, the codes $\wordsetfont{B}_1$, $\wordsetfont{B}_2$, $\wordsetfont{B}_3$,  and $\wordset{1 \sim 8}$ all satisfy $C_1(k_1)$.
\item
That  $\wordsetfont{B}_3$ satisfies $C_2(k_2)$ through $C_6(k_6)$ follows from  Step \ref{14to1-8B:step3} and an analysis similar to the proof of Lemma~\ref{lem:14to123456}. Consequently, $\wordset{1 \sim 8}$  also satisfies $C_2(k_2)$ through $C_6(k_6)$.
\item
From Step~\ref{14to1-8B:step4},  $\wordset{1 \sim 8}$  also satisfies $C_7(\gamma)$.
\item
From Steps  \ref{14to1-8B:step1} through  \ref{14to1-8B:step3}, $\wordsetfont{B}_3$ satisfies $C_8(d)$. Consequently, $\wordset{1 \sim 8}$  satisfies $C_8(d)$ as well.
\end{itemize}
\end{proof}

Theorem~\ref{thm:1-8-B} below uses Theorem~\ref{thm:DeRanCon} and Lemma~\ref{lem:14to1-8B} to give our second way to construct a DNA code that satisfies $C_1(k_1)$ through $C_8(d)$.

\begin{thm}\label{thm:1-8-B}\
Assume $d \geq 3$.
\begin{enumerate}
\item \label{thm:1-8-B:conversion}
Given $n \geq 2$, $k_1 \geq 1$, $k_2$, $k_3$, $k_4$, $k_5$, $k_6$, $\gamma$, and $d$ as the input,
we can deterministically construct a code $\wordset{1\sim 8}$ of $n$ distinct DNA words of equal length that
satisfies $C_1(k_1)$, $C_2(k_2)$, $C_3(k_3)$, $C_4(k_4)$, $C_5(k_5)$, $C_6(k_6)$, $C_7(\gamma)$, and $C_8(d)$.
\item \label{thm:1-8-B:length}
The length of the words in $\wordset{1\sim 8}$ is $\frac{d}{d - 1}\ell^{\star}(k_1,k_1)  + \frac{d}{d+1}2 \max\{k_2, k_3, k_4, k_5,k_6\}  + O(d)$.

\item \label{thm:1-8-B:time}
The construction takes $T_{1,4}(n,\ell^{\star}(k_1,k_1), k_1, k_1) +O( n (\log n + \max\{ k_1, k_2,  k_3, k_4, k_5, k_6\})$ time, where $T_{1,4}(n,\ell^{\star}(k_1,k_1), k_1, k_1)$ is the running time of the call $\DetWords(n,\ell^{\star}(k_1,k_1),k_1,k_1)$.
\end{enumerate}
\end{thm}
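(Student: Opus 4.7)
The plan is to follow exactly the template used in the proof of Theorem~\ref{thm:1-8-A}, but invoking Lemma~\ref{lem:14to1-8B} in place of Lemma~\ref{lem:14to1-8A}. Since Lemma~\ref{lem:14to1-8B} takes any binary code satisfying only $C_1(k_1)$ and produces a DNA code satisfying $C_1(k_1)$ through $C_8(d)$, all we need from the derandomization step is a seed binary code satisfying $C_1(k_1)$; the length parameter $\ell^{\star}(k_1,k_1)$ and its associated derandomization guarantee from Theorem~\ref{thm:DeRanCon} give us exactly that, without having to pay separately for $C_4(k_4)$ in the seed.

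More concretely, the construction would proceed in three steps. First, set $\ell_0 = \ell^{\star}(k_1,k_1)$. Second, call $\DetWords(n, \ell_0, k_1, k_1)$ and invoke Theorem~\ref{thm:DeRanCon} (with $k_4$ taken to be $k_1$) to obtain a binary code $\wordsetfont{B}_0$ of $n$ distinct words of length $\ell_0$ that satisfies $C_1(k_1)$. Third, feed $\wordsetfont{B}_0$ together with $k_2,k_3,k_4,k_5,k_6,\gamma$, and $d$ into Lemma~\ref{lem:14to1-8B} (whose hypothesis $d\geq 3$ is part of the theorem's hypotheses) to obtain $\wordset{1\sim 8}$.

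Statement~\ref{thm:1-8-B:conversion} is then immediate from combining the guarantees of Theorem~\ref{thm:DeRanCon} and Lemma~\ref{lem:14to1-8B}(\ref{lem:14to1-8B:conversion}). Statement~\ref{thm:1-8-B:length} follows by plugging $\ell_0 = \ell^{\star}(k_1,k_1)$ into the length bound of Lemma~\ref{lem:14to1-8B}(\ref{lem:14to1-8B:length}), giving $\frac{d}{d-1}\ell^{\star}(k_1,k_1) + \frac{d}{d+1}\cdot 2\max\{k_2,k_3,k_4,k_5,k_6\} + O(d)$. Statement~\ref{thm:1-8-B:time} follows by summing the cost $T_{1,4}(n,\ell^{\star}(k_1,k_1),k_1,k_1)$ of the $\DetWords$ call with the post-processing cost $O(n(\ell_0 + \max\{k_2,\ldots,k_6\}))$ from Lemma~\ref{lem:14to1-8B}(\ref{lem:14to1-8B:time}), which, since $\ell_0 = O(\log n + k_1)$, collapses into the stated $T_{1,4}(\cdot) + O(n(\log n + \max\{k_1,\ldots,k_6\}))$.

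There is essentially no substantive obstacle here: the theorem is a direct packaging of the existing Theorem~\ref{thm:DeRanCon} with the newly established Lemma~\ref{lem:14to1-8B}. The one small bookkeeping point worth double-checking is that invoking $\DetWords$ with $k_4 = k_1$ is legitimate (so that the seed indeed satisfies $C_1(k_1)$ as required by Lemma~\ref{lem:14to1-8B}) and that the resulting length bound matches the expression in Statement~\ref{thm:1-8-B:length}; both follow immediately from the definitions.
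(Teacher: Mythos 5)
Your proposal matches the paper's proof exactly: set $\ell_0 = \ell^{\star}(k_1,k_1)$, obtain the seed binary code via $\DetWords(n,\ell_0,k_1,k_1)$ and Theorem~\ref{thm:DeRanCon}, and then apply Lemma~\ref{lem:14to1-8B}, with the three statements following directly from the corresponding statements of that lemma and theorem. The approach and bookkeeping are the same as in the paper.
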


\begin{proof}
We construct $\wordset{1 \sim 8}$ with the following steps:
\begin{enumerate}
\item Let $\ell_0 = \ell^{\star}(k_1,k_1)$.
\item Construct a binary code $\wordsetfont{B}_0 = \DetWords(n,\ell_0,k_1,k_1)$ by means of Theorem~\ref{thm:DeRanCon}.
\item Construct $\wordset{1\sim 8}$ by means of Lemma~\ref{lem:14to1-8B} using $\wordsetfont{B}_0$, $k_2$, $k_3$, $k_4$, $k_5$,  $k_6$, $\gamma$, and $d$ as the input.
\end{enumerate}

With the above construction, this theorem follows directly from Theorem~\ref{thm:DeRanCon} and Lemma~\ref{lem:14to1-8B}.
\end{proof}

\paragraph{Technical Remarks.} As with Theorem~\ref{thm:12378}(\ref{thm:1to12378:length}), we can reduce the word lengths of $\wordset{1\sim 8}$ in Theorems \ref{thm:1-8-A}(\ref{thm:1-8-A:length}) and \ref{thm:1-8-B}(\ref{thm:1-8-B:length}) by simplifying Algorithm \DetWords\ to satisfy only $C_1(k_1)$ rather than both $C_1(k_1)$ and $C_4(k_1)$.

Furthermore, for the word length formulas in Lemma \ref{lem:14to1-8B}(\ref{lem:14to1-8B:length}) and  Theorem \ref{thm:1-8-B}(\ref{thm:1-8-B:length}),  the left and middle terms in each formula are decreasing functions of $d$ while the right  term is an increasing function of $d$. By Lemma~\ref{lem:C1_C6}(\ref{lem:C1_C6:C8}), we can first computationally find an integer $d'$ such that $d' \geq d$ and $d'$ minimizes the value of the respective length formula and then apply  Lemma \ref{lem:14to1-8B} or Theorem \ref{thm:1-8-B} to this $d'$ instead of $d$ to compute  $\wordset{1\sim 8}$. Analytically, for example, when
 $d \ge \sqrt{\frac{2 \ell}{k}} + 1$,  a reasonable initial approximation for $d'$ would be $\sqrt{\frac{2 \ell}{k}} + 1$, where $\ell = \ell_0$
for Lemma \ref{lem:14to1-8B}(\ref{lem:14to1-8B:length}) and $\ell = \ell^{\star}(k_1,k_1)$ for Theorem \ref{thm:1-8-B}(\ref{thm:1-8-B:length}).

\subsection{Designing Words for Constraints $C_1$ through $C_6$, and $C_9$}
\label{subsec:1-69}

We now show how to construct DNA words that satisfy the free energy constraint $C_9(\sigma)$.

Following the approach of Breslauer et~al.~\cite{Breslauer:1986:PDD}, the {\em free energy} of a DNA word $X = x_1
x_2 \ldots x_{\ell}$ is approximated by the formula
\[
\textrm{FE}(X) = \textrm{correction factor} + \sum^{\ell-1}_{i=1} \Gamma_{x_i,x_{i+1}},
\]
where $\Gamma_{x,y}$ is an integer denoting the {\em pairwise free energy} between base $x$ and base $y$.

Building on the work of Kao et~al.~\cite{Kao:2009:RFD}, for
simplicity and without loss of generality, we denote the free energy of $X$ to be
\[
\textrm{FE}(X) = \sum^{\ell-1}_{i=1} \Gamma_{x_i,x_{i+1}},
\]
with respect to a given {\em pairwise energy function} $\Gamma$. In other words, the correction factor is set to 0.
\begin{itemize}
\item
Let $\Gamma_{\max}$ and $\Gamma_{\min}$ be the maximum and the minimum of the 16 entries of $\Gamma$,
respectively.
\item
Let  $D=\Gamma_{\max}-\Gamma_{\min}$.
\end{itemize}

Theorem~\ref{thm:16to9} below gives a  way to transform a DNA code that satisfies $C_1(k_1)$ through $C_6(k_6)$ to a DNA code that satisfies $C_1(k_1)$ through $C_6(k_6)$ and $C_9(4D+\Gamma_{\max})$.

\begin{thm}[Kao, Sanghi, and Schweller \cite{Kao:2009:RFD}]
\label{thm:16to9}\
\begin{enumerate}
\item \label{thm:16to9:conversion}
Let $\wordsetfont{B}_0$ be a  code of $n$ distinct DNA words of equal length $\ell_0$ that satisfies $C_1(k_1)$, $C_2(k_2)$, $C_3(k_3)$, $C_4(k_4)$,
$C_5(k_5)$, and $C_6(k_6)$.
There is a deterministic algorithm that
takes  $\wordsetfont{B}_0$ and $\Gamma$ as the input and constructs a code $\wordset{1\sim 6,9}$ of $n$ distinct DNA words of equal length that
satisfies $C_9(4D+ \Gamma_{\max})$ in addition to satisfying $C_1(k_1)$ through $C_6(k_6)$.
\item \label{thm:16to9:length}
The length of the words in $\wordset{1\sim 6,9}$ is $2\ell_0$.
\item \label{thm:16to9:time}
The construction takes $O(\min\{n\ell_0\log\ell_0, \ell_0^{1.5} \log^{0.5}\ell_0 + n\ell_0\})$ time.
\end{enumerate}
\end{thm}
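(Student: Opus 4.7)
Since this theorem is attributed to Kao, Sanghi, and Schweller, my plan is to recapitulate the shape of their construction rather than invent a new one. The overall strategy is to extend each word $X_i \in \wordsetfont{B}_0$ into a word $X_i'$ of length $2\ell_0$ by appending a length-$\ell_0$ suffix $Y_i$, chosen so that (i) all free energies $\mathrm{FE}(X_i')$ lie within a window of width at most $4D+\Gamma_{\max}$ around a common target $T$, and (ii) the Hamming-based constraints $C_1(k_1)$ through $C_6(k_6)$ are preserved under the extension.

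First I would fix $T$ near the midpoint of the achievable free-energy range on length-$2\ell_0$ DNA words, so that for every $X_i$ there is some length-$\ell_0$ suffix placing $\mathrm{FE}(X_i \cdot Y_i)$ within $2D+\Gamma_{\max}/2$ of $T$; the $\Gamma_{\max}$ slack absorbs the unknown contribution of the junction dinucleotide between $X_i$ and $Y_i$. Next I would maintain a catalog of calibration suffixes whose free energies densely cover the range with spacing at most $D$. A natural such catalog consists of words of the form $a^{j} b^{\ell_0 - j}$, where $(a,a)$ realizes $\Gamma_{\max}$ and $(b,b)$ realizes $\Gamma_{\min}$, so that as $j$ varies from $0$ to $\ell_0$ the free energy moves in increments of essentially $D$. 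For each $i$, a binary search over the catalog then locates a $Y_i$ whose free energy puts $\mathrm{FE}(X_i')$ into the target window.

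Verification of the Hamming constraints is then a direct calculation. Constraints $C_1(k_1)$ and $C_4(k_4)$ hold immediately because the length-$\ell_0$ prefix of each $X_i'$ coincides with $X_i$, so Hamming distances on the extended words dominate the original ones. For $C_2, C_3, C_5, C_6$ one splits distances such as $H(X_i', (X_j')^{RC}) = H(X_i, Y_j^{RC}) + H(Y_i, X_j^{RC})$ and argues that the structured catalog, possibly combined with an outer padding layer in the spirit of Lemma \ref{lem:14to123456}, forces each summand to be sufficiently large. Then $C_9(4D+\Gamma_{\max})$ follows directly from the window width. The $O(n\ell_0 \log \ell_0)$ time bound comes from a per-word binary search through the catalog, whereas the sharper $O(\ell_0^{1.5} \log^{0.5} \ell_0 + n\ell_0)$ bound comes from a batched preprocessing that amortizes the catalog construction and then answers each of the $n$ lookups in $O(\ell_0)$ time.

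The main obstacle is reconciling the two objectives simultaneously: the catalog must be rich enough to span the whole free-energy range at $D$-resolution, yet structured enough that appending its entries does not collapse any Hamming distance under reverse complementation. This is exactly the technical tension that Kao et al.\ resolve, and the proof would follow their original design.
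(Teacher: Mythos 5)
First, note that the paper itself offers no proof of this theorem: it is imported verbatim from Kao, Sanghi, and Schweller \cite{Kao:2009:RFD} and used as a black box (see the proof of Theorem \ref{thm:1-69}), so there is no in-paper argument to compare yours against. Judged on its own terms, your reconstruction contains a genuine gap, and it is exactly the one you flag in your last paragraph. The decomposition you write down, $H(X_i',(X_j')^{RC})=H(X_i,Y_j^{RC})+H(Y_i,X_j^{RC})$, makes the difficulty visible: neither summand contains the quantity $H(X_i,X_j^{RC})$ that the hypothesis on $\wordsetfont{B}_0$ controls, so the assumption that $\wordsetfont{B}_0$ satisfies $C_2(k_2)$ through $C_6(k_6)$ gives you no leverage at all on the extended words. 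Worse, your calibration suffixes are deliberately low-entropy words of the form $a^{j}b^{\ell_0-j}$, whose reverse complements are again words of that form, and nothing prevents some $X_i$ (or another suffix $Y_i$) from agreeing with such a word in almost every position. So ``the structured catalog forces each summand to be sufficiently large'' is not merely unproved; without an additional mechanism it can simply be false. Since supplying that mechanism --- and verifying that it coexists with the free-energy calibration --- is the entire technical content of the theorem, deferring it to ``the original design'' leaves the proof incomplete rather than merely terse.

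The interaction between the fix and the energy accounting is itself nontrivial, which is why it cannot be waved through. If you add a C-padding layer in the spirit of Lemma \ref{lem:14to123456} after appending the suffixes, you must check both that the pad still dominates the relevant prefix/suffix comparisons once the calibration block sits at the other end of the word, and that the pad's own contribution to $\mathrm{FE}$ (a uniform shift plus a junction term that varies from word to word by up to $D$) does not push the spread past $4D+\Gamma_{\max}$; if you pad first and calibrate second, you must re-derive the target window. Your budget of $2D+\Gamma_{\max}/2$ around the target, the claimed $D$-resolution of the catalog, and the two branches of the running-time bound are all asserted by analogy rather than derived, so even the quantitative claims in Statements \ref{thm:16to9:length} and \ref{thm:16to9:time} are not actually established by the proposal. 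As written, this is an outline of where a proof would go, not a proof.
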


Theorem~\ref{thm:1-69} below uses Theorems~\ref{thm:16to9} and \ref{thm:1-6} to give a way to construct a DNA code that satisfies $C_1(k_1)$ through $C_6(k_6)$ and $C_9(4D+\Gamma_{\max})$..

\begin{thm}\label{thm:1-69}\
\begin{enumerate}
\item \label{thm:1-69:conversion}
Given $n \geq 2$, $k_1 \geq 1$, $k_2$, $k_3$, $k_4$, $k_5$,  $k_6$, and $\Gamma$ as the input,
we can deterministically construct a code $\wordset{1\sim 6,9}$ of $n$ distinct DNA words of equal length that
satisfies $C_1(k_1)$, $C_2(k_2)$, $C_3(k_3)$, $C_4(k_4)$, $C_5(k_5)$,  $C_6(k_6)$, and $C_9(4D + \Gamma_{\max})$.
\item \label{thm:1-69:length}
The length of the words in $\wordset{1\sim 6,9}$ is $\ell_0 = 2(\ell^{\star}(k_1,k_4) + \max\{k_2, k_3, k_5, k_6\})$.
\item \label{thm:1-69:time}
The construction takes $T_{1,4}(n,\ell^{\star}(k_1,k_4), k_1, k_4) + O(\min\{n\ell_0\log\ell_0, \ell_0^{1.5} \log^{0.5}\ell_0 + n\ell_0\})$
time, where $T_{1,4}(n,\ell^{\star}(k_1,k_4), k_1, k_4)$ is the running time of the call $\DetWords(n,\ell^{\star}(k_1,k_4),k_1,k_4)$.
\end{enumerate}
\end{thm}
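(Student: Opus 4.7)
The plan is a straightforward two-stage composition: first use Theorem \ref{thm:1-6} to build a DNA code satisfying $C_1$ through $C_6$, and then feed the resulting code into Theorem \ref{thm:16to9} to augment it with the free energy constraint $C_9(4D+\Gamma_{\max})$, which is guaranteed by the Kao--Sanghi--Schweller transformation.

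More concretely, given the input $n$, $k_1,\dots,k_6$, and $\Gamma$, I would first invoke Theorem \ref{thm:1-6} to obtain a code $\wordsetfont{B}_0$ of $n$ distinct DNA words of equal length
\[
\ell_0' = \ell^{\star}(k_1,k_4) + \max\{k_2,k_3,k_5,k_6\}
\]
that satisfies $C_1(k_1), C_2(k_2), C_3(k_3), C_4(k_4), C_5(k_5)$, and $C_6(k_6)$. Constructing $\wordsetfont{B}_0$ takes $T_{1,4}(n,\ell^{\star}(k_1,k_4),k_1,k_4) + O(n(\log n + \max\{k_1,\dots,k_6\}))$ time by Theorem \ref{thm:1-6}(\ref{thm:14to123456:time}). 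I would then pass $\wordsetfont{B}_0$ and $\Gamma$ to Theorem \ref{thm:16to9}, producing $\wordset{1\sim 6,9}$ of equal length $\ell_0 = 2\ell_0'$ that additionally satisfies $C_9(4D+\Gamma_{\max})$. Crucially, Theorem \ref{thm:16to9} preserves constraints $C_1$ through $C_6$, so the composite code inherits every constraint we need.

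The length claim then reduces to substituting $\ell_0' = \ell^{\star}(k_1,k_4)+\max\{k_2,k_3,k_5,k_6\}$ into the doubling of Theorem \ref{thm:16to9}(\ref{thm:16to9:length}), yielding exactly the $\ell_0$ stated in Theorem \ref{thm:1-69}(\ref{thm:1-69:length}). The time bound follows by adding the two running times: the construction of $\wordsetfont{B}_0$ contributes $T_{1,4}(n,\ell^{\star}(k_1,k_4),k_1,k_4)$ (the $O(n(\log n + \max\{k_i\}))$ additive term is absorbed by the free-energy step), and the free-energy augmentation contributes $O(\min\{n\ell_0\log\ell_0,\ \ell_0^{1.5}\log^{0.5}\ell_0 + n\ell_0\})$ by Theorem \ref{thm:16to9}(\ref{thm:16to9:time}).

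There is no genuine obstacle in this proof; all of the technical effort has been done in Theorems \ref{thm:DeRanCon}, \ref{thm:1-6}, and \ref{thm:16to9}. The only subtlety is making sure that the parameters used in Theorem \ref{thm:1-6} (in particular the choice to invoke $\DetWords$ with both $k_1$ and $k_4$) match those required by Theorem \ref{thm:1-69}, so that the inherited Hamming-type constraints are exactly $C_1(k_1)$ through $C_6(k_6)$ rather than weaker variants. Once this bookkeeping is in place, the theorem follows immediately from the composition.
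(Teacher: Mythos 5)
Your proposal is correct and follows exactly the paper's own proof: construct $\wordsetfont{B}_0$ via Theorem~\ref{thm:1-6} and then apply the Kao--Sanghi--Schweller transformation of Theorem~\ref{thm:16to9}, with the length and time bounds obtained by direct substitution and summation. No further comment is needed.
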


\begin{proof}
We construct $\wordset{1\sim 6,9}$ with the following steps:
\begin{enumerate}
\item Construct a DNA code $\wordsetfont{B}_0$ by means of Theorem~\ref{thm:1-6} using $n$, $k_1$, $k_2$, $k_3$, $k_4$, $k_5$, and $k_6$ as the input.
\item Construct $\wordset{1\sim 6,9}$ by means of Theorem~\ref{thm:16to9} using $\wordsetfont{B}_0$ and $\Gamma$ as the input.
\end{enumerate}
With the above construction, this theorem follows directly from Theorems~\ref{thm:16to9} and \ref{thm:1-6}.
\end{proof}

\section{Further Research} \label{sec:FurtherResearch}

In this paper, we have introduced deterministic polynomial-time algorithms for
constructing $n$ DNA words that satisfy various subsets of the constraints $C_1$ through $C_9$ and have length
within a constant multiplicative factor of the shortest possible word length. However,
no known algorithm can efficiently construct similarly short words that satisfy all nine
constraints. It would be of significance to find efficient algorithms to construct short words that satisfy all  nine constraints.
Furthermore, it would be  of interest to design efficient algorithms to construct short words for other useful constraints. In particular, observe that the constraints $C_1$ through $C_6$ are based on pair-wise relations of words. Conceivably, our derandomization techniques are  applicable to other classes of codes based on $m$-wise relations of words for constant $m$.

\section*{Acknowledgments}
We thank Robert Brijder, Francis Y.~L.~Chin, Robert Schweller, Thomas Zeugmann, and the anonymous
referees of the conference submission and the journal submission for very helpful insights and comments. We thank the editors for the suggestion to remove the word erratum from the title of the paper and the header of the first section. We thank Matthew King for correcting a sign in the formula in the proof of Lemma~\ref{lem:time:countbasecase}.

\bibliographystyle{abbrv}
\bibliography{all-2011-12-12}
\end{document}